\newcommand{\xqedhere}[2]{%
  \rlap{\hbox to#1{\hfil\llap{\ensuremath{#2}}}}}
\begin{document}

\title{Dynamics of Virus and Immune Response in Multi-Epitope Network}


\author{Cameron J. Browne     \and     Hal L. Smith
}


\institute{C.J. Browne \at Mathematics Department, University of Louisiana at Lafayette, Lafayette, LA \\
            \email{cambrowne@louisiana.edu}    \and  H.L. Smith \at School of Mathematical \& Statistical Sciences, Arizona State University      
  }

\maketitle

\begin{abstract}
The host immune response can often efficiently suppress a virus infection, which may lead to selection for immune-resistant viral variants within the host.  For example, during HIV infection, an array of CTL immune response populations recognize specific epitopes (viral proteins) presented on the surface of infected cells to effectively mediate their killing. However HIV can rapidly evolve resistance to CTL attack at different epitopes, inducing a dynamic network of interacting viral and immune response variants.  We consider models for the network of virus and immune response populations, consisting of Lotka-Volterra-like systems of ordinary differential equations.  Stability of feasible equilibria and corresponding uniform persistence of distinct variants are characterized via a Lyapunov function.  We specialize the model to a ``binary sequence'' setting, where for $n$ epitopes there can be $2^n$ distinct viral variants mapped on a hypercube graph.  The dynamics in several cases are analyzed and sharp polychotomies are derived characterizing persistent variants.  In particular, we prove that if the viral fitness costs for gaining resistance to each epitope are equal, then the system of $2^n$ virus strains converges to a ``perfectly nested network'' with less than or equal to $n+1$ persistent virus strains.  Overall, our results suggest that \emph{immunodominance}, i.e. relative strength of immune response to an epitope, is the most important factor determining the persistent network structure.
\keywords{mathematical model \and predator-prey  \and network\and virus dynamics\and immune response\and HIV \and CTL escape \and uniform persistence \and Lyapunov function \and global stability}
\end{abstract}

\section{Introduction}\label{s2}

The dynamics of virus and immune response within a host can be viewed as a complex ecological system.  Both predator-prey and competitive interactions are especially important during a host infection.  The immune response predates on the pathogen, and distinct viral strains compete for a target cell population, while immune response populations compete for the virus since their proliferation occurs upon pathogen recognition.   The immune response can cause significant mortality of the virus, which may lead to selection for immune-resistant viral variants within the host.  For example during HIV infection, an extensive repertoire of CTL immune effectors recognize specific epitopes (viral proteins) presented on the surface of infected cells to effectively mediate their killing, however HIV can rapidly evolve resistance to CTL attack at different epitopes.  The ensuing battle precipitates a dynamic network of interacting viral strains and immune response variants, analogous to an ecosystem of rapidly evolving prey countering attack from a diverse collection of predators.

While the virus-immune interactions may be quite complex, patterns and structure can emerge.  For the cellular immune response, a consistent and reproducible hierarchy of T cell populations organize in response to multiple epitopes of a pathogen, according to their (vertical) \emph{immunodominance}, i.e. relative expansion levels of the responding immune populations within the host \cite{Kloverpris}.  Vertical T cell
immunodominance patterns are highly variable among HIV infected individuals
and change over time, largely due to sequence
variability in the viral ``quasispecies'' \cite{liu2013vertical}.  Rapidly evolving pathogens, such as HIV and HCV, can evade the immune response via mutations at multiple epitopes.  The pattern of epitope mutations, called the \emph{escape pathway}, is of significant interest, and there is some evidence that the viral evolution is predictable \cite{Barton}.  The fitness of an emerging viral mutant strain, along with the strength of the CTL response, certainly affect the selection pressure for a single epitope mutation \cite{Vitaly3}.  However the concurrent interaction of diverse virus and immune response populations necessitate considering the whole system together in order to understand viral escape of multiple epitopes \cite{liu2013vertical}.  In this paper, we introduce and analyze mathematical models for the dynamics of virus and immune response in a network determined by interaction at multiple epitopes.

A large amount of work on modeling within-host virus dynamics has been based on the ``standard'' virus model; an ordinary differential equation system describing the coupled changes in target cells, infected cells, and free virus particles through time in an infected individual \cite{Perelson2}.  The CTL immune response has been included in variations of the standard virus model by considering an immune effector population which kills and is activated by infected cells according to a mass-action (bilinear) rate, although other functional forms for the activation rate have been utilized \cite{browne2016,deBoer}.  Nowak et al. \cite{nowak1996population}, along with other subsequent works \cite{bobko2015singularly,iwasa2004some,souza2011global}, have considered the dynamics of multiple virus strains which are attacked by strain-specific CTL immune response populations (``one-to-one'' virus-immune network).   However the assumption of strain-specific immune response does not correspond to the biological reality that CTLs are specific to epitopes and, in general, multiple epitopes will be shared among virus strains.

Multi-epitope models have been utilized with different datasets cataloging several epitope specific CTL response and viral escape mutations, in order to quantify escape rates and patterns \cite{liu2013vertical,Leviyang,pandit2014reliable}.  Earlier work often considered escape dynamics at epitopes separately, but some recent work has emphasized the concurrent interaction of distinct CTLs with the virus at multiple epitopes.    Ganusov et al. \cite{Vitaly1,Vitaly2} explicitly include multiple CTL clones specific to different epitopes in the standard virus model, and utilize statistical approaches on a linearized version of the model to estimate rates of escape.  Althaus et al. and van Deutekom et al. have also considered multiple epitopes and viral strains in the standard virus model, although results were mostly based on stochastic simulations \cite{Althaus,vanDeutekom}.

Browne \cite{browne2016global} recently analyzed the stability and uniform persistence of a multi-epitope virus-immune model with a \emph{perfectly nested} interaction network. The model setup mirrors a tri-trophic chemostat  ecosystem with a single resource (healthy cells), and a network of consumers (viral strains) and their predators (immune variants).  The perfectly nested network constrains the viral escape pathway so that resistance to multiple epitopes is built sequentially in the order of the immunodominance hierarchy.
  The successive rise of more broadly resistant prey (coming with a fitness cost) and weaker but more generalist predators, in a perfectly nested fashion, is the route to persistence of nested bacteria-phage  communities argued in \cite{jover2013mechanisms,korytowski2015nested}.  The analysis of more complex interaction networks, which allow arbitrary viral escape pathways for building multi-epitope resistance, will be addressed in this paper.

Here we extend the previous work by analyzing a within-host virus model with a general interaction network of multiple variants of virus and immune response. Our results suggest that a diverse viral quasispecies is constructed by resistance mutations at multiple epitopes and the immunodominance hierarchy is the main factor shaping the viral escape pathway.  These notions are supported by observations in HIV infection.  Indeed, the efficacy and breadth of cognate CTL immune responses increase within-host HIV diversity, driving viral evolution so that different combinations of multiple epitope escapes become prevalent in the viral population \cite{pandit2014reliable}.  Also, recent studies have shown that immunodominance hierarchies in HIV are major determinants of viral escape from multiple epitopes \cite{Barton,liu2013vertical}, in particular immunonodominance was found to play a substantially larger role than the viral fitness costs and other factors \cite{liu2013vertical}.
 Understanding the main factors shaping viral escape pathways and immune dynamics is important for design of effective vaccines and immunotherapies.
 
  The outline of this paper is as follows.  In Section \ref{SecModel}, we formulate our general virus-immune network model.  We also introduce the ``binary sequence'' example motivated by HIV and CTL immune response dynamics, where a viral strain is either completely susceptible $(0)$ or has evolved complete resistance $(1)$ to immune attack at a specific epitope, in which for $n$ epitopes, there can be $2^n$ distinct viral variants distinguished by their immune resistance profile.  In Section \ref{Sec3}, we characterize the structure of feasible equilibria in the general model, along with finding a Lyapunov function for stability and corresponding uniform persistence of distinct variants.  Next, in Section \ref{Sec5} we analyze the \emph{binary sequence} example, deriving some graph-theoretic properties of feasible equilibria, and classifying dynamics in several special cases.  In particular, if we constrain the virus-immune response network to be  ``perfectly nested'' (Sec. \ref{NestedSec}), ``strain-specific'' (Sec. \ref{ssSec}), or have $n=2$ epitopes (Sec. \ref{TwoEpitope}), sharp polychotomies characterize persistent variants.  In Section \ref{EqFitSec} we prove that if the viral fitness costs for gaining resistance to each epitope are equal, then the system of $2^n$ virus strains converges to a perfectly nested network with less than or equal to $n+1$ persistent virus strains.

\section{Mathematical model} \label{SecModel}

We consider the following general virus-immune dynamics model, as in Browne \cite{browne2016global}, which includes a population of target cells ($X$), $m$ competing virus strains ($Y_i$ denotes strain $i$ infected cells), and $n$ variants of immune response ($Z_j$):
\begin{align}
\frac{dX}{dt} &= b-cX- X\sum_{i=1}^m \beta_iY_i, \notag \\
\frac{dY_i}{dt} &= \beta_iY_iX-\delta_iY_i-Y_i\sum_{j=1}^m r_{ij} Z_j , \quad i=1,\dots,m \label{ode2}  \\
  \frac{dZ_j}{dt} &= q_jZ_j\sum_{i=1}^n r_{ij}Y_i-  \mu_jZ_j, \notag \quad j=1,\dots,n.
\end{align}

 The function $f(X)=b-cX$ represents the net growth rate of the uninfected cell population.  The parameter $\beta_i$ is the infection rate and $\delta_i$ is the decay rate for infected cells infected with virus strain $i$.  The parameter $\mu_j$ denotes the decay rate of the immune response population $j$.  We assume immune killing and activation rates are mass-action, representative of these events occurring as immune response cells recognize epitopes on the surface of infected cells.  The parameter $r_{ij}$ describes the killing/interaction rate of immune population $Z_j$ on a strain-$i$ infected cell, whereas $q_jr_{ij}$ describes the corresponding activation rate for $Z_j$ (proportional to interaction rate $r_{ij}$).  In the present paper, we assume that virus load (the abundance of virions) is proportional to the amount of (productively) infected cells.  This assumption has frequently been made for HIV since the dynamic of free virions occurs on a much faster time scale than the other variables.  

 The model can be rescaled by introducing the following quantities:
\begin{align*}
x&=\frac{c}{b}X, \quad y_i=\frac{\delta_i}{b}Y_i, \quad \tau=ct, \\
a_{ij}&=\frac{r_{ij}}{\delta_i}, \quad\gamma_i=\frac{\delta_i}{c}, \quad \sigma_j=\frac{\mu_j}{ c}, \\
\mathcal R_i&=\frac{b\beta_i}{c\delta_i},  \quad \rho_j=\frac{\mu_j}{b q_j},
\end{align*}
 The model then becomes:
\begin{align}
\dot x &= 1-x- x\sum_{i=1}^m \mathcal R_i y_i, \notag \\
\dot y_i &= \gamma_i y_i\left(\mathcal R_i x -1 -\sum_{j=1}^n a_{ij}Z_j \right), \quad i=1,\dots,m \label{ode3}  \\
  \dot Z_j &= \frac{\sigma_j}{\rho_j} Z_j\left(\sum_{i=1}^m a_{ij} y_i - \rho_j \right), \quad j=1,\dots,n. \notag
\end{align}

Here $\mathcal R_i$ represents the basic reproduction number of virus strain $i$.  Note that $\rho_j$ represents the reciprocal of the immune response fitness excluding the (rescaled) avidity to each strain $j$.   The $m\times n$ nonnegative matrix $A=\left(a_{ij}\right)$ describes the virus-immune \emph{interaction network}, which determines each immune effector population's avidity to the distinct viral strains.   Each virus strain $i$ (cells infected with strain $i$), $y_i$, has a set of CTLs, $z_j$, that recognize and attack $y_i$.  We call this set the \emph{epitope set of} $y_i$, denoted by $\Lambda_i$, where \begin{align}
\Lambda_i:=\left\{ j\in [1,n] : a_{ij}>0 \right\}, \label{EpSet}
\end{align}
i.e. $j\in \Lambda_i$, if $y_i$ is \emph{not} completely resistant to CTL $Z_j$.

It is not hard to show that solutions to \eqref{ode3} remain non-negative and bounded for all time $t$ \cite{browne2016global}. In Section \ref{Sec3}, we analyze feasible equilibria and their stability in the model \eqref{ode3}.  Note that the general interaction network in \eqref{ode3} allows for \emph{cross-reactivity}, i.e. targeting of multiple epitopes by an immune population $Z_j$.  However in Section \ref{Sec5}, we focus on a special case of the interaction network for epitope specific CTL immune responses, which we introduce below.

\subsection{Example: ``Binary Sequence Case''}\label{BMcase}
Here we specialize system \eqref{ode3} in order to model viral escape from multiple epitopes targeted by specific CTL immune responses, as occurs during HIV infection \cite{Barton,liu2013vertical}.  Suppose there are $n$ viral epitopes, each one recognized by their corresponding specific CTL variant population from the set $Z_1,\dots Z_n$.  To model the viral \emph{escape pathway}, we consider two possible alleles for each epitope: the wild type (0) and the mutated type (1) which has escaped recognition from the cognate immune response.   For each virus strain $y_i$, we associate a binary sequence of length $n$,  $y_i\sim\mathbf i=\left(i_1,i_2,\dots,i_n\right)\in\left\{0,1\right\}^n$, coding the allele type at each epitope.   We assume that each immune response ($Z_j$) targets its specific epitope at the specific rate $a_j$ for virus strains containing the wild-type (allele 0) epitope $j$, whereas $Z_j$ completely loses ability to recognize strains with the mutant (allele 1) epitope $j$, i.e.
\begin{align}
\forall i\in[1,m]: \quad a_{ij}=a_j>0 \ \  \text{if} \ \  j\in\Lambda_i, \label{assumeaj}
\end{align} 
 where $\Lambda_i$ is the strain $i$ (susceptible) \emph{epitope set} defined earlier for model \eqref{ode3} by \eqref{EpSet}  (see Fig. \ref{fig2}).  For example, the wild-type (founder) virus strain, denoted here by $y_w$, is represented by the sequence of all zeroes and epitope set $\Lambda_w=\left\{1,\dots,n\right\}$ since it is susceptible to attack by all immune responses. With assumption \eqref{assumeaj}, we can define an \emph{immune reproduction number} corresponding to each $Z_j$: 
 \begin{align}
 \mathcal I_j:=\frac{a_j}{\rho_j} .\label{IR0}
 \end{align} 
 Furthermore there are $m=2^n$ possible viral mutant strains distinguished by reproduction number $\mathcal R_i$ and epitope set $\Lambda_i$ (or equivalently the binary string $\mathbf i\in\left\{0,1\right\}^n$).   System \eqref{ode3} can be rescaled as:
\begin{align}
\dot x &= 1-x- x\sum_{i=1}^m \mathcal R_i y_i, \quad \dot y_i = \gamma_i y_i\left(\mathcal R_i x -1 -\sum_{j\in \Lambda_i}  z_j \right), \quad  \dot z_j = \frac{\sigma_j}{s_j} z_j\left(\sum_{i: j\in \Lambda_i} y_i -s_j \right), \label{odeS}
\end{align}
where $z_j=a_j Z_j$, $s_j=1/\mathcal I_j$, $i=1,\dots,m\leq 2^n$ and $j=1,\dots,n$.

 \begin{figure}[t]
\subfigure[][]{\label{fig2a}\includegraphics[width=9cm,height=3.5cm]{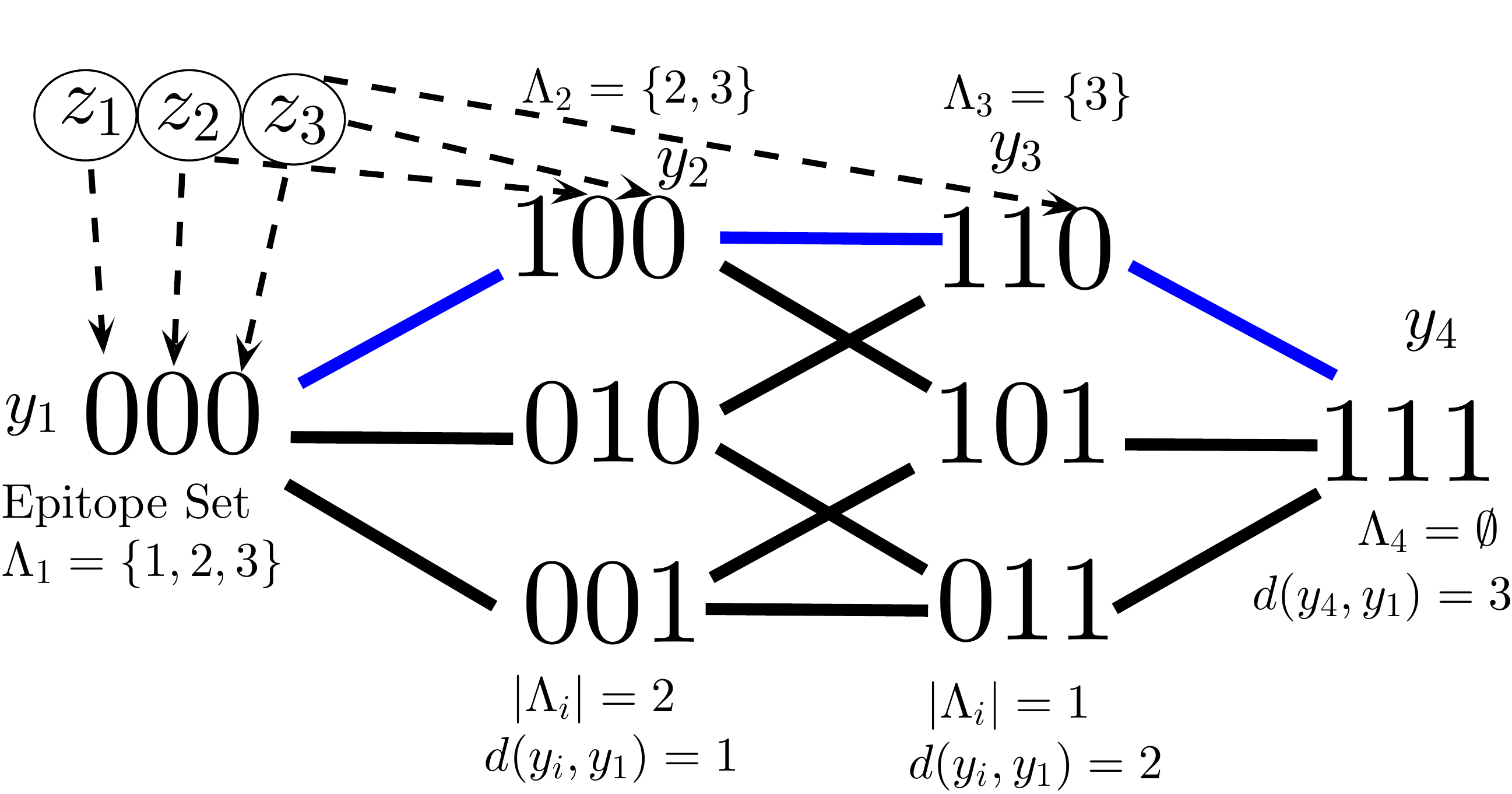}} \qquad
\subfigure[][]{\label{fig2b}\includegraphics[width=3.2cm,height=3.2cm]{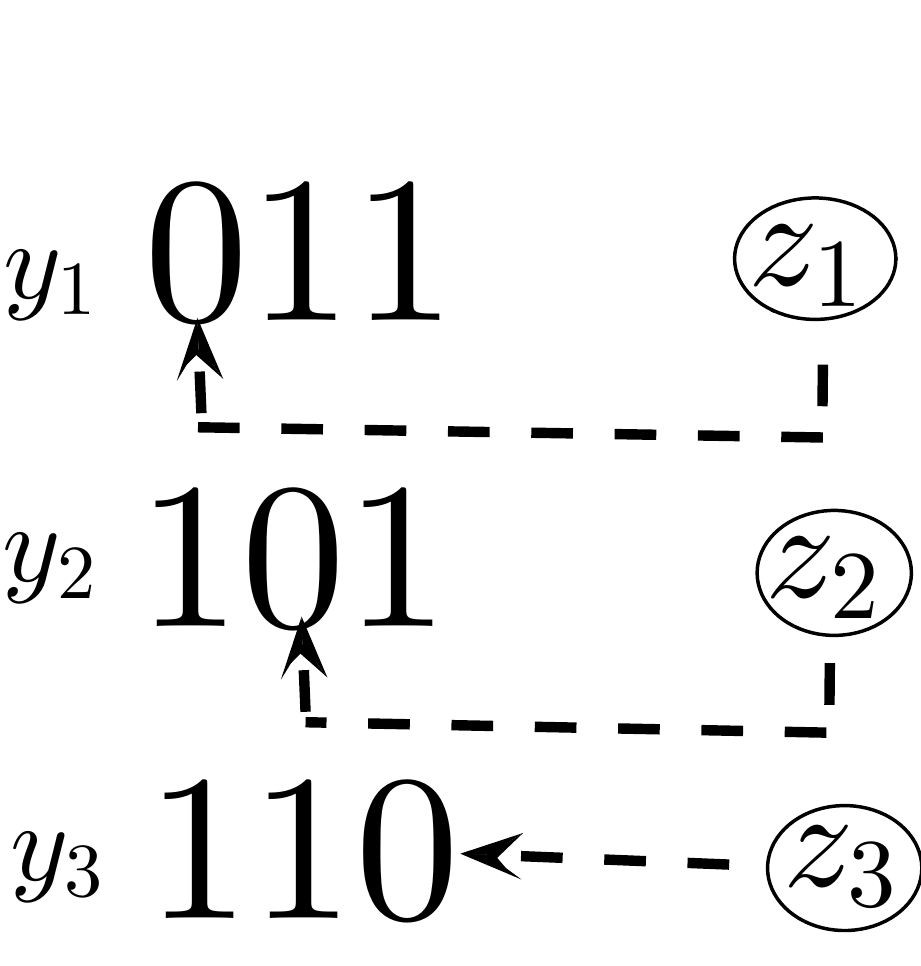}} \\
\subfigure[][]{\label{fig2c}\includegraphics[width=7.6cm,height=2.5cm]{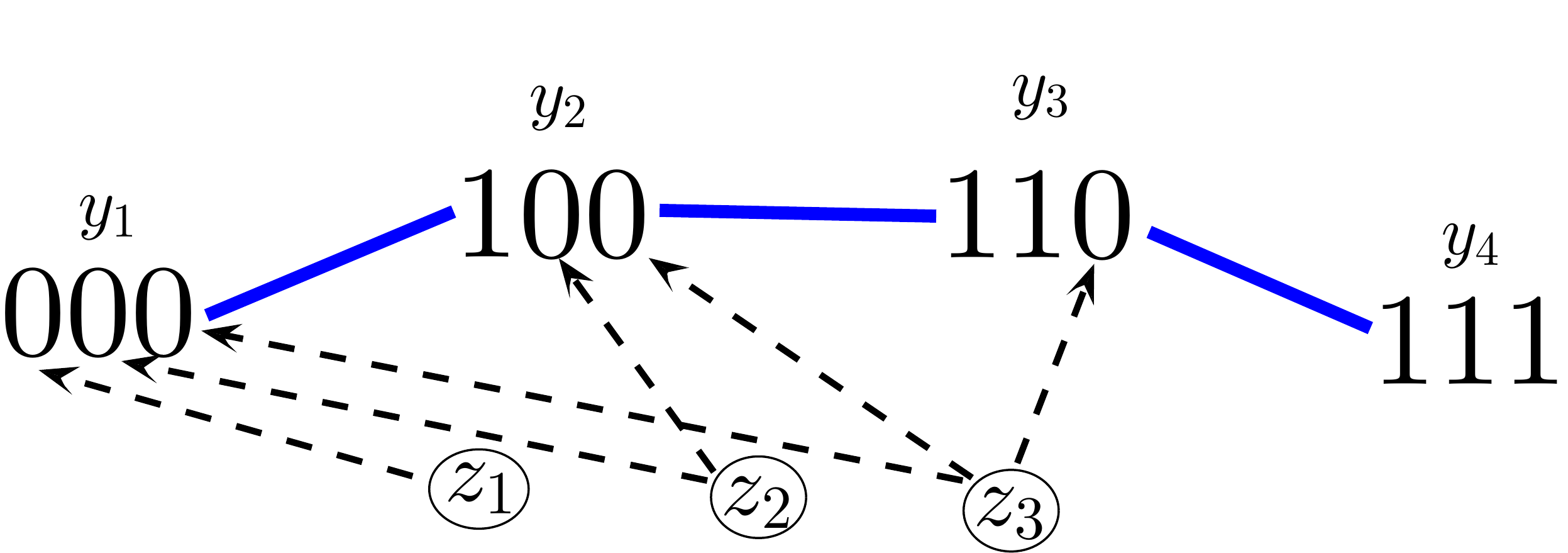}}
\qquad
\subfigure[][]{\label{fig2d}\includegraphics[width=5.6cm,height=2.5cm]{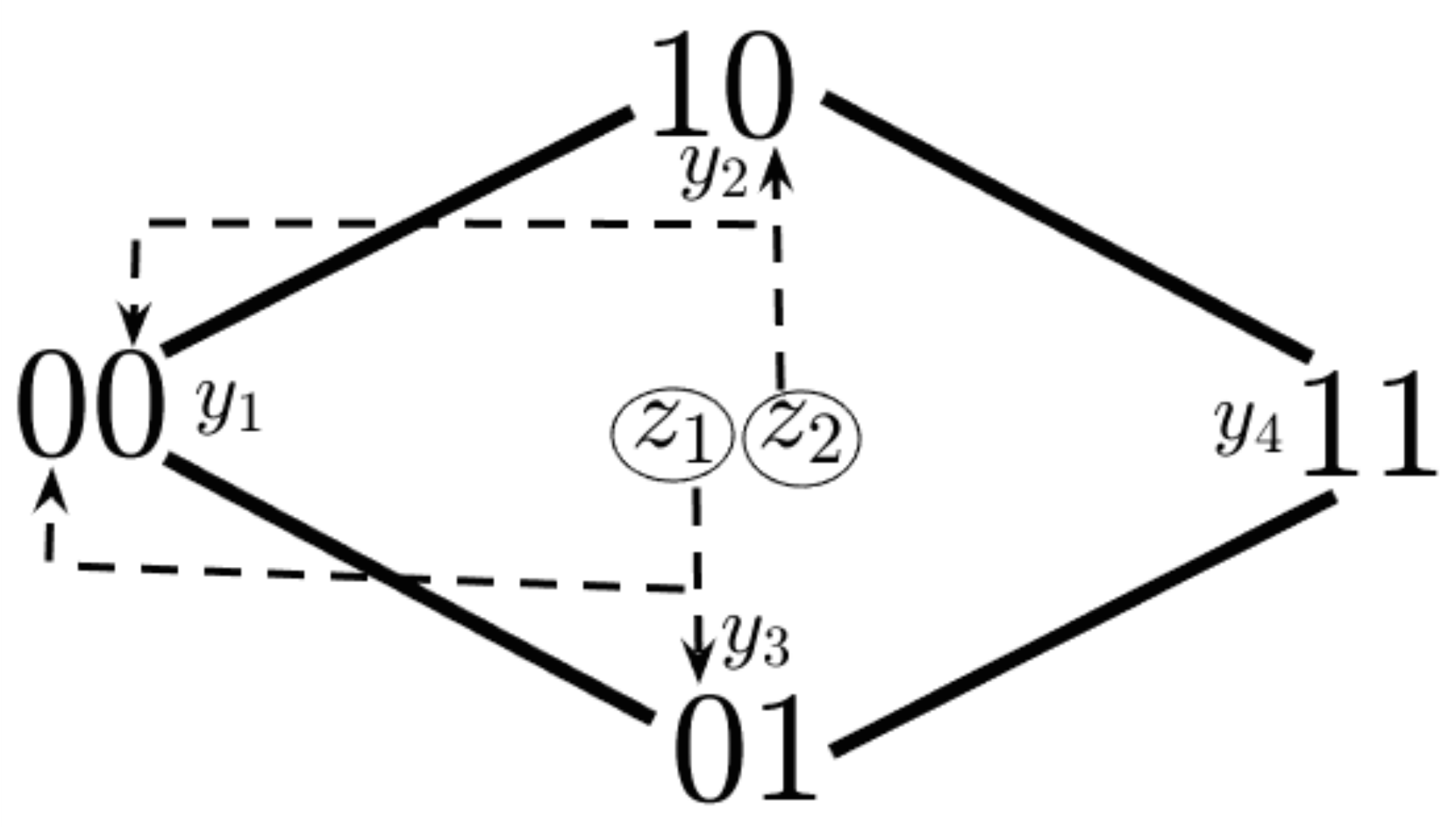}}
\caption{\textbf{(a)}  The full virus-immune network on $n=3$ epitopes for model \eqref{odeS} visualized through the viral immune escape pathway in the hypercube graph, $Q_3$.  Here each viral strain $y_i$, $i=1,\dots,8$, is associated with a unique binary string $\mathbf i\in\left\{0,1\right\}^3$ coding their allele type, susceptible (0) or resistant (1), at each epitope.  Immune response $z_j$ attacks $y_i\sim \mathbf i$ if $i_j=0$, or equivalently if $j$ is in epitope set of $y_i$ ($j\in\Lambda_i$). The wild-type virus, $y_1\sim 000$, can evolve resistance to each epitope-specific immune response $z_j$ by successive single epitope mutations forming a path in the hypercube graph to the completely resistant viral strain ($111$).  The number of epitope mutations which viral strain $y_i$ has accumulated, $d(y_i,y_1)$, is the Hamming distance between $\mathbf i$ and $000$.  The particular escape path with successive epitope mutations in order of immunodominance forms the ``perfectly nested'' network (highlighted in blue).  Note that system \eqref{odeS} does not explicitly include mutation between viral strains, and for figure clarity we only display interaction arrows between immune response and viral variants in the nested subnetwork.   \textbf{(b)}  The strain-specific (one-to-one) network, as a subgraph of the hypercube graph.  \textbf{(c)}  The perfectly nested network, as a subgraph of the hypercube graph. \textbf{(d)}  The full network on $n=2$ epitopes. }
  \label{fig2}
  \end{figure}

Although mutation rates are not explicitly included in model \eqref{odeS}, the $2^n$ potential virus strains can be viewed in a mutational pathway network; each strain is a vertex in an $n$-dimensional \emph{hypercube graph}, shown in Figs. \ref{fig2a} and \ref{fig2d} in the case of $n=3$ and $n=2$ epitopes.  Viral strains $y_i\sim \mathbf i= (i_1,\dots, i_n)$ and $y_k\sim \mathbf k=(k_1,\dots k_n)$ are connected by an edge, which we denote $y_i\leftrightarrow y_k$,  if the sequences $\mathbf i$ and $\mathbf k$ differ in exactly one bit, i.e. their \emph{Hamming distance} -- denoted by $d(y_i,y_k)$ -- is one.   In this way, $y_i$ can mutate into $y_k$ (or vice-versa) through a single epitope mutation if $y_i\leftrightarrow y_k$.   Since each mutation of an epitope comes with a fitness cost, we assume that 
\begin{align}\label{fitnesscost}
\text{If} \ y_i\leftrightarrow y_k \ \text{and} \ d(y_i,y_w)<d(y_k,y_w), \ \text{then} \  \mathcal R_i > \mathcal R_k .
\end{align}
 We say that an immune response $z_j$ is \emph{immunodominant} over another immune response $z_k$ if $\mathcal I_j>\mathcal I_k$ and assume without loss of generality the ordered \emph{immunodominance hierarchy}; $\mathcal I_1\geq \mathcal I_2\geq \dots \geq \mathcal I_n$.

System \eqref{odeS} generalizes many previous model structures in the sense that they can be seen as subgraphs of our ``hypercube network''.  For instance, the ``strain-specific'' (virus-immune response) network \cite{nowak1996population} (also called ``one-to-one network'' in phage-bacteria models \cite{jover2013mechanisms,korytowski2015nested}) is equivalent to restricting \eqref{odeS}  to the $m=n$ viral strains which have mutated $n-1$ epitopes (Figure \ref{fig2b}).  The ``perfectly nested network'' restricts \eqref{odeS} to the $m=n+1$ viral strains which have sequential epitope mutations in the order of the immunodominance hierarchy (Figure \ref{fig2b}).  Nested networks were considered in HIV models \cite{browne2016global}, along with phage-bacteria models \cite{korytowski2015nested}, and may be a common persistent structure in ecological communities \cite{gurney2017network}.  The ``full hypercube network'' has been considered for modeling CTL escape patterns in HIV infected individuals \cite{Althaus,vanDeutekom}.    The dynamics of system \eqref{odeS} with the full network on $n$ epitopes (consisting of $m=2^n$ viral strains), along with one-to-one and perfectly nested subgraphs (consisting of $m=n$ or $m=n+1$ strains), will be  analyzed in Section ~\ref{Sec5}.

\section{Equilibria and Lyapunov function}\label{Sec3}
A general non-negative equilibrium point, $\mathcal E^*=\left(x^*,y^*, Z^* \right)\in\mathbb R_+^{1+m+n}$, of system (\ref{ode3}) will be characterized in terms of the positive virus and immune variant components.  Define the ``\emph{persistent variant sets}'' associated with $\mathcal E^*$ as:
\begin{align}
\Omega_y= \left\{i \in [1,m]: y^*_i>0 \right\} \quad \text{and} \quad \Omega_z= \left\{j \in [1,n]: Z^*_j>0 \right\}.
\end{align}
  In addition, define the following subsets of $\mathbb R_+^{1+m+n}$:
 \begin{align}
 \Omega &= \left\{ \left(x,y, Z\right)\in \mathbb R_+^{1+m+n} \ | \  y_i,z_j>0 \ \text{if} \ i \in\Omega_y, j\in\Omega_z\right\}, \  \Gamma_{\Omega} = \Omega \cap \left\{    y_i,z_j=0, \ \ i\notin\Omega_y,j\notin\Omega_z\right\}.
 \end{align}
 Here $\Gamma_{\Omega}$, consisting of only those state vectors having the same set of positive and zero components as equilibrium $\mathcal E^*$, is called the \emph{positivity class} of $\mathcal E^*$.  Notice that the dimension of the subset $\Gamma_{\Omega}$ is $1+|\Omega_y|+|\Omega_z|$, where the notation $|\Omega_y|$ ($|\Omega_z|$) denotes the \emph{cardinality} of the set $\Omega_y$ ($\Omega_z$).  The equilibrium $\mathcal E^*$ must satisfy the following equations:
\begin{align}
\sum_{i\in\Omega_y} a_{ij} y_i^* &= \rho_j, \quad j\in \Omega_z  \notag \\
 \left( 1+ \sum_{i\in\Omega_y} \mathcal R_i y_i^*\right) &= \frac{1}{x^*}  \label{genEqcon}  \\
\sum_{j\in\Omega_z} a_{ij} Z_j^* &= \mathcal R_i x^* -1,  \quad i\in \Omega_y \notag
\end{align}
 We note that $\mathcal R_i>1,\ i\in \Omega_y$ must hold, even in the absence of CTL response.

Following Hofbauer and Sigmund \cite{hofbauer1998evolutionary}, we call an equilibrium $\mathcal E^*=(x^*,y^*,Z^*)$ of (\ref{ode3}) \emph{saturated} if the following holds when $\mathcal E^*$ has zero components:
\begin{align}
\mathcal R_i x^*-1-\sum_{j\in\Omega_z} a_{ij} Z_j^*\leq 0, \ \ \forall i\notin\Omega_y, \qquad \sum_{i\in\Omega_y} a_{ij} y_i^*-\rho_j \leq 0, \ \ \forall j\notin\Omega_z . \label{inequ}
\end{align}
Note that if $\mathcal E^*$ has all positive components, i.e. $\Omega_y=[1,m]$ and $\Omega_z=[1,n]$, the inequalities (\ref{inequ}) trivially hold.  Note also that each term in (\ref{inequ}) is an ``invasion'' eigenvalue of the Jacobian matrix evaluated at  $\mathcal E^*$ and thus a saturated equilibrium enjoys a weak stability
against invasion by missing species $i\notin \Omega_y,\ j\notin \Omega_z$. It immediately follows that a stable equilibrium must be saturated since the Jacobian cannot have a positive eigenvalue. As part of Theorem \ref{genThm} later in this section, we will conversely show that every saturated
equilibrium is stable. First, the following proposition states that there exists at least one saturated equilibrium.  For completeness we provide its proof in Appendix \ref{A1}, however we also note that this proposition follows directly from Theorem 2 in \cite{hofbauer1990index}.

\begin{proposition}\label{SatExist}
There exists a saturated equilibrium of system (\ref{ode3}).
\end{proposition}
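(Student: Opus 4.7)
The plan is to combine dissipativity of (\ref{ode3}) with a perturbation argument and Brouwer's fixed-point theorem, verifying the saturation inequalities in the limit $\epsilon\to 0^+$. This mirrors the strategy underlying the index theorem of \cite{hofbauer1990index} but specialized to the Lotka-Volterra-like structure at hand, so the core ingredients are (i) a compact convex invariant region, (ii) an interior equilibrium for each perturbed system, and (iii) a continuity-plus-sign argument on the limit.

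First, I would establish dissipativity and construct a compact convex forward-invariant set $\mathcal K\subset\mathbb R_+^{1+m+n}$. The $x$-equation immediately gives $\dot x\le 1-x$, so $x$ is ultimately bounded by $1$. Forming a suitable weighted linear combination of $\dot x$, the $\dot y_i$, and the $\dot Z_j$ produces a telescoping identity whose right-hand side is bounded, yielding uniform upper bounds on $\sum_i y_i$ and $\sum_j Z_j$. These bounds can be packaged into a convex region $\mathcal K$ that is forward-invariant (eventually) for both (\ref{ode3}) and nearby perturbed systems.

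Next, for each small $\epsilon>0$, consider the perturbed system obtained by adding the constant $\epsilon$ to each of the $\dot y_i$ and $\dot Z_j$ equations in (\ref{ode3}). In the perturbed flow, the coordinate hyperplanes $\{y_i=0\}$ and $\{Z_j=0\}$ are no longer invariant, so every equilibrium has strictly positive components. Applying Brouwer's theorem to the time-$1/n$ map of the perturbed flow on $\mathcal K$ yields fixed points $p_n^\epsilon$; any accumulation point as $n\to\infty$ is an equilibrium $\mathcal E_\epsilon^*$ of the perturbed system, and by the previous remark it lies in the interior of the positive orthant. Since $\mathcal E_\epsilon^*\in\mathcal K$ and $\mathcal K$ is compact, I pass to a subsequence $\epsilon_k\to 0^+$ with $\mathcal E_{\epsilon_k}^*\to \mathcal E^*$, and continuity of the vector field forces $\mathcal E^*$ to be an equilibrium of (\ref{ode3}).

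Finally, I would verify saturation. If $y_i^*=0$, then $i\notin\Omega_y$, and the $i$-th perturbed equilibrium equation reads
\[
\gamma_i\, y_{i,\epsilon}^*\Bigl(\mathcal R_i\, x_\epsilon^*-1-\sum_{j} a_{ij}Z_{j,\epsilon}^*\Bigr)+\epsilon=0,
\]
so $\mathcal R_i x_\epsilon^*-1-\sum_j a_{ij}Z_{j,\epsilon}^*=-\epsilon/(\gamma_i y_{i,\epsilon}^*)\le 0$, and letting $\epsilon\to 0^+$ gives the first inequality in (\ref{inequ}); the second follows by the symmetric argument for $Z_j^*=0$. The main obstacle is the careful construction of a genuinely \emph{convex} forward-invariant $\mathcal K$, since naive product rectangles need not be positively invariant under (\ref{ode3}); one must shape $\mathcal K$ using the weighted telescoping identity so that the outward normal condition holds on its boundary. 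Once this geometric step is secured, Brouwer's theorem and the limiting/saturation argument proceed in a routine manner.
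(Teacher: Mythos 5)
Your proposal is correct and follows the same overall scaffolding as the paper's proof: perturb the vector field so that every equilibrium is interior, extract an equilibrium of the perturbed system inside a compact absorbing set, let $\epsilon\to 0^+$, and read off the saturation inequalities (\ref{inequ}) from the sign of $-\epsilon/(\gamma_i y_{i,\epsilon}^*)$ — this is the Hofbauer--Sigmund template the paper also invokes. The genuine difference is the tool used to produce the perturbed interior equilibrium. The paper uses Brouwer degree with a homotopy from the perturbed system to a decoupled linear one, computing degree $(-1)^{m+n+1}\neq 0$; this needs only that no equilibria lie on the boundary of the absorbing set, not that the set be convex or forward invariant. You instead apply Brouwer's fixed-point theorem to time-$t$ maps of the perturbed flow, which avoids degree theory but obliges you to exhibit a \emph{convex, compact, forward-invariant} region — precisely the step you flag as the main obstacle. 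That step does go through for (\ref{ode3}): the weighted functional $V=x+\sum_i y_i/\gamma_i+\sum_j (\rho_j/\sigma_j)Z_j$ satisfies $\dot V = 1-x-\sum_i y_i-\sum_j\rho_j Z_j\le 1+\epsilon C-\delta V$ with $\delta=\min\{1,\gamma_i,\sigma_j\}$, so $\{V\le p\}\cap\mathbb R_+^{1+m+n}$ with $p$ large is convex and compact, and the perturbed flow points inward on the face $V=p$ and on each coordinate face (because of the $+\epsilon$ terms), giving forward invariance. With that secured, the fixed-point extraction (using, say, times $1/n!$ so that limit points are genuine rest points) and the limiting/saturation argument are routine, as you say; in effect you trade the paper's degree computation for a small amount of convex geometry, at the cost of slightly more work on the invariant region.
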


More properties of relevant equilibria can be ascertained.  The following proposition states that any equilibria in the same positivity class must share the same value at $x^*$.

\begin{proposition} \label{xunique}
If $\mathcal E'=\left(x', y',  Z' \right)$ and $\mathcal E''=\left(x'', y'', Z'' \right)$ are both equilibria in the same positivity class, $\Gamma_{\Omega}$, then $x'=x''$.
\end{proposition}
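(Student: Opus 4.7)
The plan is to exploit the symmetry of the bilinear virus--immune interaction by subtracting the defining equilibrium equations for $\mathcal E'$ and $\mathcal E''$, then computing a cross sum two different ways.

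First I would record the three families of equilibrium equations from \eqref{genEqcon}, applied separately at $\mathcal E'$ and at $\mathcal E''$. Since both equilibria lie in the same positivity class $\Gamma_\Omega$, the sets $\Omega_y$ and $\Omega_z$ (and hence the indices over which the sums range) are identical. Introducing the differences $\tilde y_i := y_i' - y_i''$ for $i \in \Omega_y$ and $\tilde Z_j := Z_j' - Z_j''$ for $j \in \Omega_z$, and $\Delta x := x' - x''$, subtraction yields the three linear identities
\[
\sum_{i \in \Omega_y} a_{ij}\tilde y_i = 0 \quad (j \in \Omega_z), \qquad \sum_{j \in \Omega_z} a_{ij}\tilde Z_j = \mathcal R_i\,\Delta x \quad (i \in \Omega_y), \qquad \sum_{i \in \Omega_y} \mathcal R_i \tilde y_i = \frac{1}{x'}-\frac{1}{x''} = -\frac{\Delta x}{x'x''}.
\]

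The key step is to evaluate the double sum
\[
S := \sum_{i \in \Omega_y} \tilde y_i \sum_{j \in \Omega_z} a_{ij}\tilde Z_j
\]
in two ways. Substituting the second identity into the inner sum gives $S = \Delta x \sum_{i \in \Omega_y} \mathcal R_i \tilde y_i = -(\Delta x)^2/(x'x'')$ by the third identity. On the other hand, swapping the order of summation and applying the first identity gives $S = \sum_{j \in \Omega_z} \tilde Z_j \sum_{i \in \Omega_y} a_{ij}\tilde y_i = 0$. Equating the two expressions yields $(\Delta x)^2 = 0$, and since $x', x'' > 0$ on $\Gamma_\Omega$, we conclude $x' = x''$.

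There isn't really a hard step here; the only thing to watch is ensuring that $\Omega_y$ and $\Omega_z$ are literally the same index sets for both equilibria (which is built into the definition of positivity class) so that the swap of summation is legitimate, and that the positivity of $x'$ and $x''$ justifies dividing by $x'x''$. The bilinear pairing $\sum_{i,j} a_{ij}\tilde y_i \tilde Z_j$ doing the work is the hallmark of the Lotka--Volterra structure and will reappear in the Lyapunov construction of Theorem~\ref{genThm}.
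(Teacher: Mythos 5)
Your proof is correct and is essentially the paper's own argument: the double sum $S=\sum_{i,j}a_{ij}\tilde y_i\tilde Z_j$ that you evaluate two ways is exactly the bilinear form $(y'-y'')A'(Z'-Z'')$ that the paper computes, with the only cosmetic difference that you carry $\Delta x$ explicitly via the third identity while the paper substitutes $x'-x''=\frac{(y''-y')\vec{\mathcal R}'}{(1+y'\vec{\mathcal R}')(1+y''\vec{\mathcal R}')}$ and concludes from $\bigl((y'-y'')\vec{\mathcal R}'\bigr)^2=0$. No gaps; your remark that $x',x''>0$ (forced by the $\dot x=0$ equation) is the right thing to check before dividing.
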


\begin{proof}
Let $A'$ denote the submatrix of $A$ which contains only the rows in $\Omega_y$ and columns in $\Omega_z$.  Then the equilibrium conditions for $\mathcal E'$ can be rewritten as:
\begin{align*}
y'A'=\rho', \quad A'Z' = \frac{\vec{\mathcal R}'}{1+y'\vec{\mathcal R}'}-\vec{1},
\end{align*}
where $\rho'$ is the row vector with components $\rho_j$ where $j\in\Omega_z$, $\vec{\mathcal R}'$ is the column vector with components $\mathcal R_i$ where $i\in\Omega_y$ and $\vec{1}$ is the vector with all components one.  Since $\mathcal E''$ satisfies the same conditions, we obtain:
\begin{align*}
(y'-y'')A'&=\mathbf{0}, \qquad A'\left(Z'-Z''\right) = \left( \frac{(y''-y')\vec{\mathcal R}'}{(1+y'\vec{\mathcal R}')(1+y''\vec{\mathcal R}')}\right)\vec{\mathcal R}', \\
\Rightarrow 0&=(y'-y'')A'\left(Z'-Z''\right) =(y'-y'') \left( \frac{(y''-y')\vec{\mathcal R}'}{(1+y'\vec{\mathcal R}')(1+y''\vec{\mathcal R}')}\right)\vec{\mathcal R}' \\
\Rightarrow 0&=\left((y'-y'')\vec{\mathcal R}' \right)^2 \ \Rightarrow y'\vec{\mathcal R}' =y''\vec{\mathcal R}'  \\
\Rightarrow x'&=x''
\end{align*}
\end{proof}

 The previous proposition implies that if an equilibrium $\mathcal E^*=\left(x^*, y^*, Z^* \right)$ exists in positivity class $\Gamma_{\Omega}$, then any equilibrium $\mathcal E'=\left(x^*, y', Z' \right)$ belonging to $\Gamma_{\Omega}$ will satisfy the following Lotka-Volterra equilibria conditions within $\Gamma_{\Omega}$:
\begin{align}
\vec r\,'&+ B' \vec v'=\vec 0, \qquad \vec r\,'= \begin{pmatrix} \vec{\mathcal R}'x^*-\vec 1 \\ \vec \rho\,' \end{pmatrix}, \quad B'=\begin{pmatrix} 0 & -A' \\ (A')^T & 0 \end{pmatrix}, \label{LVequil}
\end{align}
where $A'$ is the submatrix of $A$ with rows in $\Omega_y$ and columns in $\Omega_z$, $\rho'$ is the row vector with components $\rho_j$ where $j\in\Omega_z$, $\vec{\mathcal R}'$ is the column vector with components $\mathcal R_i$ where $i\in\Omega_y$, and $\vec v\,'=\left(y',Z'\right)^T$.  Note that if the cardinality of $\Omega_y$ and $\Omega_z$ are equal ($|\Omega_y|=|\Omega_z|$) and $A'$ is non-singular, then clearly $\mathcal E^*$ is unique in its positivity class, $\Gamma_{\Omega}$.  More generally, if $B'$ is non-singular, then $\mathcal E^*$ is unique in $\Gamma_{\Omega}$.  The following proposition sharpens the condition for uniqueness of an equilibrium within a positivity class, and shows that in such equilibria the number of virus strains either is equal to or exactly one more than the number of immune responses.
\begin{proposition}\label{prop33}
Suppose the equilibrium $\mathcal E^*=\left(x^*, y^*, Z^* \right)$ exists in positivity class $\Gamma_{\Omega}$, where $\left(y^*,Z^*\right)$ satisfy the linear system of equations (\ref{LVequil}) and the cardinality of $\Omega_y$ and $\Omega_z$ are $|\Omega_y|=m'$ and $|\Omega_z|=n'$.  Then $\mathcal E^*$ is the unique equilibrium in $\Gamma_{\Omega}$, i.e. $\vec v=\left(y^*,Z^*\right)^T$ is the unique solution to (\ref{LVequil}), if and only if ${\rm Ker}(A')^T\cap \vec{\mathcal R}'^{\perp}=\left\{0\right\}$ and ${\rm Ker}(A')=\left\{0\right\}$.

Moreover, if $\mathcal E^*$ is the unique equilibrium in $\Gamma_{\Omega}$, then one of the following holds:
\begin{itemize}
\item[(i)] $m'=n'$, and $x^*=1/\left(1+(\vec{\rho}\,')^T(A')^{-1}\vec{\mathcal R}'\right)$.
\item[(ii)] $m'=n'+1$, and $x^*=\vec 1^{\,T} C^{-1}_{(n'+1)}$, where $C^{-1}_{(n'+1)}$ is the last column in the $(n'+1)\times (n'+1)$ matrix inverse of $C=\begin{pmatrix} A' & \vec{\mathcal R'}\end{pmatrix}^T$.
\end{itemize}
\end{proposition}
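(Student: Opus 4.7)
The plan is to use Proposition \ref{xunique} to force a common value $x^*$ on every equilibrium in $\Gamma_{\Omega}$, then reduce the uniqueness assertion to simultaneous triviality of two matrix kernels, and finally perform a dimension count on $A'$ to pin down the two possible equilibrium structures.

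By Proposition \ref{xunique}, any two equilibria in $\Gamma_{\Omega}$ share the same first coordinate $x^*$, so uniqueness of $\mathcal E^*$ in $\Gamma_{\Omega}$ reduces to uniqueness of the pair $(y^*,Z^*)$ satisfying $(A')^T y = \vec \rho\,'$ and $A' Z = \vec{\mathcal R}\,' x^* - \vec 1$, together with the scalar constraint $(\vec{\mathcal R}\,')^T y = 1/x^* - 1$ inherited from the $x$-equation. Subtracting two candidate solutions, the difference of $Z$-components lies in ${\rm Ker}(A')$, while the difference of $y$-components is killed by both $(A')^T$ and $(\vec{\mathcal R}\,')^T$, hence lies in ${\rm Ker}((A')^T) \cap (\vec{\mathcal R}\,')^{\perp}$. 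Triviality of both intersections is therefore equivalent to uniqueness, proving the ``if and only if'' assertion.

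For the dichotomy, ${\rm Ker}(A')=\{0\}$ makes $A'$ have full column rank $n'$, so $m' \geq n'$ and $\dim {\rm Ker}((A')^T) = m'-n'$. Since $(\vec{\mathcal R}\,')^{\perp}$ is a hyperplane of codimension one in $\mathbb R^{m'}$, the intersection ${\rm Ker}((A')^T) \cap (\vec{\mathcal R}\,')^{\perp}$ has dimension at least $m'-n'-1$, and triviality forces $m' \leq n'+1$, so $m' \in \{n',\,n'+1\}$. In case (i) with $m'=n'$, $A'$ is square and invertible; solving $y^* = ((A')^T)^{-1}\vec \rho\,'$ and substituting into $1/x^* = 1 + (\vec{\mathcal R}\,')^T y^*$, then transposing the resulting scalar, gives the stated $x^* = 1/(1 + (\vec \rho\,')^T (A')^{-1} \vec{\mathcal R}\,')$.

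In case (ii) with $m'=n'+1$, ${\rm Ker}((A')^T)$ is one-dimensional and the uniqueness condition says its generator is not orthogonal to $\vec{\mathcal R}\,'$; via the identification ${\rm Ker}((A')^T) = ({\rm Range}(A'))^{\perp}$, this is equivalent to $\vec{\mathcal R}\,' \notin {\rm Range}(A')$ and hence to invertibility of the $(n'+1)\times(n'+1)$ matrix $C = (A'\;\;\vec{\mathcal R}\,')^T$. Letting $u = C^{-1}_{(n'+1)}$, the identity $Cu = e_{n'+1}$ translates into $(A')^T u = 0$ and $(\vec{\mathcal R}\,')^T u = 1$. Taking the inner product of the equilibrium identity $A' Z^* = \vec{\mathcal R}\,' x^* - \vec 1$ with $u$ then collapses the left-hand side to zero, yielding $0 = x^* - \vec 1^{\,T} u$ and hence $x^* = \vec 1^{\,T} C^{-1}_{(n'+1)}$. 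The main care I anticipate is the translation in case (ii) of the kernel hypothesis into the invertibility of $C$; once that is in place, the $x^*$ formula drops out cleanly from the dual pairing with $u$.
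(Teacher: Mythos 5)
Your proof is correct and follows essentially the same route as the paper: reduce uniqueness to the two kernel conditions via Proposition \ref{xunique}, bound $m'$ by a dimension count, and in case (ii) extract the last column of $C^{-1}$ and pair it against the equilibrium identity $A'Z^*=\vec{\mathcal R}\,'x^*-\vec 1$. The only cosmetic difference is that you obtain $m'\le n'+1$ by intersecting $\mathrm{Ker}((A')^T)$ with the hyperplane $(\vec{\mathcal R}\,')^{\perp}$, whereas the paper applies rank--nullity to the augmented matrix $C$; these are equivalent.
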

\begin{proof}
Suppose there exists equilibrium $\mathcal E^*$ in positivity class $\Gamma_{\Omega}$ and consider the corresponding $m'\times n'$ submatrix $A'$ from linear system (\ref{LVequil}), where $m'=|\Omega_y|$ and $n'=|\Omega_z|$.  Proposition \ref{xunique} shows that any two equilibria $\mathcal E'$ and $\mathcal E''$, contained in the same positivity class $\Gamma_{\Omega}$,  have equal $x^*$ components,  i.e. $x'=x''$.  Thus if these equilibria are distinct, then $y'\ne y''$ or $Z'\ne Z''$. In the proof of Proposition \ref{xunique} it is shown that
$$
y'-y'' \in \hbox{Ker} (A')^T\cap \vec{\mathcal R}'^{\perp} \ \hbox{and}\ Z'-Z''\in \hbox{Ker} A'.
$$
Therefore, the condition
$$
{\rm Ker}(A')^T\cap \vec{\mathcal R}'^{\perp}=\left\{0\right\}\ \hbox{and}\ {\rm Ker}(A')=\left\{0\right\}
$$
is equivalent to uniqueness of an equilibrium $\mathcal E^*$ in its positivity class $\Gamma_{\Omega}$.

Moreover $\hbox{Ker} (A')^T\cap \vec{\mathcal R}'^{\perp}=\{0\}$ if and only if the augmented $(n'+1)\times m'$ matrix $C$ consisting of
adding the final row $\vec{\mathcal R}'^T$ to $(A')^T$ has trivial kernel.  By the rank-nullity theorem, we obtain that $C$
has rank equal to $m'$. Since rank cannot exceed the number of rows, $m'\le n'+1$.
Applying the rank-nullity theorem to $A'$, gives rank equal to $n'$ and  $n'\le m'$. Thus $n'\le m'\le n'+1$.

In the case (i), $m'=n'$, the matrix $A$ is invertible and so from the equilibria equations (\ref{genEqcon}), we obtain $x^*=1/\left(1+(\vec \rho\,')^T(A')^{-1}\vec{\mathcal R}'\right)$.  Finally, consider case (ii), $m'=n'+1$.  Since $n'={\rm rank}(A')={\rm rank}\left((A')^T\right)$, by the rank-nullity theorem we obtain that ${\rm null}\left((A')^T\right)=1$.  We claim that $\hbox{Ker} (A')^T$ contains a vector $\vec w$ such that $\vec w^T \vec{\mathcal R}'=1$ and $\sum_i w_i =x^*$.  Since the matrix $C$ (defined in previous paragraph) has trivial Kernel, it is invertible.  Let $\vec w$ be the last column of $C^{-1}$.  Then it is not hard to see that  $(A')^T\vec w=\vec 0$ and $(\vec{\mathcal R}')^T\vec w=1$.  Furthermore
\begin{align*}
0&=\vec w^T A' Z' = \vec w^T \left(\vec{\mathcal R}'x^*-\vec{1}\right) = x^*\vec w^T \vec{\mathcal R}'-\sum_i w_i =x^* -\sum_i w_i.
\end{align*}
Thus $x^*=\sum_i w_i=\vec 1^T C^{-1}_{(n'+1)}$.
\end{proof}
Notice from the above proof that if an equilibrium $\mathcal E^*$ is not unique in its positivity class $\Gamma_{\Omega}$, then $\Gamma_{\Omega}$ contains an infinite number (a continuum) of equilibria.  Conversely, if $\mathcal E^*$ is unique in a positivity class $\Gamma_{\Omega}$ containing $n'$ (persistent) immune responses, then there are either (i) $n'$ virus strains or (ii) $n'+1$ virus strains in $\Gamma_{\Omega}$.

Additionally some results on existence of saturated equilibria in Lotka-Volterra systems can be recast in our setting.  For example, a sufficient condition for $\mathcal E^*$ to be saturated (and unique equilibrium in $\Gamma_{\Omega}$) is if the matrix $-B'$ in (\ref{LVequil}) is a $P$-matrix, i.e. all principal minors of $-B'$ are positive, by Theorem 15.4.5 in \cite{hofbauer1998evolutionary}.

In what follows, we will be interested in the global behavior of solutions to system (\ref{ode3}).  In doing so, we will determine which viral strains and immune responses uniformly persist \cite{thieme1993persistence} and which go extinct.  Define the system to be $\Omega_{yz}$ \emph{permanent} if
 \begin{align*}
 & \exists \ \epsilon,M>0 \ \text{and} \ T(\vec w_0) \ \text{such that} \  M>y_i(t),Z_j(t) >\epsilon, \ i \in\Omega_y, j\in\Omega_z, \ \forall t>T(\vec w_0), \ \text{and} \\
  & \lim_{t\rightarrow\infty}y_i(t),Z_j(t)= 0, \ i \notin\Omega_y, j\notin\Omega_z, \ \ \text{for every solution with initial condition} \ \vec w_0 \in \Omega.
   \end{align*}
     We will sometimes use the terminology that $y_i,Z_j \ i\in\Omega_y, j\in\Omega_z$ are uniformly persistent and $y_i,Z_j \rightarrow 0 \ i\notin\Omega_y, j\notin\Omega_z$ to signify the system being $\Omega_{yz}$ permanent.

 In the spirit of permanence as a sufficient condition for existence of a unique interior rest point in Lotka-Volterra systems \cite{hofbauer1998evolutionary}, we find the following proposition.
\begin{proposition}\label{permSat}
If the system is $\Omega_{yz}$ permanent, then there is a unique equilibrium $\mathcal E$ in the positivity class $\Gamma_{\Omega}$.\end{proposition}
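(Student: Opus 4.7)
The plan is to prove uniqueness by contradiction: assuming two distinct equilibria $\mathcal E^{(1)},\mathcal E^{(2)}\in\Gamma_\Omega$ exist, exhibit an initial condition in $\Omega$ whose orbit violates the \emph{uniform} lower bound $\epsilon$ promised by $\Omega_{yz}$-permanence. Existence of an equilibrium in $\Gamma_\Omega$ will follow by applying Proposition \ref{SatExist} to the subsystem restricted to the closure of $\Gamma_\Omega$, together with a standard argument that uniform persistence prevents the resulting saturated equilibrium from having any vanishing $\Omega$-component.

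The first step produces a one-parameter family of equilibria in $\Gamma_\Omega$. Proposition \ref{xunique} yields $x^{(1)}=x^{(2)}=x^*$; subtracting the linear system (\ref{LVequil}) at each equilibrium, the differences $\Delta y:=y^{(2)}-y^{(1)}$ and $\Delta Z:=Z^{(2)}-Z^{(1)}$---not both zero---satisfy $\Delta y\in\mathrm{Ker}\bigl((A')^T\bigr)\cap\vec{\mathcal R}'^{\perp}$ and $\Delta Z\in\mathrm{Ker}(A')$. Because the equilibrium conditions are linear in $(y,Z)$ once $x^*$ is fixed, the entire affine family $\mathcal E(t):=\mathcal E^{(1)}+t(\mathcal E^{(2)}-\mathcal E^{(1)})$ satisfies the equilibrium equations (\ref{genEqcon}) on the maximal open interval $I\supseteq[0,1]$ for which every $\Omega$-component of $\mathcal E(t)$ remains strictly positive. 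Since at least one $\Omega$-component is a nonconstant affine function of $t$, $I$ has a finite endpoint, say $t_-$, at which some $\Omega$-component---call it $Z_{j_0}$ with $j_0\in\Omega_z$---vanishes (a symmetric argument applies if a $y$-component vanishes first).

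The second step extracts the contradiction. For a sequence $t_n\searrow t_-$, each $\mathcal E(t_n)\in\Gamma_\Omega\subset\Omega$ is itself an equilibrium of (\ref{ode3}), so the solution initiated at $\vec w_0=\mathcal E(t_n)$ is the constant orbit $w(t)\equiv\mathcal E(t_n)$; hence $Z_{j_0}(t)\equiv\mathcal E(t_n)_{Z_{j_0}}\to 0$ as $n\to\infty$. This contradicts the $\Omega_{yz}$-permanence assumption, which supplies a \emph{single} $\epsilon>0$, independent of $\vec w_0$, satisfying $Z_{j_0}(t)>\epsilon$ for every $t>T(\vec w_0)$. This forces $\Delta y=\vec 0$ and $\Delta Z=\vec 0$, establishing uniqueness.

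The main subtlety is the essential use of the \emph{uniformity} of $\epsilon$ in the permanence hypothesis. Under merely pointwise persistence (with $\epsilon=\epsilon(\vec w_0)$) the argument collapses, since each constant equilibrium orbit is trivially bounded below by its own positive coordinates; uniformity is precisely what prevents a continuum of interior equilibria from accumulating on the boundary of $\Gamma_\Omega$. I expect this uniformity, together with correctly identifying the finite endpoint $t_-$ of the equilibrium segment, to be the only delicate points in the argument.
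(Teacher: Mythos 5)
Your uniqueness argument is essentially the paper's own proof, spelled out in more detail: Proposition \ref{xunique} forces a common $x^*$, linearity of the remaining equilibrium conditions (\ref{genEqcon}) makes the whole affine segment through two distinct equilibria consist of equilibria, and extending that segment until an $\Omega$-component vanishes produces constant orbits in $\Gamma_{\Omega}$ that violate the uniform $\epsilon$ of $\Omega_{yz}$-permanence. Your emphasis on the uniformity of $\epsilon$ is exactly the right delicate point, and the identification of the finite endpoint $t_-$ is correct since every nonconstant affine component is an $\Omega$-component.

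The one place you diverge from the paper is the existence half, and there your sketch has a small hole. The paper obtains an equilibrium in $\Gamma_{\Omega}$ directly from permanence by citing Theorem 6.2 of Smith--Thieme (uniform persistence of a dissipative system yields an equilibrium in the interior of the persistence region, via a fixed-point index argument). You instead apply Proposition \ref{SatExist} to the subsystem on the closure of $\Gamma_{\Omega}$ and then assert that permanence forbids the resulting saturated equilibrium from having a vanishing $\Omega$-component. Saturation alone does not give this: an equilibrium with a zero $\Omega$-component is not itself in $\Omega$, so permanence says nothing about its (constant) orbit. To close the gap you must invoke the local stability of saturated equilibria from Theorem \ref{genThm} (applied to the subsystem): if the saturated equilibrium had $y_{i_0}^*=0$ for some $i_0\in\Omega_y$, local stability would trap nearby orbits starting in $\Gamma_{\Omega}$ with $y_{i_0}(t)$ below any prescribed $\epsilon$ for all time, and \emph{that} contradicts permanence. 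With this extra step your route works and is arguably more self-contained than the paper's citation, but as written the inference from ``saturated'' to ``interior'' is not justified.
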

\begin{proof}
By Theorem 6.2 in \cite{smith2011dynamical}, $\Omega_{yz}$ permanence implies that there exists an equilibrium in the positivity class $\Gamma_{\Omega}$.  Suppose by way of contradiction that there are two equilibria, $\mathcal E'$ and $\mathcal E''$, in $\Gamma_{\Omega}$.  By Proposition \ref{xunique}, $x'=x''$.  Since the remaining equilibria equations (\ref{genEqcon}) are linear, it can be shown that the line through $\mathcal E'$ and $\mathcal E''$ consist entirely of equilibria.  Then, we can find equilibria arbitrarily close to the boundary of $\Gamma_{\Omega}$.  This contradicts the fact that the system is $\Omega_{yz}$ permanent.
\end{proof}

 Now we state the main theorem of this section concerning the persistence of viral and immune variants of model \eqref{ode3}.  It builds on results by Browne \cite{browne2016global} concerning a special case of \eqref{ode3}, namely the case of perfectly nested subnetwork (see Fig. \ref{fig2c}) in the ``binary sequence'' system \eqref{odeS}.  In particular, the notion of saturated equilibria allows us to significantly extend persistence and stability results to the arbitrary interaction networks in general model \eqref{ode3}.

\begin{theorem}\label{genThm}
Suppose that $\mathcal E^*=\left(x^*,y^*, Z^* \right)$ is a non-negative equilibrium of system (\ref{ode3}) with positivity class $\Gamma_{\Omega}$.  Suppose further that $\mathcal E^*$ is saturated, i.e. the inequalities (\ref{inequ}) hold.
Then $\mathcal E^*$ is locally stable and $x(t)\rightarrow x^*$ as $t\rightarrow\infty$.

Furthermore, if $\mathcal E^*$ is the unique equilibrium in its positivity class $\Gamma_{\Omega}$ and the inequalities (\ref{inequ}) are strict, then  $y_i,Z_j\rightarrow 0$ for all $i\notin\Omega_y, j\notin\Omega_z$.  If $i\in \Omega_y$ and $a_{ij}=0 \ \forall j\in \Omega_z$, i.e. $\Lambda_i\cap \Omega_z=\emptyset$, then $y_i\rightarrow y_i^*$ and $x^*=1/\mathcal R_i$.  In addition, omega limit sets corresponding to positive initial conditions are contained in invariant orbits satisfying
\begin{align}
\sum_{i\in\Omega_y}\mathcal R_i y_i &= \sum_{i\in\Omega_y}\mathcal R_i y_i^*  \label{linEq}  \\
\dot y_i &= \gamma_i y_i\left(\sum_{j\in\Omega_z}a_{ij} \left( Z_j^*-Z_j \right)\right) \quad i\in \Omega_y \label{invSys}\\
  \dot Z_j &= \frac{\sigma_j}{\rho_j} Z_j\left(\sum_{i\in\Omega_y} a_{ij} \left( y_i - y_i^*\right) \right) \quad j\in\Omega_z, \notag
\end{align}
and for each $i\in\Omega_y, j\in\Omega_z$,  $y_i$ and $Z_j$ persist (the system is $\Omega_{yz}$ permanent) with asymptotic averages converging to equilibria values, i.e.
\begin{align*}
\lim_{t\rightarrow\infty} \frac{1}{t} \int\limits_0^t y_i(s) \, ds = y_i^*, \quad \lim_{t\rightarrow\infty} \frac{1}{t} \int\limits_0^t Z_j(s) \, ds = Z_j^*,
\end{align*}
In the case that there are less than or equal to two persistent viral strains with non-empty epitope sets (restricted to $\Omega_z$), i.e. $|\left\{ i\in\Omega_y: \Lambda_i\cap \Omega_z\neq \emptyset \right\}|\leq 2$, then $\mathcal E^*$ is globally asymptotically stable.
\end{theorem}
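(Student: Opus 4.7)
The natural strategy is a Volterra--Goh Lyapunov function adapted to $\mathcal E^*$. I would set
\begin{align*}
V &= x^*g(x/x^*) + \sum_{i\in\Omega_y}\frac{y_i^*}{\gamma_i}g(y_i/y_i^*) + \sum_{i\notin\Omega_y}\frac{y_i}{\gamma_i} \\
&\quad + \sum_{j\in\Omega_z}\frac{\rho_j Z_j^*}{\sigma_j}g(Z_j/Z_j^*) + \sum_{j\notin\Omega_z}\frac{\rho_j Z_j}{\sigma_j},
\end{align*}
where $g(u)=u-1-\ln u\ge 0$, with the weights chosen so that upon substituting (\ref{ode3}) and the equilibrium identities (\ref{genEqcon}), every bilinear cross-term $a_{ij}(y_i-y_i^*)(Z_j-Z_j^*)$ cancels between the $y$- and $Z$-derivatives. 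After this cancellation the $\dot x$ contribution collapses to $-(x-x^*)^2(1+\sum_{i\in\Omega_y}\mathcal R_i y_i^*)/x$, and the only surviving residuals are
\[
\sum_{i\notin\Omega_y}y_i\Bigl(\mathcal R_ix^*-1-\sum_{j\in\Omega_z}a_{ij}Z_j^*\Bigr) + \sum_{j\notin\Omega_z}Z_j\Bigl(\sum_{i\in\Omega_y}a_{ij}y_i^*-\rho_j\Bigr),
\]
each of which is $\le 0$ by the saturated inequalities (\ref{inequ}). Hence $\dot V\le 0$, giving local stability of $\mathcal E^*$ and, via LaSalle, $x(t)\to x^*$.

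Next, let $M$ denote the largest invariant subset of $\{\dot V=0\}$; every $\omega$-limit set of a positive trajectory lies in $M$. On $M$ one has $x\equiv x^*$, and $\dot x\equiv 0$ combined with the $\dot x$ equation yields the linear constraint (\ref{linEq}); plugging $x=x^*$ into the $y$- and $Z$-equations reduces them to (\ref{invSys}). When the inequalities (\ref{inequ}) are strict and $\mathcal E^*$ is unique in $\Gamma_\Omega$, the residual boundary terms in $\dot V$ are strictly negative unless $y_i\equiv 0$ for $i\notin\Omega_y$ and $Z_j\equiv 0$ for $j\notin\Omega_z$, yielding extinction of these off-class components. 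For a strain $i\in\Omega_y$ with $\Lambda_i\cap\Omega_z=\emptyset$, the $y_i$-equation on $M$ becomes $\dot y_i/y_i=\gamma_i(\mathcal R_ix^*-1)$; boundedness of $y_i$ together with (\ref{linEq}) then forces $\mathcal R_ix^*=1$ (so $x^*=1/\mathcal R_i$) and $y_i\equiv y_i^*$. $\Omega_{yz}$-permanence follows from uniqueness (Proposition \ref{prop33}) combined with Theorem 6.2 of \cite{smith2011dynamical} applied to the boundary flow, and the asymptotic-average identities come from integrating $d\ln y_i/dt$ and $d\ln Z_j/dt$ over $[0,t]$, dividing by $t$, using $\ln y_i(t)/t\to 0$ by permanence and $x(t)\to x^*$, and solving the resulting linear system in the time averages via uniqueness of $\mathcal E^*$.

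The global-stability clause when $|\Omega_y^{\mathrm{act}}|:=|\{i\in\Omega_y:\Lambda_i\cap\Omega_z\ne\emptyset\}|\le 2$ is the main obstacle, since the reduced flow (\ref{invSys}) is only neutrally stable---the Volterra function is a first integral of the restricted system, so $M$ might \emph{a priori} support closed orbits. The plan is to exploit the extra linear conservation law (\ref{linEq}). If $|\Omega_y^{\mathrm{act}}|\le 1$, that conservation law, combined with the earlier pinning of non-active strains to their equilibrium values, forces the unique active $y$-variable (if any) to equal $y^*$, which propagates through (\ref{invSys}) to $\dot Z_j\equiv 0$; uniqueness of $\mathcal E^*$ then collapses $M$ to $\{\mathcal E^*\}$. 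When $|\Omega_y^{\mathrm{act}}|=2$, (\ref{linEq}) eliminates one of the two active $y$-variables, and I would apply a refined LaSalle argument to the resulting lower-dimensional flow: either show that $V$ restricted to the constraint manifold becomes strict (the transversality of (\ref{linEq}) to the level sets of the Volterra first integral kills the neutral direction), or differentiate (\ref{linEq}) and use the antisymmetric $(A',-(A')^T)$ block structure of the reduced Lotka--Volterra system to force $Z_j\equiv Z_j^*$, and then $y_i\equiv y_i^*$, at every point of $M$. This last reduction is where I anticipate the most technical work.
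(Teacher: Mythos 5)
Your Lyapunov function is, up to an additive constant, exactly the one the paper uses, and the computation of $\dot V$, the LaSalle reduction to $x\equiv x^*$, the extraction of \eqref{linEq} and \eqref{invSys}, the extinction of the off-class components under strict saturation, and the treatment of strains with $\Lambda_i\cap\Omega_z=\emptyset$ all match the paper's argument. Two steps, however, are not closed as written.

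First, you derive $\Omega_{yz}$-permanence from ``uniqueness plus Theorem 6.2 of \cite{smith2011dynamical},'' but that theorem runs in the opposite direction: permanence implies existence of an interior equilibrium (it is what the paper invokes for Proposition \ref{permSat}, not here). The paper's actual route is: since $W$ is nonincreasing and blows up as $y_i\to 0,\infty$ for $i\in\Omega_y$ or $Z_j\to 0,\infty$ for $j\in\Omega_z$, each trajectory stays in a compact set away from those faces; on the limiting system one integrates \eqref{invSys} to obtain the time-average identities (solving the resulting linear system via uniqueness of $\mathcal E^*$ in $\Gamma_\Omega$, as you indicate), which yields $\limsup_{t}y_i(t)\ge y_i^*$ and $\limsup_t Z_j(t)\ge Z_j^*$, i.e.\ weak uniform persistence; Corollary 4.8 of \cite{smith2011dynamical} then upgrades this to strong uniform persistence. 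So the averages come first and permanence second, not the other way around.

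Second, and more substantively, the global-stability clause for $|\{i\in\Omega_y:\Lambda_i\cap\Omega_z\ne\emptyset\}|=2$ is left as a plan, and the plan as stated (differentiate \eqref{linEq} once and invoke the antisymmetric block structure) does not close: one differentiation gives the single scalar identity $\sum_i\mathcal R_i\gamma_i y_i\sum_j a_{ij}(Z_j^*-Z_j)=0$, which by itself forces nothing, and $V$ is a genuine first integral of \eqref{invSys}, so no further Lyapunov decrease is available. The missing idea in the paper is to use the time-average identity to produce a time $t_1$ with $y_1(t_1)=y_1^*$; the constraint \eqref{linEq}, together with the pinning $y_i\equiv y_i^*$ of the inactive strains, then gives $y_2(t_1)=y_2^*$, hence $\dot Z_j(t_1)=0$ for all $j\in\Omega_z$, and differentiating \eqref{linEq} \emph{twice} at $t_1$ yields $0=\sum_{i=1}^{2}\mathcal R_i\gamma_i^2 y_i^*\bigl(\sum_{j\in\Omega_z}a_{ij}(Z_j^*-Z_j(t_1))\bigr)^2$, a sum of nonnegative terms that must each vanish; injectivity of $A'$ (from uniqueness, Proposition \ref{prop33}) then gives $Z(t_1)=Z^*$, collapsing the orbit in $\mathcal L$ to $\mathcal E^*$. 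Without locating $t_1$ and taking the second derivative, the neutral direction of the Volterra integral is not killed. Your $|\cdot|\le 1$ case is fine as sketched.
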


\begin{proof}
Consider the following Lyapunov function:
\begin{align*}
W(x,y,Z)&=x-x^*\ln\frac{x}{x^*} + \sum_{i=1}^m \frac{1}{\gamma_i}\left(y_i-y_i^*\ln\frac{y_i}{y_i^*} \right) + \sum_{j=1}^n \frac{\rho_j}{\sigma_j}\left(Z_j-Z_j^*\ln\frac{Z_j}{Z_j^*} \right) \\
&:=W_1+W_2+W_3,
\end{align*}
where the term with logarithm should be omitted if the corresponding coordinate in the particular equilibrium
is zero.  Then taking the time derivatives, we obtain
\begin{align*}
\dot W_1 &= 1- x -\frac{x^*}{x}+x^* + \sum_{i=1}^m \mathcal R_i y_i (x^*-x), \\
\dot W_2 &=  \sum_{i=1}^m \left(\mathcal R_i x-1-\sum_{j=1}^n a_{ij} Z_j \right) (y_i-y_i^*), \\
\dot W_3 &=  \sum_{j=1}^n \left(\sum_{i=1}^m a_{ij} y_i- \rho_j \right) \left(Z_j-Z_j^*\right)
\end{align*}
Thus
\begin{align*}
\dot W &= 1- x -\frac{x^*}{x}+x^* + \sum_{i=1}^m\left[ \mathcal R_i( y_i x^*-y_i^*x)-y_i+y_i^*\left(1+\sum_{j=1}^n a_{ij} Z_j  \right)\right]   \\
& \qquad \qquad \qquad \qquad \qquad \qquad -\sum_{j=1}^n\left[Z_j^*\sum_{i=1}^m a_{ij} y_i + \rho_j \left(Z_j-Z_j^*\right)\right]   \\
&= 1- x -\frac{x^*}{x}+x^* + \sum_{i=1}^m y_i \left( \mathcal R_i x^*-1 \right)+  \sum_{i\in\Omega_y} y_i^*\left(1+\sum_{j=1}^n a_{ij} Z_j -\mathcal R_i x \right) \\
& \qquad \qquad \qquad \qquad \qquad \qquad +   \sum_{j\in\Omega_z} Z_j^*\left(\rho_j-\sum_{i=1}^m a_{ij} y_i \right) -\sum_{j=1}^n \rho_j Z_j \\
&= 1- x -\frac{x^*}{x}+x^* + \sum_{i=1}^m y_i \left( \mathcal R_i x^*-1-\sum_{j\in\Omega_z} a_{ij} Z_j^* \right) +\sum_{j=1}^n Z_j\left(  \sum_{i\in\Omega_y} a_{ij} y_i^*-\rho_j \right)  \\
& \qquad \qquad \qquad \qquad \qquad \qquad+  \sum_{i\in\Omega_y} y_i^*\left(1 -\mathcal R_i x^*+ \sum_{j\in\Omega_z} a_{ij}  Z_j^* \right) +  \left(x^*-x \right)\sum_{i\in\Omega_y} \mathcal R_i y_i^*  \\
&= 1- x -\frac{x^*}{x}+x^* + \sum_{i=1}^m y_i \left( \mathcal R_i x^*-1-\sum_{j\in\Omega_z} a_{ij} Z_j^* \right) +\sum_{j=1}^n Z_j\left(  \sum_{i\in\Omega_y} a_{ij} y_i^*-\rho_j \right)  \\
& \qquad \qquad \qquad \qquad \qquad \qquad  +  \sum_{i\in\Omega_y} y_i^*\left(1 -\mathcal R_i x^*+ \sum_{j\in\Omega_z} a_{ij}  Z_j^* \right) +  \left(x^*-x \right)\frac{1-x^*}{x^*}  \\
&= -\frac{1}{x^*x}\left(x-x^*\right)^2 + \sum_{i\notin\Omega_y} y_i\left( \mathcal R_i x^*-1-\sum_{j\in\Omega_z} a_{ij} Z_j^* \right) +\sum_{j\notin\Omega_z} Z_j\left(  \sum_{i\in\Omega_y} a_{ij} y_i^*-\rho_j \right),
\end{align*}
where we make use of equilibrium conditions (\ref{genEqcon}) in the final line.  By the assumed inequalities (\ref{inequ}), we obtain that $\dot W \leq 0$, and thus $\dot W$ is a Lyapunov function at the equilibrium $\mathcal E^*$.  Noting that $\mathcal E^*$ is the unique minimizer of $W$, we obtain that $\mathcal E^*$ is (locally) stable.  Additionally, since $\dot W\leq 0$ and $W \rightarrow \infty$ as $y_i$, $z_j$ goes to $0$ or $\infty$ for $i\in\Omega_y, j\in\Omega_z$, we find that for any solution there exists $p,P>0$ such that $p\leq y_i,z_j\leq P$ for $i\in\Omega_y, j\in\Omega_z$.   Applying La Salle's Invariance principle (See e.g. Thm 2.6.1 in \cite{hofbauer1998evolutionary}), the $\omega$-limit set corresponding to any solution of (\ref{ode3}) with positive initial conditions is contained in the largest invariant set, $\mathcal L$, where $\dot W =0$.  Clearly $\dot W = 0 \Rightarrow x=x^*$, thus $x=x^*$ in $\mathcal L$.  This implies that $x(t)\rightarrow x^*$ as $t\rightarrow\infty$ for all solutions with positive initial conditions.  Furthermore if the inequalities (\ref{inequ}) are strict, then $y_i=0,z_j=0$ in $\mathcal L$ for $i\notin\Omega_y, j\notin\Omega_z$.  This implies that omega limit sets corresponding to positive initial conditions are contained in invariant orbits satisfying (\ref{invSys}) and (\ref{linEq}).  The differential equations in \eqref{invSys} can be integrated  to obtain asymptotic averages as follows:
 \begin{align}
&\frac{1}{ t\gamma_i}\ln y_i(t)=\frac{1}{ t}\int\limits_0^t \frac{\dot y_i}{\gamma_i y_i}ds = \frac{1}{ t}\int\limits_0^t\left[\sum_{j\in\Omega_z}a_{ij} \left( Z_j^*-Z_j \right)\right] ds \label{yiLog} \\
& \frac{1}{ t}\frac{\rho_j}{\sigma_j}\ln Z_j(t) = \frac{1}{ t}\int\limits_0^t \frac{\rho_j}{\sigma_j}\frac{\dot Z_j }{Z_j}dt= \frac{1}{ t}\int\limits_0^t \left(\sum_{i\in\Omega_y} a_{ij} \left( y_i - y_i^*\right) \right) ds \label{zjLog}
\end{align} Since $\mathcal E^*$ is the unique equilibrium in $\Gamma_{\Omega}$, utilizing a similar argument as \cite{hofbauer1998evolutionary}, we find that for each $i\in\Omega_y, j\in\Omega_z$, solutions $y_i,Z_j$ in the invariant set $\mathcal L$ satisfy:
\begin{align}
\lim_{t\rightarrow\infty} \frac{1}{t} \int\limits_0^t y_i(s) \, ds = y_i^*, \quad \lim_{t\rightarrow\infty} \frac{1}{t} \int\limits_0^t Z_j(s) \, ds = Z_j^* . \label{asymAvg}
\end{align}
  For any solution:
 $$\limsup_{t\rightarrow \infty}y_i(t) \geq y_i^* \quad \text{and} \quad \limsup_{t\rightarrow \infty}Z_j(t) \geq Z_j^*.$$
Therefore, $y_i$ and $Z_j$, $i\in\Omega_y, j\in\Omega_z$, are uniformly weakly persistent.  The key compactness hypotheses of Corollary 4.8 from \cite{smith2011dynamical}
are satisfied, and thus weak uniform persistence implies strong
uniform persistence for these variants.   In addition, note that if $i\in \Omega_y$ and $a_{ij}=0 \ \forall j\in \Omega_z$, i.e. $\Lambda_i\cap \Omega_z=\emptyset$, then $\dot y_i=0$ and therefore $y_i= y_i^*$ for this component on the invariant set $\mathcal L$ by (\ref{asymAvg}).

 Finally, we show that if $|\left\{ i\in\Omega_y: \Lambda_i\cap \Omega_z\neq \emptyset \right\}|\leq 2$, then $\mathcal E^*$ is globally asymptotically stable.   Without loss of generality suppose that $\left\{ i\in\Omega_y: \Lambda_i\cap \Omega_z\neq \emptyset \right\}=\left\{1, 2\right\}$.  Then $\mathcal R_1y_1+\mathcal R_2y_2=\mathcal R_1y_1^*+\mathcal R_2y_2^*$ by (\ref{linEq}) since $y_i=y_i^* \ i>2$ on the invariant set $\mathcal L$ for any other strains where $\Lambda_i\cap \Omega_z=\emptyset$.  Now clearly there exists time $t_1$ such that $y_1(t_1)=y_1^*$ by (\ref{asymAvg}), and this implies that $y_2(t_1)=y_2^*$ by the previous sentence.  Without loss of generality assume $t_1=0$.  Differentiating (\ref{linEq}) twice and evaluating at time $t=0$, we obtain:
\begin{align*}
0&=\mathcal R_1 \ddot y_1 +\mathcal R_2 \ddot y_2 \\
&= \sum_{i=1}^2\mathcal R_i\gamma_i\left[ \gamma_i \left(\sum_{j\in\Omega_z}\left( Z_j^*-Z_j(0) \right)\right)^2 -  y_i(0)\left(\sum_{j\in\Omega_z}a_{ij}\frac{\sigma_j}{\rho_j}Z_j\sum_{k\in\Omega_y} a_{kj}\left(y_k(0)-y_k^*\right) \right)\right] \\
0&=  \sum_{i=1}^2\mathcal R_i\gamma_i^2\left(\sum_{j\in\Omega_z}\left( Z_j^*-Z_j(0) \right)\right)^2,
 \end{align*}
where the last equality comes from the fact that $y_k(0)=y_k^*$ for $k=1,2$, and $a_{kj}=0$ for $k>2$.  The only way the above equation can be satisfied is if $Z_j(0)=Z_j^*$ for all $j$.  Then $y_i(0)=y_i^*, \ Z_j(0)=Z_j^*$, $i\in\Omega_y, j\in\Omega_z$.  Thus $\mathcal E^*$ is globally asymptotically stable in this case.

\end{proof}

More general results can be obtained in special cases, in particular, the inequalities (\ref{inequ}) need not be strict for persistence results in certain cases discussed in Section \ref{Sec5}.  However, the global convergence to persistent variants is still an open question when there are more than two persistent immune responses.  Note that the proof for global stability of ``strictly saturated'' equilibria $\mathcal E^*$ with $|\left\{ i\in\Omega_y: \Lambda_i\cap \Omega_z\neq \emptyset \right\}|\leq 2$ does not extend to higher dimensional equilibria.  Our numerical simulations that we have conducted support the global stability of $\mathcal E^*$ in general.  We conjecture the following:
\begin{conjecture} \label{conj}
An equilibrium $\mathcal E^*$ of (\ref{ode3}), which is unique in its positivity class $\Gamma_{\Omega}$ and satisfies inequalities (\ref{genEqcon}) strictly, is globally asymptotically stable for positive initial conditions (regardless of the dimension of $\Gamma_{\Omega}$).
\end{conjecture}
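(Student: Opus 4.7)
The natural starting point is to use Theorem \ref{genThm} to reduce the question to the dynamics on the invariant set $\mathcal{L}$. By that theorem, for any positive initial condition the omega-limit set is contained in invariant trajectories satisfying $x \equiv x^*$, $y_i \equiv 0$ for $i \notin \Omega_y$, $Z_j \equiv 0$ for $j \notin \Omega_z$, together with the linear constraint (\ref{linEq}) and the conservative Lotka--Volterra system (\ref{invSys}). The entire content of the conjecture is then that the only such trajectory is the equilibrium $\mathcal{E}^*$ itself, i.e.\ the reduced system on $\mathcal{L}$ admits no non-trivial recurrent motion.

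The system (\ref{invSys}) has the familiar skew-symmetric predator--prey block structure, and carries the classical Volterra first integral
$$H(y,Z) = \sum_{i \in \Omega_y} \tfrac{1}{\gamma_i}\bigl(y_i - y_i^* \ln y_i\bigr) + \sum_{j \in \Omega_z} \tfrac{\rho_j}{\sigma_j}\bigl(Z_j - Z_j^* \ln Z_j\bigr),$$
so each orbit is confined to a level set of $H$; combined with (\ref{linEq}), orbits lie in the intersection $\{H=h_0\}\cap\{\sum_{i\in\Omega_y}\mathcal{R}_i y_i = C_0\}$. I would first attempt to construct a strict Lyapunov function adapted to this intersection, the most natural candidate being $H+\phi(\sum \mathcal{R}_i y_i)$ with $\phi$ chosen so that the time derivative along (\ref{invSys}), which vanishes on $H$, picks up a sign-definite term through the constraint. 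A complementary strategy is to exploit the asymptotic-average identity from Theorem \ref{genThm}: if the omega-limit set were a non-trivial recurrent orbit, averaging $\dot y_i/y_i$ and $\dot Z_j/Z_j$ over large times against (\ref{linEq}) and against $H=h_0$ would yield algebraic relations on the orbit which, combined with uniqueness of $\mathcal{E}^*$ in $\Gamma_\Omega$, should force the orbit to collapse. A third angle, closer to the Hofbauer--Sigmund framework in \cite{hofbauer1998evolutionary}, is to verify a $P$-matrix or VL-stability property of the block $-B'$ in (\ref{LVequil}) after projection onto the tangent space of the constraint manifold; the general result would then yield global stability of $\mathcal{E}^*$.

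The main obstacle is exactly what blocked the higher-dimensional case in the proof of Theorem \ref{genThm}. For two persistent strains the scalar constraint (\ref{linEq}) is an equation in two unknowns, so $y_1=y_1^*$ pins $y_2=y_2^*$; differentiating twice then forces all $Z_j=Z_j^*$. For three or more persistent strains sharing epitopes, (\ref{linEq}) is one equation in at least three unknowns and no longer pins the $y_i$'s down at a single instant, so one cannot read $Z_j=Z_j^*$ off a higher derivative. Generically the linearization of (\ref{invSys}) at $\mathcal{E}^*$ has purely imaginary spectrum, and resonant closed orbits compatible with the linear constraint are not ruled out by a local argument. I expect the critical step is to show that $B'$, restricted to the $(m'+n'-1)$-dimensional tangent space of $\{x=x^*\}\cap\{\sum\mathcal{R}_i y_i = C_0\}$, is in fact dissipative in a strict sense whenever $\mathcal{E}^*$ is unique in $\Gamma_\Omega$; uniqueness translates, through Proposition \ref{prop33}, to the spectral condition $\operatorname{Ker}(A')^T\cap\vec{\mathcal{R}}'^{\perp}=\{0\}$ and $\operatorname{Ker}(A')=\{0\}$, which looks like the right amount of rigidity. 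Lacking a complete argument, I would first test the conjecture numerically for $n=3$ persistent strains on several non-nested subnetworks, hoping either to detect a refined invariant pointing to the correct Lyapunov function or, conversely, to produce a counterexample that would sharpen the hypotheses of the conjecture.
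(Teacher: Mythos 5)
The statement you are addressing is explicitly a \emph{conjecture} in the paper: the authors state that global convergence when there are more than two persistent immune responses ``is still an open question,'' and they offer only numerical evidence. So there is no paper proof to compare against, and your proposal does not close the gap either --- by your own admission you are ``lacking a complete argument.'' What you have written is an accurate diagnosis of the difficulty (the reduction via Theorem \ref{genThm} to the conservative system \eqref{invSys} on $\mathcal L$, and the failure of the two-strain pinning argument once \eqref{linEq} is a single equation in three or more unknowns) together with three candidate strategies, none of which is carried out. As it stands this is a research plan, not a proof.

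Moreover, each of the three strategies has a concrete obstruction you should confront. (1) On the invariant set $\mathcal L$ both $H$ and $\sum_{i\in\Omega_y}\mathcal R_i y_i$ are first integrals of \eqref{invSys}, so for \emph{any} $\phi$ the function $H+\phi\bigl(\sum\mathcal R_i y_i\bigr)$ has identically zero time derivative along \eqref{invSys}; it can never be a strict Lyapunov function there. To get a sign-definite term you would have to work on the full system, where the candidate degenerates back to the $W$ of Theorem \ref{genThm}, whose derivative already vanishes on $\mathcal L$. (2) The asymptotic-average identities cannot rule out non-trivial recurrence: a periodic orbit of a conservative Lotka--Volterra system has time averages exactly equal to the interior equilibrium, so averaging is blind to precisely the behavior you need to exclude. (3) The Goh/VL-stability route fails for structural reasons the paper itself points out: $B'$ in \eqref{LVequil} has the block skew-symmetric form $\begin{pmatrix} 0 & -A' \\ (A')^T & 0\end{pmatrix}$ with no intraspecific terms, so $DB'+(B')^TD$ is never negative definite for any positive diagonal $D$, and it is not evident that restricting to the tangent space of the constraint manifold changes this. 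The real content of the conjecture --- showing that the codimension-one constraint \eqref{linEq} destroys all the closed orbits of the conservative system \eqref{invSys} when $|\Omega_y|\ge 3$ --- remains untouched by your outline.
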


We briefly discuss Theorem \ref{genThm} in the context of Lotka-Volterra (L-V) systems.  L-V systems take the general form
\begin{align}
\frac{dv_i}{dt}=v_i\left(r_i+b_{ij}v_j\right) \quad i=1,\dots,N . \label{LV}
\end{align}
The multi-trophic version of model \eqref{LV} with $m$ prey and $n$ predators may be set up with similar prey-predator interaction rates as our model (see \eqref{LVequil}).  The main difference between our chemostat-type system \eqref{ode3} and the L-V model is the explicit inclusion of the resource, $x$, modulating competition in the ``prey'' species (virus strains) in \eqref{ode3}, as opposed to possible logistic competition terms, $b_{ij}<0, \ 1\leq i,j\leq m$  in L-V models.  Theorem \ref{genThm} reduces the dynamics of our model \eqref{ode3} (in the case of a ``$\Gamma_{\Omega}$-unique'' strictly saturated equilibrium) to the invariant L-V system \eqref{invSys} with $|\Omega_y|$ prey and $|\Omega_z|$ predators subject to the constraint \eqref{linEq} on prey.   Global stability of equilibria in L-V systems have been studied extensively (see e.g. \cite{takeuchi1996global}).  For instance, global stability of distinct equilibria was proved in the case of $m=n$ ($N=2n$ in \eqref{LV}) and uniform competition coefficient for prey ($b_{ij}=b \ 1\leq i,j\leq n$) for a perfectly nested network \cite{korytowski2017persistence}. For general LV systems, Goh proved that a strictly saturated equilibrium $\mathbf v^*$ is globally asymptotically stable for system \eqref{LV} if there exists a positive diagonal matrix $D$ such that $DB+B^TD$ is negative definite (except possibly at equilibrium $\mathbf v^*$) \cite{Goh}.   Unfortunately none of these global stability results on L-V models apply to the invariant system \eqref{invSys}, in part due to the lack of intraspecific competition coefficients in \eqref{invSys}.  Indeed the simple one predator-one prey L-V planar system of the form \eqref{invSys} displays oscillatory solutions in absence of the additional constraint \eqref{linEq}.  The constraint \eqref{linEq} seems to induce saturated equilibria to be attractors, leading to our conjecture (Conjecture \ref{conj}) stating that in Theorem \ref{genThm}, variants $y_i,z_j \ i\in\Omega_y,j\in\Omega_z$ are not only uniformly persistent but also globally converge to their equilibrium values.


\section{Special Cases of Multi-Epitope Model}\label{Sec5}
In this section we consider the ``binary sequence'' case of model \eqref{ode3}, which was introduced in Section \ref{BMcase}.  Recall that the assumption \eqref{assumeaj} leads to the rescaled and simplified system \eqref{odeS}.  Biologically, we are considering the situation where $n$ immune response populations $z_j$ each target the corresponding epitope $j$ in the virus strains at a rate solely dependent on the allele type of epitope $j$; (0) wild-type or (1) mutated form conferring full resistance to $z_j$.  The avidity of immune response $z_j$ and (wild-type) epitope $j$ is described by the immune reproduction number $\mathcal I_j$ given by \eqref{IR0}.
The resulting model \eqref{odeS} has $m=2^n$ potential viral strains distinguished by their (basic) reproduction number, $\mathcal R_i$, and epitope set, $\Lambda_i$.    Recall that in model \eqref{odeS} the viral strains can be viewed in a mutational pathway network, where each strain, $y_i \ i=1,\dots,2^n$, is a vertex in the \emph{hypercube graph}, $Q_n$, corresponding to the strain's epitope set, $\Lambda_i$, represented as a binary sequence of length $n$, $\mathbf i=(i_1,\dots,i_n)$.  See Section \ref{BMcase} for relevant explanation and definitions.

We can establish restrictions on the positivity class of feasible equilibria in model (\ref{odeS}) based on graph-theoretic considerations of the viral strains viewed as vertices in the hypercube graph.  Indeed, we will show that equilibria with persistent viral strains forming a \emph{cycle} of order $2^j$ can only occur in the degenerate cases.  A \emph{(simple) cycle} is defined as a closed path in the hypercube graph from containing no other repetition of vertices other than the starting and ending vertex.  
 For $2\leq j\leq n$, there are $2^{n-j}$ disjoint $2^{j}$-cycles which cover the vertices of the hypercube graph $Q_n$.  In particular, it is well-known that there is a \emph{Hamiltonian} cycle covering $Q_n$, i.e. a cycle that includes every edge in the graph.
  The following proposition concerns feasibility of equilibria with cycles in the viral hypercube graph (here the important aspect is the set of vertices on the cycle).
\begin{proposition}\label{CycleProp}
 For $n\geq 2$, consider model (\ref{odeS}).  Let $2\leq j\leq n$ and $\ell=2^{j}$.  Suppose there is a simple $\ell$-cycle in the representative hypercube graph $Q_n$: $y_{k_1}\leftrightarrow y_{k_2}\leftrightarrow \dots\leftrightarrow y_{k_{\ell}}\leftrightarrow y_{k_1}$.  If $\sum_{i=1}^{\ell} (-1)^i \mathcal R_{k_i}=\sum_{i=1}^{\ell} (-1)^{d(y_{k_i},y_w)} \mathcal R_{k_i} \neq 0$, then there does not exist an equilibrium with $y^*_{k_1},y^*_{k_2}, \dots, y^*_{k_{\ell}}>0$.
\end{proposition}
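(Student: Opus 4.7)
The plan is to argue by contradiction using an alternating sum of the equilibrium equations around the cycle. Suppose $\mathcal{E}^* = (x^*, y^*, z^*)$ is an equilibrium of \eqref{odeS} with $y_{k_i}^* > 0$ for $i = 1, \ldots, \ell$, and let $\Omega_z = \{j : z_j^* > 0\}$. From \eqref{genEqcon} specialized to \eqref{odeS} using that $a_{ij} = a_j$ when $j \in \Lambda_{k_i}$ and zero otherwise (and setting $z_j = a_j Z_j$), each cycle strain satisfies $\mathcal{R}_{k_i} x^* - 1 = \sum_{j \in \Omega_z \cap \Lambda_{k_i}} z_j^*$. Multiplying the $i$-th equation by $(-1)^i$ and summing, the constant terms cancel (since $\ell = 2^j$ is even so $\sum_i (-1)^i = 0$) and an interchange of summation gives
\[
\sum_{j \in \Omega_z} z_j^*\, \sigma_j \;=\; x^* \sum_{i=1}^{\ell} (-1)^i \mathcal{R}_{k_i}, \qquad \sigma_j := \sum_{i \,:\, j \in \Lambda_{k_i}} (-1)^i.
\]
If every $\sigma_j$ vanishes then the left-hand side is zero and, because $x^* > 0$, the alternating sum of the $\mathcal{R}_{k_i}$'s must vanish, producing the desired contradiction.

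The crux is thus the combinatorial identity $\sigma_j = 0$ for each bit $j$. Since $j \in \Lambda_{k_i}$ iff the $j$-th bit $k_{i,j}$ of $\mathbf{k}_i$ equals $0$, and $\sum_i (-1)^i = 0$, this is equivalent to $\sum_{i=1}^{\ell} (-1)^i k_{i,j} = 0$ for every $j \in \{1, \ldots, n\}$, and I expect this to be the main obstacle. My approach is to exploit the structure of a simple $2^j$-cycle in $Q_n$: the half-shift $i \mapsto i + \ell/2$ typically induces an $\mathbb{F}_2^n$-translation $\mathbf{k}_{i + \ell/2} = \mathbf{k}_i \oplus \mathbf{d}$ for some fixed $\mathbf{d}$, and one then proceeds by induction on $j \geq 2$. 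The base case $\ell = 4$ is immediate: any $4$-cycle lies in a $2$-dimensional face of the hypercube, and $\sum_{i=1}^{4} (-1)^i k_{i,j}$ can be read off directly. For the inductive step, pairing $i$ with $i + \ell/2$ (with $\ell/2 = 2^{j-1}$ even for $j \geq 2$) cancels the paired contributions whenever $d_j = 1$, while the residual case $d_j = 0$ reduces to an alternating sum over a half-cycle that is handled by the induction hypothesis.

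Granting the combinatorial lemma, $\sigma_j = 0$ for all $j$ forces the left side of the displayed identity to be zero, whence $\sum_{i=1}^{\ell} (-1)^i \mathcal{R}_{k_i} = 0$, contradicting the hypothesis. The second equality in the hypothesis, $\sum_i (-1)^i \mathcal{R}_{k_i} = \sum_i (-1)^{d(y_{k_i}, y_w)} \mathcal{R}_{k_i}$, merely fixes a parity convention for the starting vertex of the cycle: the two sums differ by at most a sign since $Q_n$ is bipartite and Hamming-distance parity alternates along the cycle, so either form of nonvanishing alternating sum drives the same contradiction.
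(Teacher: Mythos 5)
Your reduction of the proposition to the combinatorial identity $\sigma_j=\sum_{i:\,j\in\Lambda_{k_i}}(-1)^i=0$ is exactly the paper's: both arguments take the alternating sum of the per-strain equilibrium relations $\mathcal R_{k_i}x^*-1-\sum_{j\in\Lambda_{k_i}}z_j^*=0$ around the cycle, use that the constants cancel because $\ell$ is even, and arrive at $x^*\sum_i(-1)^i\mathcal R_{k_i}=\sum_j\sigma_j z_j^*$. Your closing remark that $(-1)^i$ and $(-1)^{d(y_{k_i},y_w)}$ agree up to a single global sign along the cycle (bipartiteness of $Q_n$) is also correct.

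The gap is in your proof of the combinatorial lemma, and as you have set it up it cannot be repaired for arbitrary simple cycles. The half-shift translation $\mathbf{k}_{i+\ell/2}=\mathbf{k}_i\oplus\mathbf{d}$ on which your induction rests does not hold for a general simple $2^j$-cycle (your hedge ``typically induces'' is the symptom), and in fact the identity $\sigma_j=0$ itself is \emph{false} for some simple $2^j$-cycles once $j\ge 3$. Concretely, the $8$-cycle in $Q_4$ through $0000,0001,0011,0010,0110,0100,1100,1000$ (in that order) is simple, yet the alternating sum over the strains susceptible at the second epitope is $(-1)^1+(-1)^2+(-1)^3+(-1)^4+(-1)^8=1\neq 0$, and likewise for the third epitope; the displayed identity then retains nonvanishing $z^*$ terms and yields no contradiction. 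What makes the computation close in the paper is an additional structural claim, asserted there without proof: that the vertex set of the $2^j$-cycle is an embedded $j$-dimensional subcube of $Q_n$ (the other $n-j$ coordinates constant along the cycle). Under that hypothesis one groups the $2^j$ vertices by Hamming weight $k$ in the $j$ free coordinates, counts the $\binom{j-1}{k}$ of them susceptible at a given epitope, and gets $\sigma_j=\pm\sum_{k=0}^{j-1}(-1)^k\binom{j-1}{k}=0$. This subcube property is automatic for $\ell=4$ (which is why your base case is fine) but is not a consequence of simplicity for $\ell\ge 8$, as the example shows. To complete your argument you must either impose the subcube hypothesis on the cycle's vertex set (as the paper's proof implicitly does, e.g.\ by restricting to the covering cycles it constructs just before the proposition) or replace the induction with the binomial-coefficient count on an embedded $Q_j$.
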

\begin{proof}
Fix $2\leq j\leq n$ and let $\ell=2^{j}$.  Without loss of generality, consider an $\ell$-cycle with strains labeled as $y_1\leftrightarrow \dots\leftrightarrow y_{\ell}\leftrightarrow y_1$.  The simple cycle of length $\ell$ can be seen as an embedded hypercube graph on $2^j$ vertices, $Q_j$, corresponding to binary strings where $n-j$ slots are fixed to be all zeros or all ones.  Without loss of generality suppose that these fixed slots are indices $[j+1,n]$ corresponding to $z_{j+1},\dots z_n$, i.e. either $[j+1,n]\subseteq \Lambda_i$ or $[j+1,n]\cap \Lambda_i=\emptyset$ for all $i=1,\dots,\ell$.  The viral strains can be grouped into the classes $\mathcal C_k$, $k=0,\dots j$, where $k$ is the number of mutated alleles (1) appearing in the epitope set $[1,j]$, i.e. $y_i\in \mathcal C_k$ if $k=j-|\Lambda_i\cap [1,j]|$.    Notice that in $\mathcal C_k$ there are exactly $\binom{j-1}{k}$ viral strains, $y_i$, with epitope 1 being wild-type allele (0), i.e. $1\in\Lambda_i$.  The analogous statement holds for any epitope in $[1,j]$.  Thus, for equilibria of \eqref{odeS}, we find that
\begin{align*}
0&=\sum_{i=1}^{\ell} (-1)^{i} \frac{\dot y_i}{\gamma_i y_i^*} \\ 
&=x^*\sum_{i=1}^{\ell} (-1)^{i} \mathcal R_i -\sum_{i=1}^{\ell} (-1)^{i}- \sum_{k=j+1}^{n}z_k^*\sum_{i=1}^{\ell} (-1)^{i}  - \sum_{k=1}^{j}z_k^* \sum_{i=0}^{j-1} \binom{j-1}{i} (-1)^i   \\
&= x^*\sum_{i=1}^{\ell} (-1)^{i} \mathcal R_i .
\end{align*}
Note here that $\sum_{i=1}^{\ell} (-1)^i \mathcal R_{i}=\sum_{i=1}^{\ell} (-1)^{d(y_i,y_w)} \mathcal R_{i}$ since we must traverse the simple cycle through ``distance-one'' single epitope mutations.
\end{proof}

A couple remarks about the above proposition are in order.  First, we highlight the of case $j=2$ of Proposition \ref{CycleProp}, which states the generic non-existence of equilibria with persistent viral strains forming a 4-cycle in the associated hypercube graph.  In particular, if viral strains $y_{k_1},\dots,y_{k_4}$ form a cycle, then there exists an equilibrium $\mathcal E^*$ with $y_{k_1}^*,\dots,y_{k_4}^*>0$ only if $\sum_{i=1}^{4} (-1)^{p_i} \mathcal R_{k_i}=0$ where $p_i=d(y_{k_i},y_w)$.  In Section \ref{TwoEpitope}, the degeneracy of ``four-cycle equilibria'' will be detailed further for the case of $n=2$ epitopes.   In general, for $n\geq 2$, there are $2^{n-2}$ disjoint 4-cycles which cover the vertices of $Q_n$, and their unions form larger cycles with order as powers of two.  Each of these cycles of order $2^j, \ j\geq 2,$ can be seen as an embedded $j$-dimensional hypercube.  Thus the proposition establishes the degeneracy of any equilibrium with positive viral components forming an embedded hypercube subgraph in the associated hypercube $Q_n$, which suggests that these combinations of viral strains generally do not persist together in model (\ref{odeS}).

In following subsections, we analyze a few special cases of the multi-epitope model \eqref{odeS} where stable equilibria can be sharply characterized by quantities derived from the parameters using Theorem \ref{genThm}.  We assume without loss of generality that immune responses $z_1,\dots, z_n$ are ordered according to an immunodominance hierarchy:
\begin{align}\label{immunodom}
s_1\leq s_2 \leq \dots \leq s_n, \ \ \text{i.e.}  \ \  \mathcal I_1\geq \mathcal I_2 \geq \dots \geq \mathcal I_n.
\end{align}
Section \ref{NestedSec} summarizes previously published analysis of model \eqref{odeS} when the network is constrained to be perfectly nested, and Section \ref{ssSec} contains prior results, along with a new proposition, about the strain-specific (one-to-one) network.  These cases provide nice applications of our general Theorem \ref{genThm}, and also are important for the analysis of the full network.  We consider the full network with $m=2^n$ virus strains in section \ref{TwoEpitope} for $n=2$ epitopes, and in section \ref{EqFitSec} for arbitrary $n$ in the special case of equal fitness costs for each epitope mutation.

\subsection{Nested Network} \label{NestedSec}
 While the virus-immune epitope interaction network generally can be quite complex, patterns of viral escape and dynamic immunodominance hierarchies often emerge.  In observations of HIV infection, the initial CTL response occurs at a few immunodominant epitopes and is followed by viral mutations at these epitopes conferring resistance, along with a fall in these specific CTLs and rise in subdominant CTLs  \cite{liu2013vertical}.  This pattern continues, albeit at diminishing rates as time proceeds, resulting in viral strains with resistance at multiple epitopes and corresponding fitness costs, along with subdominant CTLs of increasing breadth.  An idealized description of this process is a \emph{perfectly nested network}, where resistance to multiple epitopes is built sequentially according to the immunodominance hierarchy.   Nested networks have been of recent interest in explaining the biodiversity and structure of bacteria-phage communities \cite{jover2013mechanisms,korytowski2015nested,weitz2013phage}, and there is some evidence that nestedness is a feature of HIV-CTL dynamics \cite{kessinger2015inferring,liu2013vertical,vanDeutekom}.   Below we summarize recent work \cite{browne2016global}, characterizing the stability of equilibria and uniform persistence or extinction of the populations for system (\ref{odeS}) in the case of a perfectly nested network.

 The perfectly nested network consists of $n$ epitope specific CTLs, $z_1,\dots,z_n$, and $m=n+1$ virus strains $y_1,\dots,y_{n+1}$ where the epitope set of $y_i$ is $\Lambda_i=\left\{ i,\dots, n \right\}$ (having escaped immune responses $z_1,\dots,z_{i-1}$). The equations are
  \begin{align}
\dot x &= 1-x- x\sum_{i=1}^n \mathcal R_i y_i, \quad \dot y_i = \gamma_i y_i\left(\mathcal R_i x -1 -\sum_{j\ge i}  z_j \right), \quad \dot y_{n+1} = \gamma_{n+1} y_{n+1}\left(\mathcal R_{n+1} x -1\right), \notag \\
& \qquad\qquad  \dot z_i = \frac{\sigma_i}{s_i} z_i\left(\sum_{j\le i} y_j - s_i \right), \quad \text{where}\ i=1,\dots,n.  \label{odeNN}
\end{align}

 As before we assume a fitness cost for each mutation, and here we also assume a strict immunodominance:  $$ \mathcal R_1>\mathcal R_2>\dots > \mathcal R_{n+1} \quad \text{and} \quad \mathcal I_1>\mathcal I_2>\dots > \mathcal I_n .$$
 Out of a multitude of non-negative equilibria ($>2^n$), there are $2n+2$ which can be stable, and their stability depends upon quantities derived from parameters as follows.  For $k\geq 1$ define:
 \begin{align}
 \mathcal Q_{k}= \mathcal Q_{k-1}+ (s_{k}-s_{k -1})\mathcal R_{k}, \quad\text{where} \quad  \mathcal Q_0=1, s_0=0, s_k=1/\mathcal I_k. \label{nestedQ}
  \end{align}
Then, for each $k\in[0,n]$, define the following equilibria:
 \begin{align}
 \widetilde{\mathcal E}_{k+1}   = (\widetilde x,\widetilde y,\widetilde z),  \qquad  & \widetilde x=\frac{1}{\mathcal R_{k+1}}, \ \widetilde y_i=s_i-s_{i-1}, \ \widetilde z_i=\frac{\mathcal R_i-\mathcal R_{i+1}}{\mathcal R_{k+1}} \ \  \ \text{for} \ \ 1\leq i\leq k, \label{equilib1} \\  & \widetilde y_{k+1}=1-\frac{ \mathcal Q_{k}}{\mathcal R_{k+1}}, \ \widetilde z_{k+1}=0, \quad \widetilde y_i=\widetilde z_i=0 \  \ \ \text{for} \ \ k+1<i\leq n \notag \\
 \bar{\mathcal E}_k = (\bar x,\bar y,\bar z),  \qquad &  \bar x=\frac{1}{\mathcal Q_k}, \ \bar y_i=s_i-s_{i-1}, \ \bar z_i=\frac{\mathcal R_i-\mathcal R_{i+1}}{\mathcal Q_k} \ \  \ \text{for} \ \ 1\leq i<k,  \label{equilib2}  \\ &  \bar y_k=s_k-s_{k-1}, \ \bar z_k=\frac{\mathcal R_k}{\mathcal Q_k}-1, \quad  \bar y_i=\bar z_i=0 \  \ \ \text{for} \ \ k<i\leq n, \ \bar y_{k+1}=0  \notag
  \end{align}
Equilibrium $\widetilde{\mathcal E}_{k+1}$ represents the appearance of escape mutant $y_{k+1}$ from equilibrium  $\bar{\mathcal E}_k$.

 The main result of \cite{browne2016global} is as follows:
\begin{theorem}[\cite{browne2016global}]\label{NestedThm}  If $\mathcal R_1> \mathcal Q_1$, let $k$ be the largest integer in $[1,n]$ such that $\mathcal R_k > \mathcal Q_k$, otherwise let $k=0$. If $\mathcal R_{k+1}\leq \mathcal Q_{k}$, then for $1\leq i\leq k$, $y_i,z_i$ are uniformly persistent (if $\mathcal R_{k+1}>\mathcal Q_k$, $y_{k+1}$ persists), and the other variants globally converge to zero, $x(t)\rightarrow 1/\mathcal Q_k$ (if $\mathcal R_{k+1}>\mathcal Q_k$, $x(t)\rightarrow 1/\mathcal R_{k+1}$ and $y_{k+1}(t)\rightarrow 1-\frac{ \mathcal Q_{k}}{\mathcal R_{k+1}}$) as $t\rightarrow\infty$.  Additionally, the corresponding equilibria ($\widetilde{\mathcal E}_{k+1}$ or $\overline{\mathcal E}_{k}$) are locally stable (globally asymptotically stable when $k=0,1,2$),  asymptotic averages converge to equilibria, and the global attractor satisfies (\ref{linEq}) and (\ref{invSys}).
\end{theorem}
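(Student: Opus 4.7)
The plan is to derive Theorem~\ref{NestedThm} as a direct application of the general framework of Theorem~\ref{genThm} to the nested subnetwork (\ref{odeNN}). First I would identify the candidate equilibrium according to the parameter regime: when $\mathcal R_{k+1}\le\mathcal Q_k$ the candidate is $\overline{\mathcal E}_k$, and when $\mathcal R_{k+1}>\mathcal Q_k$ (still compatible with the maximality of $k$, which only forces $\mathcal R_{k+1}\le\mathcal Q_{k+1}$) it is $\widetilde{\mathcal E}_{k+1}$. Direct substitution into the equilibrium equations (\ref{genEqcon}), using the upper-triangular structure of the relevant submatrix $A'$ (with $a_{ij}=1$ iff $i\le j$) together with the recursion (\ref{nestedQ}), would confirm that the expressions in (\ref{equilib1})--(\ref{equilib2}) solve the equilibrium system and lie in the claimed positivity class.

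The key calculation is then to verify strict saturation of the chosen equilibrium in the sense of (\ref{inequ}). For $\overline{\mathcal E}_k$ the missing immune responses $z_j$, $j>k$, yield $\sum_{i\le k}\bar y_i - s_j = s_k - s_j < 0$ by strict immunodominance, while the missing strains $y_i$, $i>k$, yield $\mathcal R_i/\mathcal Q_k - 1 \le \mathcal R_{k+1}/\mathcal Q_k - 1 \le 0$ by the fitness-cost ordering of the $\mathcal R_i$. For $\widetilde{\mathcal E}_{k+1}$ the most delicate check is for the absent $z_{k+1}$: the maximality of $k$ gives $\mathcal R_{k+1}\le\mathcal Q_{k+1}=\mathcal Q_k+(s_{k+1}-s_k)\mathcal R_{k+1}$, which rearranges to
\[
s_k + \left(1 - \frac{\mathcal Q_k}{\mathcal R_{k+1}}\right) \le s_{k+1},
\]
i.e., $\sum_{i\le k+1}\widetilde y_i \le s_{k+1}$. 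The main obstacle is bookkeeping which inequalities are strict---extinction in Theorem~\ref{genThm} requires strict saturation---and handling the knife-edge cases $\mathcal R_{k+1}=\mathcal Q_k$ or $\mathcal R_{k+1}=\mathcal Q_{k+1}$, where a perturbation or a direct Lyapunov argument would be needed.

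Uniqueness of the candidate within its positivity class then follows from Proposition~\ref{prop33}. For $\overline{\mathcal E}_k$ the $k\times k$ matrix $A'$ is upper triangular with unit entries, hence invertible, placing us in case~(i). For $\widetilde{\mathcal E}_{k+1}$ the $(k+1)\times k$ matrix $A'$ has a zero last row (since $\Lambda_{k+1}\cap\Omega_z=\emptyset$), so $\mathrm{Ker}(A')=\{0\}$ and $\mathrm{Ker}((A')^T)=\mathrm{span}(e_{k+1})$; the intersection with $\vec{\mathcal R}'^{\perp}$ is trivial because $e_{k+1}\cdot\vec{\mathcal R}'=\mathcal R_{k+1}\ne 0$, placing us in case~(ii). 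With strict saturation and uniqueness established, Theorem~\ref{genThm} immediately delivers local stability, $x(t)\to x^*$, extinction of the missing variants, uniform persistence of the surviving ones, convergence of asymptotic time-averages to the equilibrium values, and reduction to the invariant system (\ref{linEq})--(\ref{invSys}).

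For global asymptotic stability when $k\in\{0,1,2\}$ I would invoke the final clause of Theorem~\ref{genThm}: the viral strains with $\Lambda_i\cap\Omega_z\ne\emptyset$ are exactly those with $i\le k$, since the extra strain $y_{k+1}$ appearing in $\widetilde{\mathcal E}_{k+1}$ has epitope set $\{k+1,\dots,n\}$ disjoint from $\Omega_z=\{1,\dots,k\}$. Hence $|\{i\in\Omega_y:\Lambda_i\cap\Omega_z\ne\emptyset\}|\le k\le 2$, and Theorem~\ref{genThm} yields global asymptotic stability in these low-dimensional regimes. For $k\ge 3$ only local stability is established by this route; global convergence in that regime is the content of Conjecture~\ref{conj}.
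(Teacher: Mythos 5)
Your proposal is correct and follows exactly the route the paper indicates: it remarks only that "the proof can be obtained through application of Theorem \ref{genThm}," and your write-up carries out that application — identifying $\overline{\mathcal E}_k$ or $\widetilde{\mathcal E}_{k+1}$, verifying saturation (with the key rearrangement $\mathcal R_{k+1}\le\mathcal Q_{k+1}$ for the absent $z_{k+1}$), checking uniqueness via the triangular structure of $A'$ through Proposition \ref{prop33}, and invoking the $|\{i:\Lambda_i\cap\Omega_z\ne\emptyset\}|\le 2$ clause for global stability when $k\le 2$. Your flagging of the non-strict boundary cases ($\mathcal R_{k+1}=\mathcal Q_k$ or $=\mathcal Q_{k+1}$) as needing a separate asymptotic-average argument is also consistent with how the paper handles the analogous degeneracy in Appendix \ref{A2}.
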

Note that the proof can be obtained through application of Theorem \ref{genThm}.

 Theorem \ref{NestedThm} suggests a stable diverse set of viral strains and immune response which can be built up by the nested accumulation of epitope resistance and rise of subdominant CTLs.  The diversity achieved depends upon potential breadth $n$ and ``immune invasion'' number at epitope $k\leq n$, $\mathcal R_k/\mathcal Q_k$, which depends upon the strengths of CTL directed at the $k$ epitopes in immunodominance hierarchy and the viral fitness costs of $k$ sequential mutations, along with initial fitness $\mathcal R_1$.  Observe that since $\mathcal R_k$ decreases and $\mathcal Q_k$ increases with breadth $k$, Theorem \ref{NestedThm} implies exclusion of $y_{k+1}$ is more likely as the breadth increases.  Additionally, it is shown in  \cite{browne2016global} that the rate of $y_{k+1}$ invasion decreases as the breadth $k$ increases, which is consistent with several studies showing rate of HIV viral escape from CTL responses slows down after acute infection, along with relatively few escapes \cite{asquith2006inefficient,Vitaly1}.

 Overall, the analysis in the case of the nested network confirms some patterns of multi-epitope viral escape and reinforces the importance of strong immune responses directed at conserved epitopes (high fitness cost for resistance) in order to control HIV with CTL response.  However, constraining multi-epitope resistance to be built in a nested fashion leaves out other potential mutational pathways.  
 
 The question remains, with $n$ epitopes targeted by distinct immune responses, what are \emph{all} of the potential escape patterns and stable equilibria?  We will answer this question for the simplest case of $n=2$ epitopes in Section \ref{TwoEpitope}, but first we consider the ``strain-specific'' (or one-to-one) subnetwork.

\subsection{Strain-Specific network}\label{ssSec}
Consider the case where $m=n+1$ with $\Lambda_i=\left\{i\right\}$ and $\Lambda_{n+1}=\emptyset$ in model (\ref{odeS}), i.e.  in system (\ref{ode3}), $A$ is a $n+1\times n$ matrix comprised of the diagonal matrix ${\rm diag}\left(a_1,\dots, a_n\right)$ and a row of zeros. This particular assumption of a ``one-to-one'' interaction network, where each immune response population attacks a unique specific viral strain, has been considered in \cite{wolkowicz1989successful,korytowski2015nested,bobko2015singularly}.  In this case, model (\ref{odeS}) reduces to the following $n+1$-strain and $n$-immune variant model:
\begin{align}
\dot x &= 1-x- x\sum_{i=1}^n \mathcal R_i y_i, \quad \dot y_i = \gamma_i y_i\left(\mathcal R_i x -1 -z_i \right), \quad \dot y_{n+1} = \gamma_{n+1} y_{n+1}\left(\mathcal R_{n+1} x -1\right), \notag \\
& \qquad\qquad  \dot z_i = \frac{\sigma_i}{s_i} z_i\left(y_i -s_i \right), \quad \text{where} \ i=1,\dots,n.  \label{odeSS}
\end{align}
 Note that virus strain $y_{n+1}$ is immune to all immune cells.  First we summarize previous results on system \eqref{odeSS}, which all can be derived from our general analysis in Section \ref{Sec3}.  For these results, we impose the additional assumption that $\mathcal R_i$ are decreasing with $i$, along with our immundominance hierarchy (\ref{immunodom}), to avoid degeneracy,\begin{equation}\label{Rorder}
\mathcal R_1> \mathcal R_2 > \cdots > \mathcal R_{n+1}.
\end{equation}
 For each $k\in[1,n]$, an equilibrium $\mathcal E^*$ with persistent variant sets $\Omega_y=\Omega_z=[1,k]$, must have component $x^*=1/\mathcal P_k$ by equation \eqref{genEqcon} (or by Proposition \ref{prop33}), where:
 \begin{align}
 \mathcal P_{k}= \mathcal P_{k-1}+ s_{k}\mathcal R_{k}, \quad\text{where} \quad  \mathcal P_0=1, s_k=1/\mathcal I_k. \label{ssP}
  \end{align}
More generally we can define $\mathcal P_{\mathcal J}=1+\sum_{i\in\mathcal J}\mathcal R_is_i$ for any subset $\mathcal J \subseteq [1,n]$. Then for each $k\in[0,n]$, the following equilibria are found:
 \begin{align}
\mathcal E^{\ddagger}_{k+1}   = (x^{\ddagger},y^{\ddagger},z^{\ddagger}),  \qquad  & x^{\ddagger}=\frac{1}{\mathcal R_{k+1}}, \quad y_i^{\ddagger}=s_i, \quad z_i^{\ddagger}=\frac{\mathcal R_i}{\mathcal R_{k+1}}-1, \quad  i=1,\dots, k, \label{SSeq1} \\ & \quad y_{k+1}^{\ddagger}=1-\frac{\mathcal P_k}{\mathcal R_{k+1}}, \quad  z_{k+1}^{\ddagger}= 0, \quad y_i^{\ddagger}= z_i^{\ddagger}= 0, \quad i>k+1, \notag \\
\mathcal E^{\dagger}_{k}   = (x^{\dagger},y^{\dagger},z^{\dagger}),\qquad &
x^{\dagger}=\frac{1}{\mathcal P_k},  \quad y_i^{\dagger}=s_i,  \quad z_i^{\dagger}=\frac{\mathcal R_i}{\mathcal P_k}-1, \quad  i=1,\dots, k, \label{SSeq2} \\ & \quad  \quad y_i^{\dagger}= z_i^{\dagger}= 0, i\geq k+1,\quad y_i^{\dagger}= z_i^{\dagger}= 0, \quad i>k, \notag
  \end{align}

Let
$k\in [1,n+1]$ be maximal such that
$$
\mathcal R_k>1+\sum_{i=1}^{k-1}\mathcal R_is_i
$$
where the sum on the right vanishes if $k=1$. There are two cases depending on whether (for $k<n+1$)
$$
\mathcal R_k>1+\sum_{i=1}^{k}\mathcal R_is_i
$$
or not.  If the inequality holds (for $k<n+1$), then there is a unique saturated equilibrium, $\mathcal E^{\dagger}_k$, with $\Omega_z=\Omega_y=[1,k]$
and if it does not or $k=n+1$, then there is a unique saturated equilibrium, $\mathcal E^{\ddagger}_{k}$, with $\Omega_z=[1,k-1],\Omega_y=[1,k]$.  The inequalities (\ref{inequ}) are strict in either case so the system is $\Omega_{yz}$ permanent
and $y_i,z_j\to 0$ for all $i\notin \Omega_y,j\notin \Omega_z$, by Theorem \ref{genThm}.

In fact, much more can be said about the asymptotic behavior of solutions.  In particular, the other conclusions of Theorem \ref{genThm} imply that the components $y_i,z_j$ where $i\notin\Omega_y, j\notin\Omega_z$ converge to zero, the asymptotic means of the persistent variants converge to equilibria values, and the global attractor satisfies (\ref{invSys}) which consists of $k$ or $k-1$ distinct planar Lotka-Volterra predator-prey differential equations for $(y_i,z_i)$ constrained by the relation $\sum_i \mathcal R_i y_i = \sum_i \mathcal R_is_i$.  This fact was utilized in a chemostat model by Wolkowicz \cite{wolkowicz1989successful} to prove that a unique saturated equilibrium is globally asymptotically stable for $k=1,2$ (including when $\Omega_z=\left\{1,2\right\},\Omega_y=\left\{1,2,3\right\}$).  These results are summarized in the following theorem.

  \begin{theorem}[\cite{wolkowicz1989successful}]\label{SSThm}  If $\mathcal R_1> \mathcal P_1$, let $k$ be the largest integer in $[1,n]$ such that $\mathcal R_k > \mathcal P_k$, otherwise let $k=0$. If $\mathcal R_{k+1}\leq \mathcal P_{k}$, then for $1\leq i\leq k$, $y_i,z_i$ are uniformly persistent (if $\mathcal R_{k+1}>\mathcal P_k$, $y_{k+1}$ also persists), and the other variants globally converge to zero, $x(t)\rightarrow 1/\mathcal P_k$ (if $\mathcal R_{k+1}>\mathcal P_k$, $x(t)\rightarrow 1/\mathcal P_{k+1}$ and $y_{k+1}(t)\rightarrow 1-\frac{ \mathcal P_{k}}{\mathcal R_{k+1}}$) as $t\rightarrow\infty$.  Additionally, the corresponding equilibria ($\mathcal E^{\dagger}_{k} $ or $\mathcal E^{\ddagger}_{k+1}$) are locally stable (globally asymptotically stable when $k=0,1,2$),  asymptotic averages converge to equilibria, and the global attractor satisfies (\ref{linEq}) and (\ref{invSys}).
\end{theorem}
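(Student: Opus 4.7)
The plan is to derive Theorem \ref{SSThm} by a direct application of the general Theorem \ref{genThm} to the strain-specific system \eqref{odeSS}, viewed as a special case of \eqref{ode3} in which the interaction matrix $A$ is (up to a zero row for the fully resistant strain $y_{n+1}$) diagonal. This diagonal structure will make both the algebraic conditions of Proposition \ref{prop33} and the saturation inequalities \eqref{inequ} straightforward to verify.

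First I would select the candidate equilibrium based on the threshold. With $k$ maximal in $[1,n]$ so that $\mathcal R_k > \mathcal P_k$ (or $k=0$ if $\mathcal R_1 \leq \mathcal P_1$), the dichotomy picks $\mathcal E^{\dagger}_k$ from \eqref{SSeq2} when $\mathcal R_{k+1} \leq \mathcal P_k$, with $\Omega_y = \Omega_z = [1,k]$, and $\mathcal E^{\ddagger}_{k+1}$ from \eqref{SSeq1} when $\mathcal R_{k+1} > \mathcal P_k$, with $\Omega_y = [1,k+1]$ and $\Omega_z = [1,k]$. Positivity of the interior components is automatic: in the first case $z_i^* = \mathcal R_i/\mathcal P_k - 1 > 0$ for $i \leq k$ since $\mathcal R_i \geq \mathcal R_k > \mathcal P_k$ by the ordering \eqref{Rorder} and the defining property of $k$; in the second case the same argument gives $z_i^* > 0$, and the case hypothesis yields $y_{k+1}^* = 1 - \mathcal P_k/\mathcal R_{k+1} > 0$.

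Next I would verify the saturation inequalities \eqref{inequ}. For $\mathcal E^{\dagger}_k$, the only nontrivial conditions concern missing viral strains $i > k$: $\mathcal R_i x^* - 1 = \mathcal R_i/\mathcal P_k - 1 \leq 0$ since $\mathcal R_i \leq \mathcal R_{k+1} \leq \mathcal P_k$, strict for $i > k+1$ by \eqref{Rorder}. For $\mathcal E^{\ddagger}_{k+1}$, the missing strains $i > k+1$ satisfy $\mathcal R_i/\mathcal R_{k+1} - 1 < 0$ strictly by \eqref{Rorder}, while the missing immune responses $j \geq k+1$ reduce, using the diagonal structure of $A$, to $y_j^* \leq s_j$: for $j = k+1$ this is equivalent to $\mathcal R_{k+1} \leq \mathcal P_{k+1}$, holding (strictly in the nondegenerate case) by maximality of $k$, and for $j > k+1$ the condition is the trivial $0 \leq s_j$. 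Uniqueness in the positivity class follows from Proposition \ref{prop33}: the restricted $A'$ has trivial kernel, and $\mathrm{Ker}(A')^T$ is either trivial (case 1) or spanned by $e_{k+1}$, which pairs with $\vec{\mathcal R}'$ to give $\mathcal R_{k+1} \neq 0$.

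With saturation and uniqueness in place, Theorem \ref{genThm} delivers every conclusion of Theorem \ref{SSThm} except global stability: local stability of the selected equilibrium, $x(t) \to x^*$, uniform persistence of $y_i, z_j$ for $i \in \Omega_y, j \in \Omega_z$, extinction of the missing variants (when the inequalities are strict), convergence of asymptotic averages, and reduction of the global attractor to the Lotka-Volterra system \eqref{invSys} constrained by \eqref{linEq}. For global stability when $k \leq 2$ I would invoke the final clause of Theorem \ref{genThm}: since $\Lambda_i = \{i\}$ in the strain-specific network, the set $\{i \in \Omega_y : \Lambda_i \cap \Omega_z \neq \emptyset\}$ has cardinality exactly $|\Omega_z| = k$, so the required $\leq 2$ hypothesis is equivalent to $k \leq 2$. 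The only genuine obstacle is the degenerate boundary case $\mathcal R_{k+1} = \mathcal P_k$, where $\mathcal E^{\dagger}_k$ and $\mathcal E^{\ddagger}_{k+1}$ coalesce and the saturation inequality for $y_{k+1}$ is no longer strict; extinction of $y_{k+1}$ then requires an extra argument (for instance, a center-manifold analysis or a direct inspection of $\dot y_{k+1}/y_{k+1}$ along the invariant orbits \eqref{invSys}), which is essentially what Wolkowicz carries out in the original reference.
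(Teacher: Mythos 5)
Your proposal is correct and follows essentially the same route as the paper: the paper likewise identifies the saturated strain-specific equilibrium ($\mathcal E^{\dagger}_k$ or $\mathcal E^{\ddagger}_{k+1}$ according to the threshold), uses the diagonal structure of $A$ to make the saturation inequalities \eqref{inequ} and the uniqueness criterion of Proposition \ref{prop33} immediate, and then applies Theorem \ref{genThm}, invoking Wolkowicz's chemostat result for global stability when $k\le 2$ (which, as you note, also follows from the final clause of Theorem \ref{genThm} since exactly $k$ persistent strains have nonempty epitope set meeting $\Omega_z$). Your explicit flagging of the degenerate boundary case $\mathcal R_{k+1}=\mathcal P_k$ (and of $\mathcal R_{k+1}=\mathcal P_{k+1}$) is a point the paper's discussion passes over by asserting strictness of \eqref{inequ} without caveat.
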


If the assumption of strictly decreasing reproduction numbers are relaxed, then the strain-specific system \eqref{odeSS} can have multiple degenerate saturated equilibria.  However the full hypercube network for $n$ epitopes containing $2^n$ virus strains (model \eqref{odeS}) allows us to relax this particular assumption on reproduction numbers,  avoid any degeneracy, and explicitly identify two possible saturated ``strain-specific equilibria''.   Essentially, $\mathcal E^{\dagger}_n$ and  $\mathcal E^{\ddagger}_{n+1}$ are the only strain-specific equilibria which can be saturated in the full hypercube network.  The proposition below (proved in Appendix \ref{ssA}) contains this detailed result on strain-specific equilibria.
\begin{proposition}\label{ssF}
Consider system \eqref{odeS} on the full network with $n$ epitopes ($m=2^n$ virus strains) under the assumption of mutational fitness costs \eqref{fitnesscost} and the viral strains, $y_i$, ordered so that $\Lambda_i=\left\{i\right\}$ for $i=1,\dots,n$ and $\Lambda_{n+1}=\emptyset$.  Let $\mathcal P_n$ be defined by \eqref{ssP}.  Then an equilibrium $\mathcal E^*$ with a strain-specific subgraph, i.e. $\Omega_y\subseteq [1,n+1]$, is saturated if and only if one of the following holds:
\begin{itemize}
\item[i.]  $\mathcal R_{n+1}\leq \mathcal P_n$ and $\left( |\Lambda_{\ell}| - 1 \right) \mathcal P_n + \mathcal R_{\ell} \leq  \sum\limits_{i\in\Lambda_{\ell}} \mathcal R_i  \quad \forall \ell \in [n+2, 2^n]$, in which case $\Omega_y=\Omega_z= [1,n]$.
\item[ii.]  $\mathcal R_{n+1}> \mathcal P_n$ and $\left( |\Lambda_{\ell}| - 1 \right) \mathcal R_{n+1} + \mathcal R_{\ell} \leq  \sum\limits_{i\in\Lambda_{\ell}} \mathcal R_i  \quad \forall \ell \in [n+2, 2^n]$, in which case $\Omega_y=[1,n+1]$ and $\Omega_z= [1,n]$.
\end{itemize}  
\end{proposition}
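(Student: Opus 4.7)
The plan is to reduce the problem to two candidate persistent sets by leveraging the saturation inequalities applied to strains from the \emph{full} hypercube, and then to translate the remaining saturation inequalities into the stated parameter inequalities. I would organize the argument in three steps.

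First I would use the structure $\Lambda_i = \{i\}$ for $i \le n$ and $\Lambda_{n+1} = \emptyset$ to record the obvious consequences of the equilibrium equations: for each $j \in \Omega_z$ the immune equilibrium equation forces $y_j^* = s_j$ and hence $j \in \Omega_y \cap [1,n]$, so $\Omega_z \subseteq \Omega_y \cap [1,n]$; and the virus equilibrium equation then yields $z_j^* = \mathcal R_j x^* - 1$ for every $j \in \Omega_z$.

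The heart of the proof is to show $\Omega_z = [1,n]$ (the degenerate case $\Omega_z = \emptyset$ leads either to a virus-free equilibrium or, by the fitness-cost hypothesis applied to the wild type, to a single-strain equilibrium that cannot be saturated, both of which lie outside the scope of the statement). Suppose for contradiction that $j_0 \in \Omega_z$ and $k_0 \in [1,n] \setminus \Omega_z$. In the full hypercube there is a strain $y_\star$ with $\Lambda_\star = \{j_0,k_0\}$, and this strain is decisive for two reasons: $\Lambda_\star \cap \Omega_z = \{j_0\}$ causes the saturation inequality \eqref{inequ} for $y_\star$ to collapse, after substituting $z_{j_0}^* = \mathcal R_{j_0}x^* - 1$, to $(\mathcal R_\star - \mathcal R_{j_0})x^* \le 0$; and $y_\star$ is obtained from $y_{j_0}$ by reverting epitope $k_0$ (strictly decreasing Hamming distance to the wild type), so the fitness-cost hypothesis \eqref{fitnesscost} gives $\mathcal R_\star > \mathcal R_{j_0}$ — a contradiction. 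Therefore $\Omega_z = [1,n]$, which combined with $\Omega_z \subseteq \Omega_y \cap [1,n]$ gives $[1,n] \subseteq \Omega_y \subseteq [1,n+1]$, leaving only the cases $\Omega_y = [1,n]$ and $\Omega_y = [1,n+1]$.

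In the first case, substituting $y_i^* = s_i$ into the $x$-equation gives $x^* = 1/\mathcal P_n$, and the saturation inequality for $y_{n+1}$ (with $\Lambda_{n+1} = \emptyset$) is $\mathcal R_{n+1} \le \mathcal P_n$. In the second, the equilibrium equation for $y_{n+1}$ forces $x^* = 1/\mathcal R_{n+1}$, and positivity of $y_{n+1}^* = 1 - \mathcal P_n/\mathcal R_{n+1}$ forces $\mathcal R_{n+1} > \mathcal P_n$. In either case, substituting $z_j^* = \mathcal R_j x^* - 1$ into the saturation inequality for each missing strain $\ell \in [n+2, 2^n]$ and dividing by $x^*$ produces exactly the inequality
\begin{equation*}
(|\Lambda_\ell|-1)/x^* + \mathcal R_\ell \le \sum_{j \in \Lambda_\ell} \mathcal R_j,
\end{equation*}
which with $1/x^* \in \{\mathcal P_n, \mathcal R_{n+1}\}$ reproduces the conditions stated in (i) and (ii). The converse direction then reverses these computations; positivity of the immune components ($\mathcal R_j > 1/x^*$ for $j \in [1,n]$) follows by applying the inequality of (i) or (ii) to any two-element-$\Lambda$ strain.

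The principal obstacle is the contradiction argument in step two: it rests entirely on the selection of $y_\star$, whose epitope set must intersect $\Omega_z$ in exactly one element so the inequality collapses cleanly, and which must be a single mutation from $y_{j_0}$ toward the wild type so that the fitness-cost hypothesis applies in the right direction. This off-network strain exists precisely because the strain-specific subgraph is embedded in the full hypercube, and once the comparison is available the remainder of the proof is routine bookkeeping on the equilibrium equations.
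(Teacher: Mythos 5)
Your proof is correct and follows essentially the same route as the paper's: the decisive step in both is to take, for $i\in\Omega_z$ and $j\in[1,n]\setminus\Omega_z$, the off-subgraph strain with epitope set $\{i,j\}$, whose invasion rate collapses to $(\mathcal R_\ell-\mathcal R_i)x^*>0$ by the fitness-cost ordering, forcing $\Omega_z=[1,n]$ and $\Omega_y\supseteq[1,n]$, after which the remaining saturation inequalities translate directly into conditions (i) and (ii). Your write-up is in fact somewhat more complete than the paper's terse appendix argument, as it explicitly records $\Omega_z\subseteq\Omega_y\cap[1,n]$ (which justifies $z_j^*=\mathcal R_jx^*-1$), disposes of the $\Omega_z=\emptyset$ case, and derives positivity of the immune components from the stated inequalities via a two-element-$\Lambda$ strain.
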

Therefore by applying Theorem \ref{genThm} to the above Proposition \ref{ssF}, we obtain conditions for the stability of either (i) $\mathcal E^{\dagger}_n$ (or (ii) $\mathcal E^{\ddagger}_{n+1}$), and the corresponding uniform persistence of $y_1,\dots,y_n$ (and $y_{n+1}$ in case of (ii)).  Furthermore these are the only potential stable equilibria with persistent strain set contained in the strain-specific subnetwork ($\Omega_y\subseteq [1,n+1]$) for system \eqref{odeS}.

\subsection{Dynamics for full network on $n=2$ epitopes} \label{TwoEpitope}
If two epitopes are concurrently targeted by two distinct specific immune responses, which escape pathway will the virus follow and what mutant strains persist?  In the nested network, we assumed that the virus escaped the most immunodominant response.  However, in general, both \textit{CTL pressure} and \textit{virus fitness cost} determine selective advantage of a resistant mutant.  For a single epitope, an escape mutant $y_2$ invades the wild-type $y_1$, if its reproductive number is large enough given the CTL pressure on $y_1$.   In particular, $y_2$ persists if $(f-s_1)\mathcal R_1>1$, where $\mathcal R_2=f\mathcal R_1$ and $s_1=1/\mathcal I_1$ with $f$ the fitness proportion of the wild-type reproductive number $\mathcal R_1$ and $\mathcal I_1$ the immune response reproduction number (see \cite{pruss2008global} or the  case $n=1$ in Sections \ref{NestedSec} or \ref{ssSec}).   For the general case of $n=2$ epitopes, although the situation is fundamentally more complex, we will sharply characterize the dynamics in this section.  The results suggest that immunodominance may play a larger role than viral fitness in determining the structure of the persistent virus-immune network.

The full network for $n=2$ epitopes (shown in Fig. \ref{fig2d}) consists of 2 CTL populations, $z_1$ and $z_2$, and $m=4$ virus strains, $y_i \ i=1,\dots,4$, each with an associated binary string describing their resistance profile; $y_1\sim 00$, $y_2\sim 10$, $y_3\sim 01$, $y_4\sim 11$.  Recall that a 0 in the $j^{th}$ bit of the binary string signifies susceptible to $z_j$, whereas 1 signifies resistance; for example $y_2 \ (10)$ is resistant to $z_1$ but susceptible to $z_2$.  We assume that $z_1$ is strictly immunodominant, i.e. $\mathcal I_1>\mathcal I_2$ or $s_1<s_2$.  Each epitope escape comes with a fitness cost as before, therefore the viral reproduction numbers satisfy $\mathcal R_1 > \max(\mathcal R_2, \mathcal R_3)\geq \min(\mathcal R_2, \mathcal R_3)> \mathcal R_4$.  We note that the fitness cost for resistance to $z_1$ may be greater than or equal to the fitness cost to $z_2$, in other words the fitness of mutant $y_2$ is less than or equal to $y_3$ ($\mathcal R_2\leq \mathcal R_3$).   Conversely, it may be the case that resistance to the dominant immune response, $z_1$, comes at less cost than resistance to the weaker response ($\mathcal R_2 > \mathcal R_3$).  Our underlying assumptions are summarized below:
\begin{align}
\mathcal I_1&>\mathcal I_2 \ \ (s_1<s_2), \qquad \mathcal R_1 > \max(\mathcal R_2, \mathcal R_3)\geq \min(\mathcal R_2, \mathcal R_3)> \mathcal R_4 \label{conditions}
\end{align}
For clarity, we write the 7 equations in model (\ref{odeS}) for this case $n=2$ with the chosen index notation:
\begin{align}
\dot x &= 1-x- x\sum_{i=1}^4 \mathcal R_i y_i, \quad
\dot y_1 = \gamma_1 y_1\left(\mathcal R_1 x -1 -(z_1+ z_2) \right), \quad
\dot y_2 = \gamma_2 y_2\left(\mathcal R_2 x -1 -z_2 \right), \notag \\ 
\dot y_3 &= \gamma_3 y_3\left(\mathcal R_3 x -1 - z_1 \right), \quad
\dot y_4 = \gamma_4 y_4\left(\mathcal R_4 x -1 \right), \label{ode4}  \\
\dot z_1 & = \frac{\sigma_1}{s_1} z_1\left(y_1 + y_3-s_1 \right), \quad
  \dot z_2 = \frac{\sigma_2}{s_2} z_2\left(y_1 + y_2-s_2 \right), \notag
  \end{align}

The dynamics are rigorously characterized in the Theorems \ref{oneThm} and \ref{mainThm} stated below.   The theorems together present a sharp polychotomy which delineates the stability of nine potentially ``strictly saturated'' distinct equilibria, along with a degenerate case where a continuum of equilibria exists, and the corresponding uniform persistence of variants in each case. First, we derive these nine potential equilibria with their corresponding component values.  In this way, we can capture all of the possible stable, non-negative equilibria and avoid listing equilibria that are always unstable or have negative components.   First, consider the equilibria types encountered in our analysis of the nested and strain-specific subnetworks adapted to this case of $n=2$ epitopes.   Note that any equilibrium, $\mathcal E^*$, with $z_2^*>0$ and $z_1^*=0$ or $y_3^*>0$ and $y_2^*=0$ will either be not saturated or have negative components when $s_1<s_2$.  Then there are four equilibria with one or less persistent immune response, ranging from the infection-free equilibrium $\bar{\mathcal E}_0$ to the ``escaped single immune'' equilibrium $\widetilde{\mathcal E}_1$.  Also, there are the nested and ``strain-specific'' type  equilibria $\overline{\mathcal E}_2, \widetilde{\mathcal E}_2$ and $\mathcal E^{\dagger}_2,\mathcal E^{\ddagger}_2$ (with 2 or 3 virus strains).  Next notice that Propositions \ref{prop33} and \ref{CycleProp} restrict the possibility of having an equilibrium with all four viral strains persistent to a degenerate case when $\mathcal R_1-\mathcal R_2-\mathcal R_3+\mathcal R_4=0$.   

We are left to check a new equilibrium type $\widehat{\mathcal E}_2$ where $y_1$, $y_2$, $y_3$ coexist, and by Proposition \ref{prop33}, its component $\widehat x$ can be derived as follows:
\begin{align}
C=\begin{pmatrix} A' & \vec{\mathcal R'}\end{pmatrix}^T=\begin{pmatrix} 1 & 0 & 1 \\ 1 & 1 & 0 \\ \mathcal R_1 & \mathcal R_2 & \mathcal R_3\end{pmatrix} \Rightarrow C^{-1}=\begin{pmatrix} * & * & -\mathcal R \\ * & * & \mathcal R \\ * & * & \frac{1}{\mathcal R}\end{pmatrix},  \ \ \widehat x=\vec 1^{\,T} C^{-1}_{(3)}=\frac{1}{\mathcal R},
\end{align}
where $\mathcal R= \mathcal R_2 + \mathcal R_3 - \mathcal R_1$.   The other components of $\widehat{\mathcal E}_2$ can be derived utilizing equilibrium equations \eqref{genEqcon}.  Therefore the distinct regimes for the 10 relevant equilibria forms are determined by the values of the viral and immune reproductive numbers, along with the following quantities:
\begin{align}
\mathcal Q_1&=1+\mathcal R_1s_1, \ \ \mathcal Q_2=\mathcal Q_1+ \mathcal R_2(s_2-s_1), \ \ \mathcal P_2 = 1 + s_1\mathcal R_3 + s_2 \mathcal R_2, \ \  \mathcal R= \mathcal R_2 + \mathcal R_3 - \mathcal R_1.   \label{2eqQ}
\end{align}
The following equilibria have two immune responses present:
 \begin{align}
  \widetilde{\mathcal E}_2  &= \left(\frac{1}{\mathcal R_3}, \widetilde{y}_1, \widetilde y_2, 0, \widetilde y_4, \widetilde{z}_1, \widetilde{z}_2 \right), \quad
\overline{\mathcal E}_2 = \left(\frac{1}{\mathcal Q_2},\ \bar y_1, \bar y_2, 0, 0, \  \bar{z}_1, \bar{z}_2 \right), \quad
 \widehat{\mathcal E}_2  = \left(\frac{1}{\mathcal R}, \ \widehat y_1, \widehat y_2, \widehat y_3, 0, \widehat z_1, \widehat z_2 \right) \label{2immEquilib} \\
\mathcal E^{\ddagger}_2  &= \left(\frac{1}{\mathcal R_3},0, y_2^{\ddagger}, y_3^{\ddagger}, y_4^{\ddagger}, z_1^{\ddagger}, z_2^{\ddagger}\right), \quad
 \mathcal E^{\dagger}_2  = \left(\frac{1}{\mathcal P_2}, \ 0, y_2^{\dagger}, y_3^{\dagger}, 0, z_1^{\dagger}, z_2^{\dagger}\right), \quad  \mathcal E(y_4^*)  = \left(\frac{1}{\mathcal R}, \  y_1^*, y_2^*,  y_3^*, y_4^*,  z_1^*,  z_2^* \right),  \notag \\
&\text{where}\quad \widetilde y_1=\bar y_1=s_1, \ \ \widetilde y_2=\bar y_2= s_2-s_1, \ \ \widetilde y_4=1- \frac{\mathcal Q_2}{\mathcal R_4}, \ \ \widetilde{z}_1= \frac{\mathcal R_1-\mathcal R_2}{\mathcal R_4},\ \ \widetilde{z}_2= \frac{\mathcal R_2}{\mathcal R_4}-1,\notag \\ & \bar{z}_1= \frac{\mathcal R_1-\mathcal R_2}{\mathcal Q_2},  \ \ \bar{z}_2= \frac{\mathcal R_2}{\mathcal Q_2}-1, \ \
 \widehat y_1= \frac{\mathcal P_2}{\mathcal R}-1, \ \ \widehat y_2=1-\frac{\mathcal Q_2}{\mathcal R}+s_2-s_1, \ \ \widehat y_3=1-\frac{\mathcal Q_2}{\mathcal R}, \notag \\ & \widehat z_1=  \frac{\mathcal R_1-\mathcal R_2}{\mathcal R}, \ \ \widehat z_2=  \frac{\mathcal R_1-\mathcal R_3}{\mathcal R}, \ \
  y_2^{\ddagger}=y_2^{\dagger}=s_2, \ \ y_3^{\ddagger}=y_3^{\dagger}=s_1, \ \ y_4^{\ddagger}=1-\frac{\mathcal P_2}{\mathcal R_4}, \ \ z_1^{\ddagger}=\frac{\mathcal R_3}{\mathcal R_4}-1, \ \ z_2^{\ddagger}=\frac{\mathcal R_2}{\mathcal R_4}-1, \notag \\ & z_1^{\dagger}=\frac{\mathcal R_3}{\mathcal P_2}-1, \ \ z_2^{\dagger}=\frac{\mathcal R_2}{\mathcal P_2}-1, \ \ y_1^*=\frac{\mathcal P_2}{\mathcal R}-1+y_4^*, \ \ y_2^*=1-\frac{\mathcal Q_2}{\mathcal R}+s_2-s_1-y_4^*, \ \ y_3^*=1-\frac{\mathcal Q_2}{\mathcal R}-y_4^*  \notag
 \end{align}
The four equilibria with one or zero immune responses are:
 \begin{align}
 \widetilde{\mathcal E}_1  &=\left(\frac{1}{\mathcal R_2}, s_1, 1-\frac{\mathcal Q_1}{\mathcal R_2}, 0, 0, \frac{\mathcal R_1-\mathcal R_2}{\mathcal R_2}, 0 \right) , \qquad \bar{\mathcal E}_1 = \left(\frac{1}{\mathcal Q_1}, s_1, 0, 0, 0, \frac{\mathcal R_1}{\mathcal Q_1}-1, 0 \right), \label{eqb7}\\
 \widetilde{\mathcal E}_0 &= \left(\frac{1}{\mathcal R_1}, 1-\frac{1}{\mathcal R_1}, 0, 0, 0, 0, 0 \right),  \qquad \bar{\mathcal E}_0  =\left(1, 0, 0, 0, 0, 0, 0 \right). \notag
  \end{align}

The theorems classifying dynamics for model (\ref{ode4}) with strict immunodominance, $s_1<s_2$, are as follows (proofs are in Appendix \ref{A2}):
\begin{theorem}[Stability of equilibria with one or zero immune response] \label{oneThm}
  Consider the model with two epitopes (\ref{ode4}) under the assumptions (\ref{conditions}) and suppose positive initial conditions, i.e. $x(0),y_i(0),z_j(0)>0$ for all $i= 1,\dots 4, j=1,2$.  Then the following results hold:
  \begin{itemize}
  \item[i.] If $\mathcal R_1\leq 1$, then $\bar{\mathcal E}_0$ is GAS (globally asymptotically stable).
  \item[ii.] If $1<\mathcal R_1 \leq \mathcal Q_1$, then $\widetilde{\mathcal E}_0$ is GAS.
  \item[iii.] If $\mathcal R_1 > \mathcal Q_1\geq \mathcal R_2$, then $\bar{\mathcal E}_1$ is GAS.
  \item[iv.] If $\mathcal Q_2 \geq \mathcal R_2> \mathcal Q_1$, then $\widetilde{\mathcal E}_1$ is GAS.
  \end{itemize}
  \end{theorem}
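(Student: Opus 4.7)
The plan is to apply Theorem \ref{genThm} to the four cases by pairing each with one of the equilibria listed in \eqref{eqb7} and then verifying the saturation inequalities \eqref{inequ}. Concretely, case (i) corresponds to $\bar{\mathcal E}_0$ with $\Omega_y=\Omega_z=\emptyset$, case (ii) to $\widetilde{\mathcal E}_0$ with $\Omega_y=\{1\}$ and $\Omega_z=\emptyset$, case (iii) to $\bar{\mathcal E}_1$ with $\Omega_y=\Omega_z=\{1\}$, and case (iv) to $\widetilde{\mathcal E}_1$ with $\Omega_y=\{1,2\}$ and $\Omega_z=\{1\}$. In every case at most one persistent viral strain has a non-empty epitope set within $\Omega_z$, so the cardinality condition $|\{i\in\Omega_y:\Lambda_i\cap\Omega_z\neq\emptyset\}|\leq 2$ needed for the global stability conclusion of Theorem \ref{genThm} is automatic, and the work reduces to checking saturation.

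The saturation inequalities are a short calculation using the explicit components in \eqref{eqb7} together with \eqref{conditions}. For $\bar{\mathcal E}_0$ they collapse to $\mathcal R_i\leq 1$, which follows from $\mathcal R_1\leq 1$ and the fitness ordering. For $\widetilde{\mathcal E}_0$ the missing-virus inequalities again reduce to $\mathcal R_i\leq\mathcal R_1$; the $z_1$ missing-immune inequality rearranges to $\mathcal R_1\leq\mathcal Q_1$, and the $z_2$ inequality is strictly weaker since $s_1<s_2$. For $\bar{\mathcal E}_1$ the $i=3,4$ inequalities reduce to $\mathcal R_3,\mathcal R_4<\mathcal R_1$ after the $z_1^*$ term cancels, the $i=2$ inequality becomes $\mathcal R_2\leq\mathcal Q_1$, and the $z_2$ inequality becomes $s_1<s_2$. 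For $\widetilde{\mathcal E}_1$ the only nontrivial check is the $z_2$ inequality $y_1^*+y_2^*\leq s_2$, which rearranges to $\mathcal R_2\leq\mathcal Q_2$.

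The main obstacle is that Theorem \ref{genThm} yields extinction of the missing species only under \emph{strict} saturation, while the polychotomy of Theorem \ref{oneThm} admits the boundary equalities $\mathcal R_1=1$ in case (i), $\mathcal R_1=\mathcal Q_1$ in case (ii), $\mathcal R_2=\mathcal Q_1$ in case (iii), and $\mathcal R_2=\mathcal Q_2$ in case (iv). On each such boundary one invading species' coefficient in $\dot W$ degenerates to zero, so La Salle only confines $\omega$-limits to an invariant subset of $\{\dot W=0\}$ which can, \emph{a priori}, contain an additional equilibrium on which the degenerate species persists. To close each case I would exploit the constraint $x\equiv x^*$ forced by $\dot W=0$, which via $\dot x=0$ fixes a linear relation $\sum_i\mathcal R_iy_i=\mathrm{const}$ among the surviving viral components, and combine this with the individual ODEs for the persistent species. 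In cases (i) and (ii) this directly pins the remaining degrees of freedom to $\mathcal E^*$, while in cases (iii) and (iv) the only alternative equilibrium permitted on $\{\dot W=0\}$ carries a strictly positive ``$y_1$-invasion'' eigenvalue $\gamma_1(\mathcal R_1 x^*-1)>0$, so its stable manifold has positive codimension and cannot be reached from the interior of the positive orthant. This boundary bookkeeping, though essentially routine, is the technical heart of the proof.
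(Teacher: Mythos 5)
Your overall strategy is the paper's: pair each case with the equilibrium from \eqref{eqb7}, verify the saturation inequalities \eqref{inequ}, invoke Theorem \ref{genThm} (noting the cardinality condition $|\{i\in\Omega_y:\Lambda_i\cap\Omega_z\neq\emptyset\}|\leq 2$ is automatic), and then treat the boundary equalities by analyzing the La Salle invariant set $\mathcal L\subset\{\dot W=0\}$. Your saturation computations are essentially right; the one slip is for $\bar{\mathcal E}_1$ at $i=4$: since $y_4\sim 11$ is resistant to $z_1$ there is no $z_1^*$ term to cancel, and the inequality is $\mathcal R_4\leq\mathcal Q_1$, which holds because $\mathcal R_4<\mathcal R_2\leq\mathcal Q_1$, not because $\mathcal R_4<\mathcal R_1$ (in case (iii) one has $\mathcal R_1>\mathcal Q_1$, so the latter would not suffice). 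This is harmless.

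The genuine gap is your proposed finish for the boundary equalities in cases (iii) and (iv). Excluding the alternative equilibrium on $\{\dot W=0\}$ because it ``carries a strictly positive $y_1$-invasion eigenvalue, so its stable manifold has positive codimension and cannot be reached from the interior'' is not a valid step: La Salle confines $\omega$-limit sets to the largest \emph{invariant} subset of $\{\dot W=0\}$, and an $\omega$-limit set of an interior orbit may perfectly well contain a saddle (e.g.\ as part of a heteroclinic structure); positive codimension of a stable manifold rules out convergence \emph{to} that point, not its membership in the limit set. The correct (and simpler) exclusion, which is what the paper uses implicitly, is that the candidate you are worried about has $y_1=0$ while $\bar y_1=s_1>0$ (resp.\ $\widetilde y_1=s_1>0$), so $W\to\infty$ there and it cannot lie in the $\omega$-limit set of any orbit along which $W$ is bounded. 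Once $y_1$ and $z_1$ are known to be bounded away from zero on $\mathcal L$, the chain $x\equiv x^*$, $\dot x=0$, $\dot y_2=0$ (at the boundary $\mathcal R_2=\mathcal Q_1$, resp.\ the asymptotic-average argument for $z_2$ at $\mathcal R_2=\mathcal Q_2$), $\dot y_1=0$, $\dot z_1=0$ pins $\mathcal L$ down to the single equilibrium, exactly as in your cases (i) and (ii). So the repair is to drop the stable-manifold argument and run your ``linear relation plus individual ODEs'' bookkeeping in all four cases, using the blow-up of $W$ to keep the persistent components positive.
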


   \begin{theorem}[Stability of equilibria with two immune responses present]\label{mainThm}
 Consider the model with two epitopes (\ref{ode4}) under the assumptions (\ref{conditions}) and suppose positive initial conditions, i.e. $x(0),y_i(0),z_j(0)>0$ for all $i= 1,\dots 4, j=1,2$.  Then the stability of equilibria (\ref{2immEquilib}) are characterized as follows:
   \begin{enumerate}
 \item  if $\mathcal R < \mathcal Q_2$ and $\mathcal R_4 \leq \mathcal Q_2$, then $\bar{\mathcal E}_2$ is GAS.
  \item  if  $\mathcal R < \mathcal R_4$, then $\widetilde{\mathcal E}_2$ is GAS.

   \item  if $\mathcal Q_2 < \mathcal R < \mathcal P_2$ and  $\mathcal R_4 < \mathcal R$, then $\widehat{\mathcal E}_2$ is (locally) stable. Additionally, $\lim_{t\rightarrow\infty} x(t)=\widehat{x}=\frac{1}{\mathcal R}$, $\lim_{t\rightarrow\infty} y_4(t)=0$, and
   $$ \lim_{t\rightarrow\infty} \frac{1}{t} \int\limits_0^t y_i(s) \, ds = \widehat{y}_i, \ \ i=1,2,3, \ \  \lim_{t\rightarrow\infty} \frac{1}{t} \int\limits_0^t z_j(s) \, ds = \widehat{z}_j, \ \ j=1,2. $$
   Furthermore $y_i,z_j, \ \ i=1,2,3, j=1,2$ are uniformly persistent.
   \item  if $\mathcal P_2 < \mathcal R$ and $\mathcal R_4 \leq \mathcal P_2$, then $\mathcal E_2^{\dagger}$ is GAS.
  \item  if  $\mathcal Q_2 < \mathcal P_2 <\mathcal R_4 < \mathcal R$, then $\mathcal E_2^{\ddagger}$  is GAS.
  \item If $\mathcal R=\mathcal R_4$, then there is a continuum of saturated equilibria which forms a line connecting $y_4=0$ and $y_3=0$ boundaries at $\widehat y_3$ and $\widetilde y_4$, respectively, in the $(y_3, y_4)$ plane.
  \end{enumerate}
 \end{theorem}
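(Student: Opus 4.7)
The strategy is to apply Theorem \ref{genThm} to each of the six cases, taking the candidate equilibrium from \eqref{2immEquilib}. For each case I need to verify (a) positivity of the listed components in the declared parameter regime, (b) the saturation inequalities \eqref{inequ} hold (strictly wherever possible) for the missing variants, and (c) uniqueness of the equilibrium in its positivity class via Proposition \ref{prop33}. Once these three things are in hand, Theorem \ref{genThm} yields local stability together with $x(t)\to x^*$, uniform persistence of the persistent variants, asymptotic averages converging to equilibrium values, and extinction of missing variants; and whenever $|\{i\in\Omega_y:\Lambda_i\cap\Omega_z\neq\emptyset\}|\le 2$, it upgrades local stability to global asymptotic stability.

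For cases (1) and (4) one has $|\Omega_y|=|\Omega_z|=2$ with invertible submatrix $A'$, so Proposition \ref{prop33}(i) gives uniqueness. For cases (2) and (5) one has $|\Omega_y|=|\Omega_z|+1=3$, and the augmented matrix $C=\begin{pmatrix} A' & \vec{\mathcal R}'\end{pmatrix}^T$ is non-singular because the strict fitness-cost hypotheses in \eqref{conditions} preclude a linear dependence among its rows; Proposition \ref{prop33}(ii) then applies and is consistent with the listed $x^*=1/\mathcal R_4$. In all four of these cases, $|\{i\in\Omega_y:\Lambda_i\cap\Omega_z\neq\emptyset\}|\le 2$ (in (2) and (5) the additional strain is $y_4$ with $\Lambda_4=\emptyset$, which does not count), so the dichotomy at the end of Theorem \ref{genThm} yields global asymptotic stability. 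The strict saturation inequalities for the missing variants, after short algebra using \eqref{2eqQ}, reduce in each case to precisely the inequalities stated in the theorem (for example the missing-$y_3$ condition in case (1) becomes $\mathcal R_3/\mathcal Q_2-1-\bar z_1<0$, equivalent to $\mathcal R<\mathcal Q_2$).

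For case (3) the equilibrium $\widehat{\mathcal E}_2$ has $|\Omega_y|=3$, $|\Omega_z|=2$, and Proposition \ref{prop33}(ii) both supplies uniqueness in $\Gamma_\Omega$ and confirms the formula $\widehat x=\vec 1^{\,T}C^{-1}_{(3)}=1/\mathcal R$ computed immediately before the theorem statement. Positivity of the five persistent components is equivalent to $\mathcal Q_2<\mathcal R<\mathcal P_2$, and the single missing variant $y_4$ yields the strict saturation condition $\mathcal R_4/\mathcal R-1<0$, i.e.\ $\mathcal R_4<\mathcal R$. Because now $|\{i\in\Omega_y:\Lambda_i\cap\Omega_z\neq\emptyset\}|=3$, the GAS dichotomy in Theorem \ref{genThm} does \emph{not} apply, and we obtain only local stability, $x(t)\to\widehat x$, uniform persistence of $y_1,y_2,y_3,z_1,z_2$, convergence of asymptotic averages, and $y_4\to 0$ — exactly the content of (3). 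For case (6), $\mathcal R=\mathcal R_4$ is precisely the $j=2$ degeneracy condition of Proposition \ref{CycleProp} for the 4-cycle $y_1\leftrightarrow y_2\leftrightarrow y_4\leftrightarrow y_3\leftrightarrow y_1$ in $Q_2$; under this condition the linear equilibrium system \eqref{LVequil} with $\Omega_y=\{1,2,3,4\}$, $\Omega_z=\{1,2\}$ becomes underdetermined, and its non-negative solutions are parametrized by $y_4^*\in[0,\widehat y_3]$ as displayed in \eqref{2immEquilib}, forming a line in the $(y_3,y_4)$-plane connecting $\widehat{\mathcal E}_2$ (at $y_4^*=0$, $y_3^*=\widehat y_3$) to $\widetilde{\mathcal E}_2$ (at $y_4^*=\widetilde y_4$, $y_3^*=0$).

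The main obstacle is the bookkeeping in (b): several cases are stated with non-strict inequalities (e.g.\ $\mathcal R_4\le\mathcal Q_2$ in (1), $\mathcal R_4\le\mathcal P_2$ in (4)), whereas Theorem \ref{genThm} requires \emph{strict} saturation to conclude extinction of a missing variant. These boundary cases must be handled separately, exploiting the fact that $\Lambda_4=\emptyset$ decouples $y_4$ from the immune variables: the scalar equation $\dot y_4=\gamma_4 y_4(\mathcal R_4 x-1)$, together with the already-established convergence of $x(t)$ to $x^*$ (which has $\mathcal R_4 x^*\le 1$ at the boundary) and a LaSalle-type argument inside the $\omega$-limit set, forces $y_4\to 0$ and closes the remaining boundary sub-cases.
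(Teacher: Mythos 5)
Your proposal follows essentially the same route as the paper: the appendix proof likewise applies Theorem \ref{genThm} equilibrium by equilibrium, writing out $\dot W$ (equivalently, the saturation terms for the missing variants) in each case, reading off the stated parameter inequalities from positivity and strict saturation, and invoking the criterion $|\{i\in\Omega_y:\Lambda_i\cap\Omega_z\neq\emptyset\}|\le 2$ to upgrade to GAS in cases 1, 2, 4, 5 while settling for local stability plus uniform persistence in case 3 and the degenerate continuum in case 6. One caution on your boundary sub-cases (e.g.\ $\mathcal R_4=\mathcal Q_2$ in case 1): the scalar equation $\dot y_4=\gamma_4 y_4(\mathcal R_4 x-1)$ with $\mathcal R_4 x^*=1$ does not by itself force $y_4\to 0$, and the paper closes this gap by combining $\dot y_4=0$ on the limit set with the constraint \eqref{linEq} and the asymptotic-average identities for $y_1,y_2$ --- this is the concrete form your ``LaSalle-type argument inside the $\omega$-limit set'' needs to take.
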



   \begin{table}[t!]
\centering
{\small
\begin{tabular}{ l|l|l|l|l|l|l|}
\toprule
Strains  &  $\mathcal R < \mathcal Q_2$, & $\mathcal R < \mathcal R_4$ & $\mathcal R = \mathcal R_4$ & $ \mathcal R_4<\mathcal R,$ & $\mathcal R_4 \leq \mathcal P_2,$ & $\mathcal Q_2 < \mathcal P_2$ \\ [0.5ex]
 &  $\mathcal R_4 \leq \mathcal Q_2 $ & & & $ \mathcal Q_2<\mathcal R<\mathcal P_2$ & $ \mathcal P_2 <\mathcal R$ & $<\mathcal R_4 <\mathcal R$ \\ [1.5ex]
\toprule
$x^*$ &   $\frac{1}{\mathcal Q_2}$ & $ \frac{1}{\mathcal R_4}$ & $ \frac{1}{\mathcal R}$& $ \frac{1}{\mathcal R}$&   $ \frac{1}{\mathcal P_2}$ & $ \frac{1}{\mathcal R_4}$ \\ [1.5ex]
$y_1^* \ \bf{(00)}$ &  $s_1$ & $s_1$ & $\frac{\mathcal P_2}{\mathcal R}-1+y_4^*$ & $\frac{\mathcal P_2}{\mathcal R}-1$ &  \bf{0}  &  \bf{0} \\ [1.5ex]
$y_2^* \ \bf{(10)}$ &  $s_2-s_1$ & $s_2-s_1$ & $1-\frac{\mathcal Q_2}{\mathcal R}+s_2-s_1-y_4^*$ & $1-\frac{\mathcal Q_2}{\mathcal R}+s_2-s_1$ & $s_2$ & $s_2$ \\ [1.5ex]	
$y_3^* \ \bf{(01)}$ &  \bf{0}   &    \bf{0} & $1-\frac{\mathcal Q_2}{\mathcal R}-y_4^*$ & $1-\frac{\mathcal Q_2}{\mathcal R}$& $s_1$& $s_1$ \\ [1.5ex]
$y_4^* \ \bf{(11)}$ &    \bf{0} & $1- \frac{\mathcal Q_2}{\mathcal R_4}$ & $y_4^*$ &  \bf{0} &   \bf{0}  &$1- \frac{\mathcal P_2}{\mathcal R_4}$ \\  [1.5ex] \hline \\ [-2ex]
Stable Equilib. &   $\bar{\mathcal E}_2$ & $\widetilde{\mathcal E}_2$ & $\mathcal E(y_4^*), \ 0\leq y_4^*\leq 1-\frac{\mathcal Q_2}{\mathcal R}$ &  $\widehat{\mathcal E}_2$&   $\mathcal E_2^{\dagger}$ & $\mathcal E_2^{\ddagger}$ \\ [0.5ex]
\bottomrule
\end{tabular}}
 \caption{Stable equilibria values for healthy and infected cells in Theorem \ref{mainThm} where six regimes sharply characterize distinct viral strain persistence scenarios ($\bf{0}$ components go extinct) when both immune responses $z_1$ \& $z_2$ persist (when $\mathcal R_2>\mathcal Q_2$).   Note that if $\mathcal R_2\leq Q_2$, then only $z_1$ and $y_1$, $y_2$ can persist and the dynamics are detailed in Theorem \ref{oneThm}. }
\label{table}
\end{table}

We make the following observations concerning the above theorems.  First, note that the inequalities in the hypotheses of Theorems \ref{oneThm} and \ref{mainThm} cover all possible parameter combinations under the conditions (\ref{conditions}).  Indeed, observe that  $\mathcal P_2 \leq \mathcal Q_2 \Leftrightarrow \mathcal R \leq 0$ and of course $\mathcal Q_2,\mathcal P_2>0$, therefore the case  $\mathcal P_2 \leq \mathcal Q_2$ falls under case 1 or case 2 of the theorem.  Additionally, $\mathcal R > \mathcal Q_2 \Rightarrow \mathcal R_2>\mathcal Q_2$, separating this case from the cases considered in Theorem \ref{oneThm}.   Note also that any equilibrium, $\mathcal E^*$, with $z_2^*>0$ and $z_1^*=0$ or $y_3^*>0$ and $y_2^*=0$ will not be saturated when $s_1<s_2$.


      \begin{figure}[t!]
      \subfigure[]{\label{fig10a}\includegraphics[width=7.5cm,height=4.2cm]{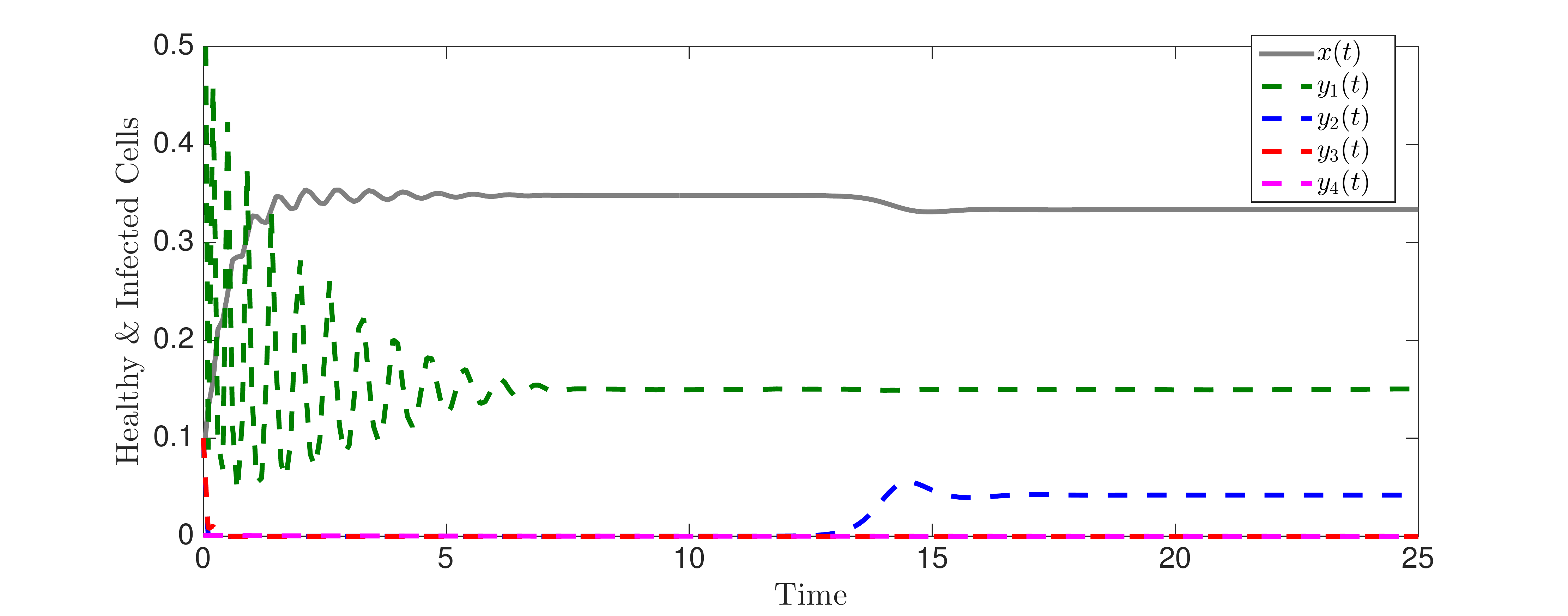}}
\subfigure[]{\label{fig10b}\includegraphics[width=7.5cm,height=4.2cm]{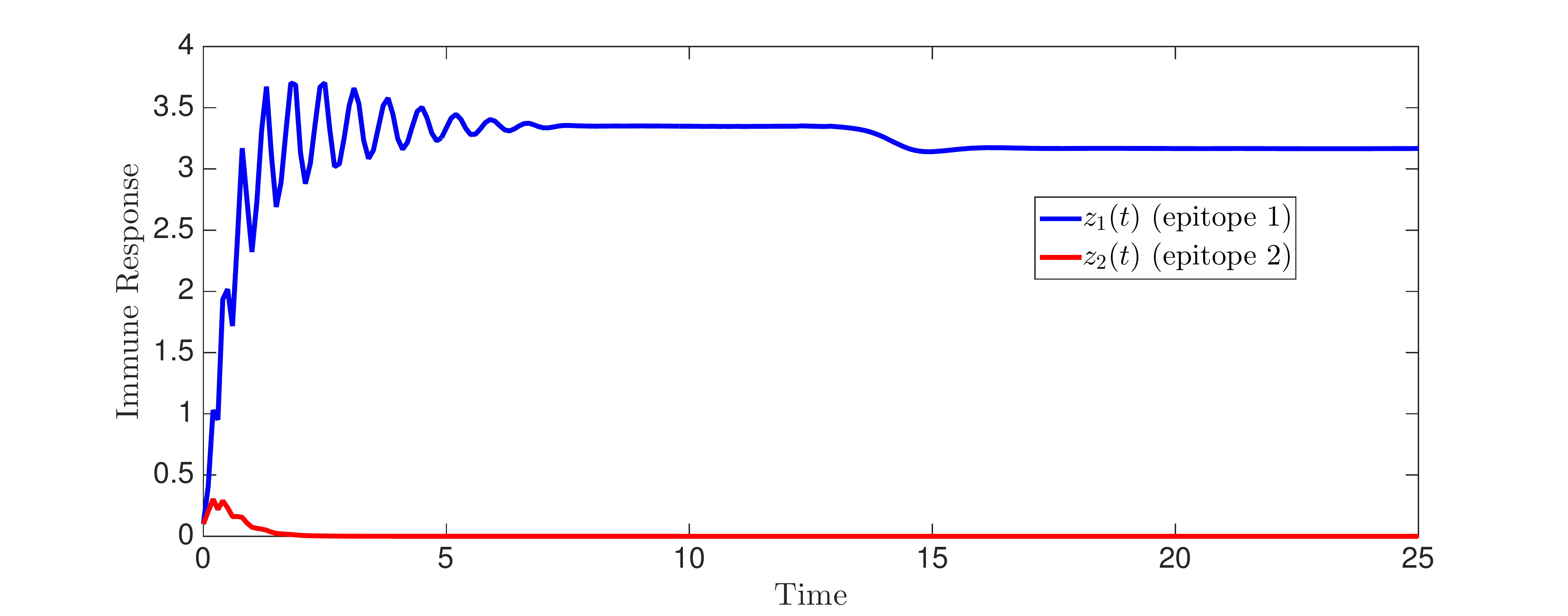}} \\
\subfigure[][]{\label{fig4a}\includegraphics[width=7.5cm,height=4.2cm]{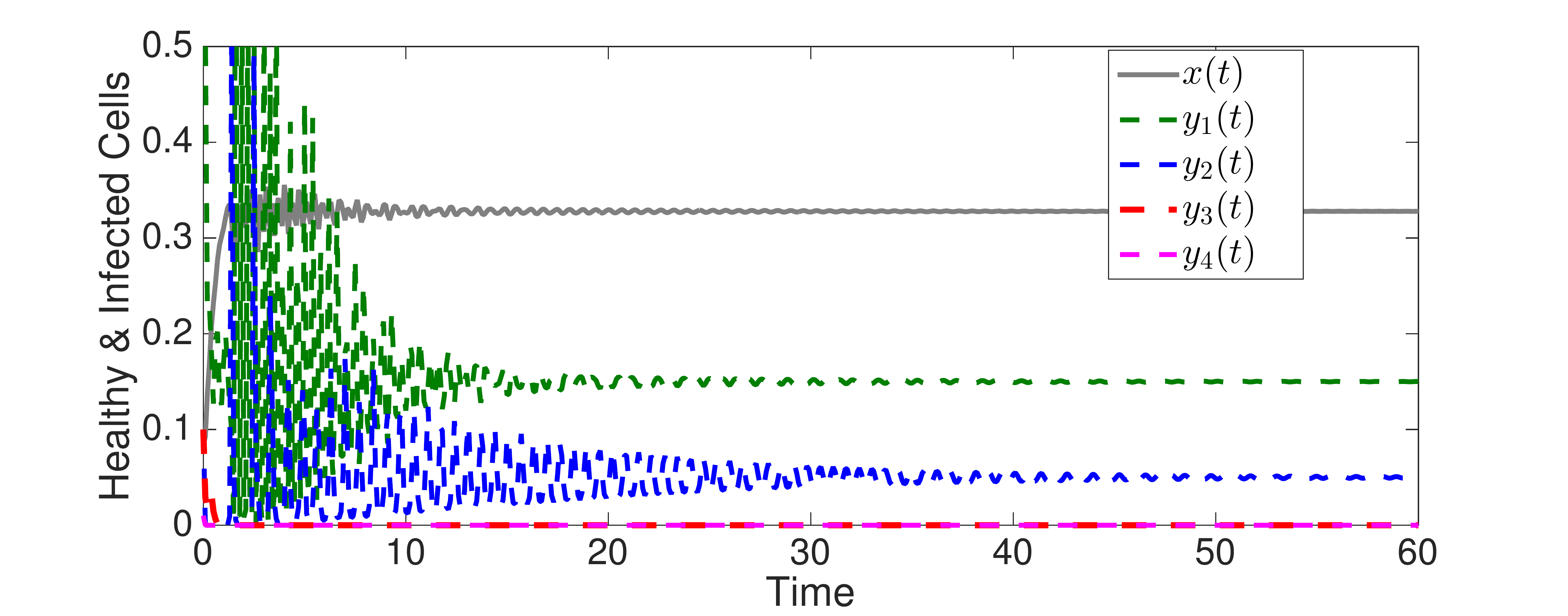}}
\subfigure[][]{\label{fig4b}\includegraphics[width=7.5cm,height=4.2cm]{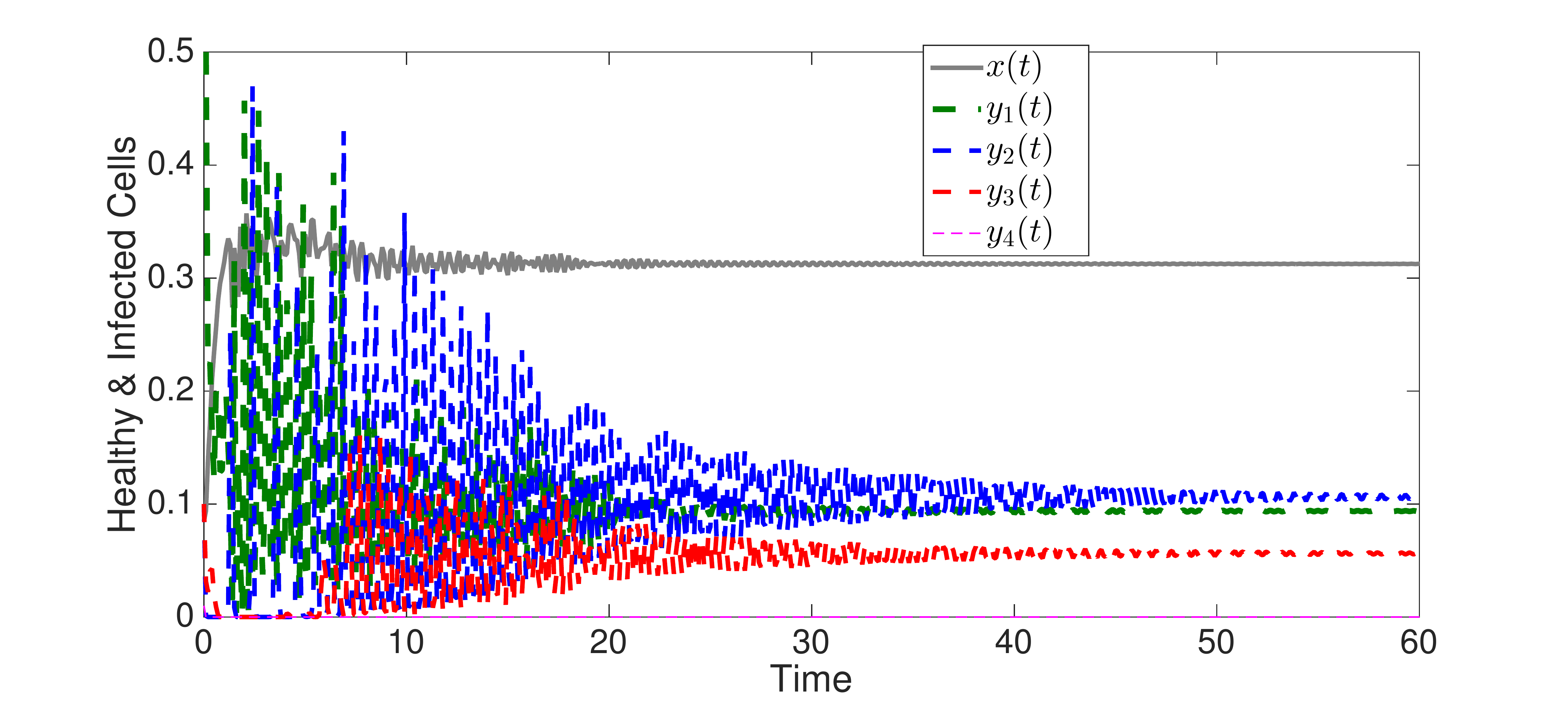}} \\
\subfigure[][]{\label{fig4d}\includegraphics[width=7.5cm,height=4.2cm]{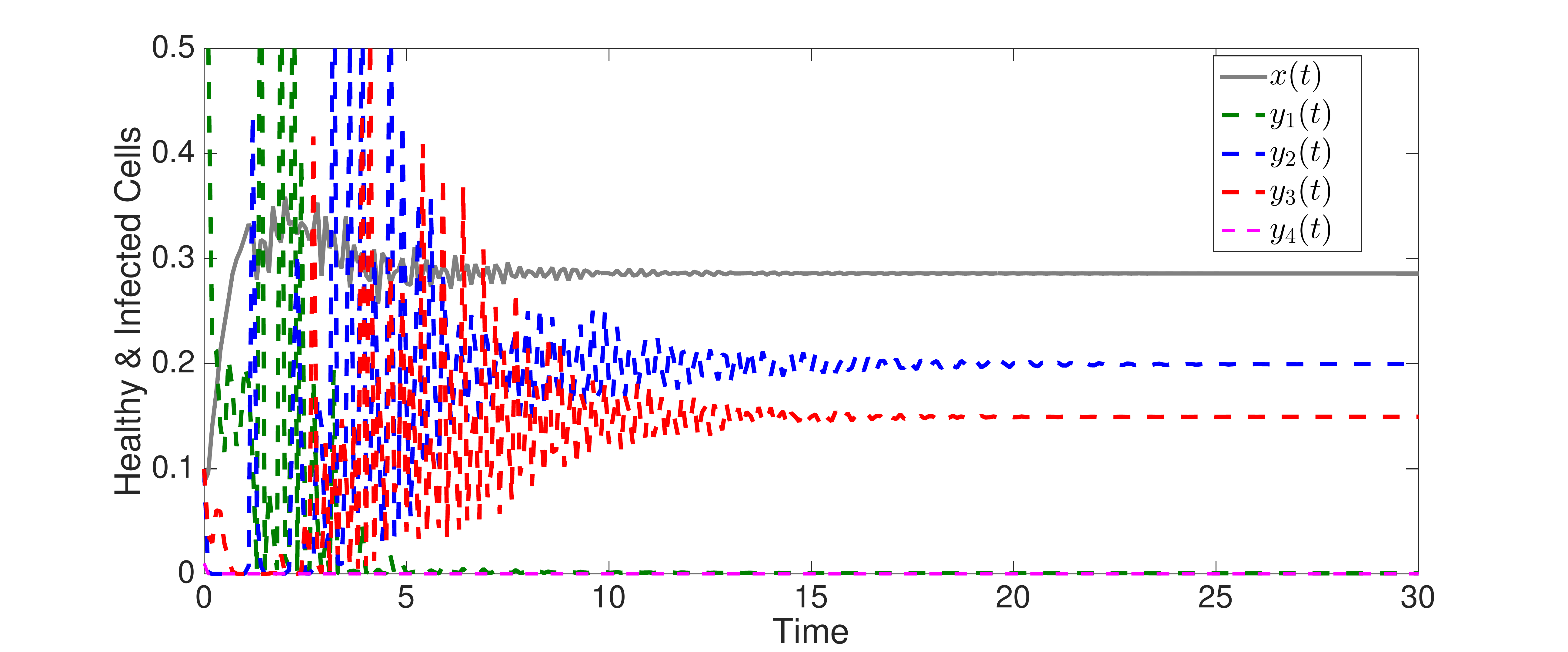}}
\subfigure[][]{\label{fig4d}\includegraphics[width=7.5cm,height=4.2cm]{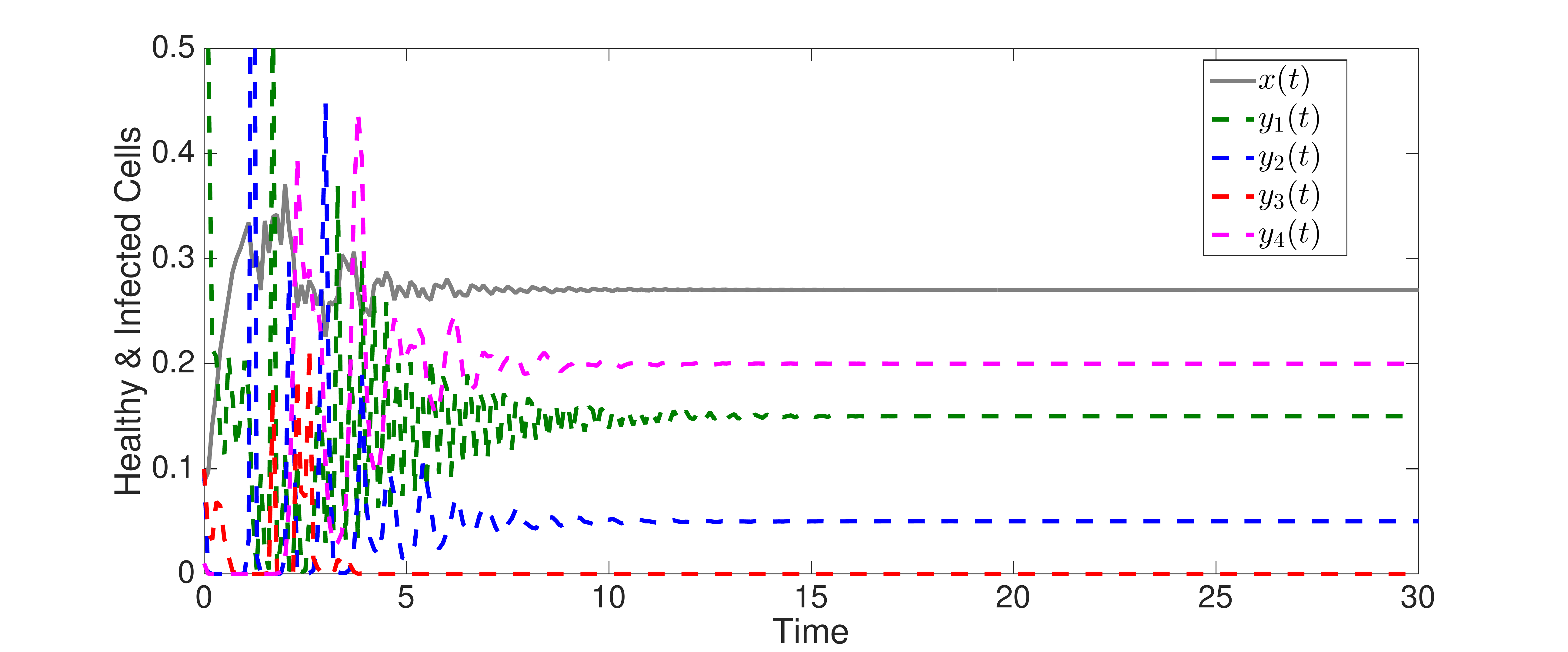}}
\caption{ (a) Healthy cells and Virus strains and (b) Immune response for simulations of system \eqref{ode4} depicting convergence to $\widetilde{\mathcal E}_1$.  Despite a ``single epitope advantage'' of $y_3$ for the corresponding parameter values, the mutant $y_2$ (resistant to immunodominant response $z_1$) excludes $y_3$ in the two-epitope scenario.  The parameters are as follows: $\mathcal R_1=12.5$, $\mathcal R_2=3$,  $\mathcal R_3=10$, $\mathcal R_4=2$, $\mathcal I_1=6.67$, $\mathcal I_2=5$, $\gamma_i=\gamma=70$ and $\sigma_j=10$ for $i=1,\dots 4$, $j=1,2$.  the initial conditions are $x(0)=\frac{1}{\mathcal R_1}$, $y_1(0)=1-\frac{1}{\mathcal R_1}$, $y_2(0)=y_3(0)=0.1$, $y_4(0)=0.001$, $z_1(0)=z_2(0)=0.01$ ($\approx \widetilde{\mathcal E}_0$, the wild-type viral steady state before immune response kicks in).  Here $\mathcal Q_1 < \mathcal R_2 <\mathcal Q_2$, and the comparison of \emph{single-epitope invasion eigenvalues} yields $\lambda_2=0.044\gamma < \lambda_3=1.86\gamma$, and single-epitope densities are $y_2^*=0.042 < y_3^*=0.65$.  Note that the delay in convergence to $\widetilde{\mathcal E}_1$ is due to $y_2$ transiently going to very low levels, and thus taking more time to invade.  (c) Convergence to $\overline{\mathcal E}_2$ after decreasing $\mathcal R_1$ from 12.5 to 12 and increasing $\mathcal R_2$ from 3 to 5  (d) Convergence to $\widehat{\mathcal E}_2$  after decreasing $\mathcal R_1$ from 12 to 11.8.  (d)  Convergence to $\mathcal E^{\dagger}_2$  after decreasing $\mathcal R_1$ from 11.8 to 11.5. (f) Convergence to $\widetilde{\mathcal E}_2$  after increasing $\mathcal R_4$ to 3.6.  Note $z_1$ and $z_2$ both persist for (c)-(f).}
\label{fig4}
  \end{figure}

  For the case $\mathcal R_4=\mathcal R$ and $\mathcal Q_2 < \mathcal R < \mathcal P_2$ , both equilibria $\widetilde{\mathcal E}_2$ and $\widehat{\mathcal E}_2$ are saturated, but the inequalities (\ref{inequ}) are not strict.  In this case, there are a continuum of saturated (neutrally stable) non-negative equilibria with $x(t)\rightarrow\frac{1}{\mathcal R}$ and $y_4(t)\rightarrow 1- \frac{\mathcal Q_2}{\mathcal R} - y_3^*$, where $0\leq y_3^* \leq 1- \frac{\mathcal Q_2}{\mathcal R}$.  The endpoints of the line of equilibria in the $(y_3,y_4)$ plane correspond to $\widetilde{\mathcal E}_2$ and $\widehat{\mathcal E}_2$, where $\widetilde y_3=0$ and $\widehat y_4=0$, respectively.  The bifurcation as $\mathcal R_4$ increases above $\mathcal R$, where Case 3 transitions to Case 6 and then to Case 2 of Theorem \ref{mainThm}, is illustrated in Figure \ref{fig3} in Appendix \ref{A2}.  When $\mathcal R_4\neq \mathcal R$, it is not possible for all $y_i$ to persist, as the dynamics will fall into one of the other cases in Theorems \ref{oneThm} and \ref{mainThm}.  These results illustrate Proposition \ref{CycleProp}, which precludes the possibility of equilibria with all components $y_{i}$ positive except in the case $\mathcal R_4\neq \mathcal R$ since the viral strains form a 4-cycle in the associated hypercube graph.

  An \emph{interesting finding} is that strain \emph{$y_2 \ (10)$  is present in all equilibria (and uniformly persistent) with mutation}, even if it has a much higher fitness cost than $y_3 \ (01)$, i.e. $\mathcal R_3>>\mathcal R_2$.   Thus, if escape occurs in the two epitope setting, the viral quasispecies always includes the mutant $y_2$ with resistance to the immunodominant epitope, $z_1$.  We can characterize the (linearized) invasion rate of an escape mutant, say $y_2$, from the single epitope, $z_1$, by calculating $\frac{\dot y_2}{y_2}$ at equilibrium $\overline{\mathcal E}_1$, to obtain $\lambda_2=\gamma_2\left(\frac{\mathcal R_2}{\mathcal Q_1}-1\right)=\frac{\mathcal R_2}{1+\mathcal R_1s_1}-1$. The mutant $y_2$ then converges to $y_2^*=1-\frac{1+\mathcal R_1s_1}{\mathcal R_2}$, in equilibrium $\widetilde{\mathcal E}_1$.  The comparable quantities for mutant $y_3$ in escaping the single epitope $z_2$ (in the absence of $z_1$) are $\lambda_3=\gamma_3\left(\frac{\mathcal R_3}{1+\mathcal R_1s_2}-1\right)$ and $y_3^*=1-\frac{1+\mathcal R_1s_2}{\mathcal R_3}$.  Even when $y_3$ (escaping $z_2$) has larger invasion characteristics in the single epitope escape setting ($y_3^*>y_2^*$ and $\lambda_3>\lambda_2$), $y_2$ will still persist and may actually exclude $y_3$ in the two epitope model.  Figure \ref{fig10a},\ref{fig10b} depict simulations of this scenario of $y_2$ excluding $y_3$ despite the larger selective advantage of $y_3$ in the single epitope setting.   Figure \ref{fig4} also shows further simulations of (\ref{ode4}), which are all consistent with assertions of Theorem \ref{oneThm} and Theorem \ref{mainThm}.

  There are two special cases of model (\ref{ode4}) where analysis further suggests the superior role of immunodominance over viral reproductive fitness.  First, if we relax the assumption of strict immunodominance hierarchy, i.e. we allow that $s_1=s_2$, then a ``strictly saturated'' non-negative equilibrium $\mathcal E^*$ has the following properties: $z_1^*>0\Leftrightarrow z_2^*>0$ and $y_2^*>0\Leftrightarrow y_3^*>0$.  The calculations for this case are presented in the Appendix \ref{A3}.  Essentially, non-trivial strictly saturated equilibria are of the form $\widehat{\mathcal E}_2$,   $\mathcal E_2^{\dagger}$ or $\mathcal E_2^{\ddagger}$ with the corresponding parameter regimes as defined in Table \ref{table}.  Therefore, in the case of equal immunodominance ($s_1=s_2$), no matter the fitness  of the distinct mutant strains $y_2$ and $y_3$, both strains can only persist together.  Next, consider the case where the viral fitness cost to each epitope is equal, say the cost is $c=1-f$, and the strict immunodominance holds, $s_1>s_2$.  Then $\mathcal R_2=\mathcal R_3=f\mathcal R_1$ and $\mathcal R_4=f^2\mathcal R_1$.  This scenario will be analyzed in further generality for $n$ epitopes in the next section.  The result for this special case of model (\ref{ode4}) is that one of the nested equilibria, $\overline{\mathcal E}_i$ or $\widetilde{\mathcal E}_i$ ($i=0,1,2$) will be globally asymptotically stable.   Thus, in contrast to the previous special case ($s_1=s_2$) where viral fitness did not determine persistence, here (in this special case of equal viral fitness costs) we find that persistence is determined by immunodominance.

     The above results indicate that the \emph{immunodominance hierarchy is the most important factor in directing epitope escape, more so than viral fitness cost}.  In addition, the persistent variants depend upon reproductive numbers and quantities defined in (\ref{2eqQ}), implying that the escape pathway depends upon the entire interacting system, not just the parameters associated with the single epitopes and corresponding resistant viral mutant strains.  Both the relative importance of immunodominance and the multi-epitope dependence align with findings in an \emph{in vivo} study of HIV patients \cite{liu2013vertical}.  From a broader ecological point of view, our results suggest top-down control of food webs, where top predators have more influence than intermediate species, along with the interconnectedness of complex ecological networks.



\subsection{Uniform fitness costs for escape on full $n$-epitope network}\label{EqFitSec}

We consider the full network for $n$ epitopes, in the special case where each viral mutation of an epitope incurs a uniform fitness cost, $c=1-f$, where $f\in(0,1)$ is the ratio between reproduction number of mutant and descendent strain.  In particular, indexing the wild-strain here as $y_1=y_w$ with fitness $\mathcal R_1$, then we make the following assumption on the $2^n$ viral strains in the full network:
\begin{align}
d(y_i,y_1)=p \Rightarrow \mathcal R_i=f^p\mathcal R_1, \quad \forall i\in[1,2^n], \label{FitCost}
\end{align}
  where $d(y_i,y_1)$ is the Hamming distance between the associated binary sequences of the viral strains as defined earlier.  Note that since the wild-strain $y_1$ is susceptible at all epitopes ($y_1\sim 0\cdots 0$), a viral strain $y_i$ with $d(y_i,y_1)=p, \ p\in[0,n]$ has mutated $p$ epitopes and thus has a (susceptible) epitope set of cardinality $n-p$, i.e. $|\Lambda_i|=n-p$.  Then, we can write the model (\ref{odeS}) as follows:
  \begin{align}
\dot x &= 1-x- x\sum_{i=1}^{2^n} \mathcal R_i y_i,  \label{odeU} \\
 \dot y_i& = \gamma_i y_i\left(f^p\mathcal R_1 x -1 -\sum_{j=1}^{n-p}  z_{i_j} \right), \quad \ i=1,\dots,2^n, \ \ d(y_i,y_1)=p,  \ \ \Lambda_i=\left\{i_1,\dots,i_{n-p}\right\}, \ \  p\in[0,n],  \notag \\
  \dot z_j &= \frac{\sigma_j}{s_j} z_j\left(\sum_{i: j\in \Lambda_i} y_i -s_j \right), \quad j=1,\dots,n. \notag
\end{align}

  As before, we assume the immunodominance hierarcy (\ref{immunodom}), $s_1\leq\cdots\leq s_n$.  We denote the viral strains associated with ``nested'' (sequential) escape, in addition to the wild strain $y_1$, as $y_2,\dots,y_{n+1}$.  This means that for $i=1,\dots,n$, the epitope set of $y_i$ is $\Lambda_i=\left\{ i,\dots, n \right\}$ (having escaped immune responses $z_1,\dots,z_{i-1}$) and $\Lambda_{n+1}=\emptyset$ {\tt so $y_2\thicksim (1,0,\cdots,0), y_3=(1,1,0,\cdots,0),\cdots, y_{n+1}=(1,1,\cdots,1)$. }

  Our main result of this section is, informally, that viral escape from $n$ epitopes follows a ``nested pattern'' when the immunodominance hierarchy (\ref{immunodom}) is strict. In other words, one of the nested equilibria, $\widetilde{\mathcal E}_{k+1}$ or $\overline{\mathcal E}_{k}$ (introduced in Section \ref{NestedSec}) is stable. Roughly speaking, if viral fitness costs of resistance are the same independent of CTL type, it is better for a mutant strain to escape the \emph{current} dominant
  CTL type than a sub-dominant one. 
   In the Appendix \ref{A4}, we show that two other classes of equilibria, namely ``strain-specific'' and ``one-mutation'' equilibria, are unstable in this case of equal fitness costs, even when inequalities (\ref{immunodom}) are not strict.  Now the main theorem is stated below.

\begin{theorem}\label{ThmEqFit}
Consider model (\ref{odeU}), the full network on $n$ epitopes ($m=2^n$) with equal fitness costs (\ref{FitCost}) and strict immunodominance hierarchy (strict inequalities in (\ref{immunodom})).  Suppose $y_i$, $i\in [1,n+1]$, is indexed so that $\Lambda_i=\left\{ i,\dots, n \right\}$ for $i=1,\dots,n$, $\Lambda_{n+1}=\emptyset$.  Then  $y_i(t)\rightarrow 0$ as $t\rightarrow \infty$ for all $i\in[n+2,2^n]$, and Theorem \ref{NestedThm} holds in system (\ref{odeU}).  In particular, $y_i,z_i$ are uniformly persistent for $1\leq i \leq k\leq n$ when $\mathcal R_k>\mathcal Q_k$ (and $y_{k+1}$ is also persistent if $\mathcal R_{k+1}>\mathcal Q_k$).
\end{theorem}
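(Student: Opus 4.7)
The plan is to reduce the full-network dynamics \eqref{odeU} to those of the nested subnetwork by invoking the saturation machinery of Theorem~\ref{genThm}. I would first identify the stable nested equilibrium supplied by Theorem~\ref{NestedThm} (either $\overline{\mathcal E}_k$ or $\widetilde{\mathcal E}_{k+1}$), embed it in the full state space by setting $y_\ell=0$ for every non-nested $\ell\in[n+2,2^n]$, and then verify that the inequalities \eqref{inequ} hold \emph{strictly} at every non-nested strain. Theorem~\ref{genThm} will then force $y_\ell(t)\to 0$ for all non-nested $\ell$, deliver uniform persistence of the nested variants with asymptotic averages converging to the equilibrium values, and yield invariant limiting dynamics identical to those of the nested subnetwork, so that Theorem~\ref{NestedThm} transfers verbatim.

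The central calculation is the invasion quantity of a non-nested strain at the embedded equilibrium. For $y_\ell$ with mutation set $P_\ell\subseteq[1,n]$, $|P_\ell|=p$, write $a=|P_\ell\cap[1,k]|$ and $b=|P_\ell\cap[k+1,n]|$, so that $p=a+b$. Using the $y_1$-equation in the form $\sum_{j=1}^{k}z_j^*=\mathcal R_1 x^*-1$, the invasion quantity reduces to
\begin{equation*}
\phi(\ell):=\mathcal R_\ell x^*-1-\sum_{j\in[1,k]\setminus P_\ell}z_j^*=(f^p-1)\mathcal R_1 x^*+\sum_{j\in P_\ell\cap[1,k]}z_j^*.
\end{equation*}
The key structural fact is the strict monotonicity $z_1^*>z_2^*>\cdots>z_k^*$: for $\widetilde{\mathcal E}_{k+1}$ this is immediate from $z_j^*=f^{j-1}(1-f)\mathcal R_1/\mathcal R_{k+1}$, while at $\overline{\mathcal E}_k$ the same geometric formula governs $j<k$, and the tail bound $z_k^*\le(1-f)/f<z_{k-1}^*$ follows by combining $\mathcal R_{k+1}=f\mathcal R_k\le\mathcal Q_k$ with $\mathcal R_k>\mathcal Q_k$. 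With this monotonicity, $\max_{|S|=a}\sum_{j\in S}z_j^*=(1-f^a)\mathcal R_1 x^*$ is attained uniquely at $S=\{1,\dots,a\}$, giving
\begin{equation*}
\phi(\ell)\le f^a(f^b-1)\mathcal R_1 x^*\le 0,
\end{equation*}
with equality forcing $b=0$ and $P_\ell=\{1,\dots,a\}$, a nested strain. Hence the inequality is strict at every non-nested $\ell$.

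Uniqueness of the embedded equilibrium in its positivity class follows from Proposition~\ref{prop33} and the upper-triangular structure of the nested interaction submatrix $A'$. Invoking Theorem~\ref{genThm} then delivers local stability, $x(t)\to x^*$, extinction of every non-nested $y_\ell$, and the persistence and time-average conclusions for the nested variants; the restricted limiting dynamics satisfy \eqref{linEq}--\eqref{invSys} on the nested subnetwork, so Theorem~\ref{NestedThm} applies directly and completes the proof.

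The step I expect to be the main obstacle is establishing the tail monotonicity $z_{k-1}^*>z_k^*$ at $\overline{\mathcal E}_k$, since these components have fundamentally different algebraic forms; it is precisely the equal-fitness-cost identity $\mathcal R_{k+1}=f\mathcal R_k$ together with the defining inequality $\mathcal R_{k+1}\le\mathcal Q_k$ of that regime that forces the geometric ordering and thereby enables the worst-case selection of $S=\{1,\dots,a\}$ that drives the strict saturation bound.
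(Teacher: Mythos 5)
Your proposal is correct, and its skeleton coincides with the paper's: embed the nested equilibrium ($\overline{\mathcal E}_k$ or $\widetilde{\mathcal E}_{k+1}$) in the full state space, prove the saturation inequalities \eqref{inequ} hold strictly at every non-nested strain using the monotonicity $z_1^*>\cdots>z_k^*$, and then invoke Theorem~\ref{genThm} together with Theorem~\ref{NestedThm}. Where you differ is in how the invasion inequality is closed, and your version is the sharper one. The paper bounds $\sum_{i\in\Lambda_\ell\cap[1,k]}\overline z_i$ from below by the single smallest term $\overline z_k$ (after splitting into the cases $p>k$, $p=k$, $p\le k-1$); as displayed, that chain only closes when $p\ge k-1$, since for $p<k-1$ one has $f^p\mathcal R_1x^*-1-\overline z_k=(f^p-f^{k-1})\mathcal R_1x^*\ge 0$, so more than one $z$-term must be retained. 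Your reformulation via the $y_1$ equilibrium identity, $\phi(\ell)=(f^p-1)\mathcal R_1x^*+\sum_{j\in P_\ell\cap[1,k]}z_j^*$, followed by the subset maximization $\max_{|S|=a}\sum_{j\in S}z_j^*=(1-f^a)\mathcal R_1x^*$, handles all $p$ uniformly, yields $\phi(\ell)\le f^a(f^b-1)\mathcal R_1x^*\le 0$, and identifies the equality case as exactly the nested strains — so it both covers the case the paper's displayed estimate leaves loose and makes the strictness transparent. Your tail argument $z_k^*\le(1-f)/f<z_{k-1}^*$ at $\overline{\mathcal E}_k$ (using $\mathcal R_{k+1}=f\mathcal R_k\le\mathcal Q_k<\mathcal R_k$) is a valid replacement for the paper's consecutive-difference computation, which only treats $2\le i\le k-1$ explicitly and asserts the step to $\overline z_k$ "similarly." The only point worth making explicit in a final write-up is that the bound $\sum_{j=1}^{k}z_j^*=\mathcal R_1x^*-1\le(1-f^k)\mathcal R_1x^*$ used when $a=k$ at $\overline{\mathcal E}_k$ relies on $\mathcal R_{k+1}\le\mathcal Q_k$, which is exactly the defining condition of that regime, so nothing is missing.
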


\begin{proof}
 If $\mathcal R_1> \mathcal Q_1$, let $k$ be the largest integer in $[1,n]$ such that $f^{k-1}\mathcal R_1=\mathcal R_k > \mathcal Q_k$, otherwise let $k=0$.  We will apply Theorem \ref{genThm}.  It suffices to check the invasion rate for $y_{\ell}$ in (\ref{inequ}) for $\ell\in [n+2,2^n]$ since Theorem \ref{NestedThm} establishes the result in the perfectly nested submodel consisting of $y_1,\dots,y_{n+1}$ and $z_1,\dots,z_n$.  First suppose that $f^k\mathcal R_1=\mathcal R_{k+1}\leq \mathcal Q_k$.  Let $\ell\in [n+2,2^n]$ and it suffices to consider $\Lambda_{\ell} \cap [1,k]$ since the calculations will be considered at equilibrium $\overline{\mathcal E}_k$ where $z_i^*=0$ for all $i\geq k+1$.  Suppose that $d(y_{\ell},y_1)=p$ where $p\in [1,n-1]$.  Then $\mathcal R_{\ell}=f^p \mathcal R_1$.   Consider the invasion rate for $y_{\ell}$ at the equilibrium $\overline{\mathcal E}_k$:
\begin{align}
\frac{\dot y_{\ell}}{\gamma_{\ell} y_{\ell}}&= \frac{f^p \mathcal R_1}{\mathcal Q_k} -1 - \sum_{i\in\Lambda_{\ell}\cap [1,k]} \overline z_i \notag
\end{align}
If $p>k$, then clearly $$\frac{\dot y_{\ell}}{\gamma_{\ell} y_{\ell}}\leq \frac{f^p \mathcal R_1}{\mathcal Q_k} -1< \frac{f^k \mathcal R_1}{\mathcal Q_k} -1\leq 0,
$$ {\tt by the definition of $k$.}
If $p\leq k$, $|\Lambda_{\ell}|=n-p\geq n-k$ and note that $\Lambda_{\ell}\neq \Lambda_{k+1}=\left\{ k+1,\dots, n \right\}$ (where $\Lambda_{k+1}$ is the epitope set of $y_{k+1}$).  Thus,  $[1,k]\cap\Lambda_{\ell}\neq \emptyset$.  By (\ref{equilib2}), for $2\leq i\leq k-1$,
\begin{align}
\overline z_i - \overline z_{i-1}&= \frac{\mathcal R_{i}-\mathcal R_{i+1}-(\mathcal R_{i-1}-\mathcal R_{i})}{\mathcal Q_k} \notag \\
&=  \frac{-f^{i-2}\mathcal R_1}{\mathcal Q_k}\left(f^2-2f+1\right) \notag \\
&=  \frac{-f^{i-2}\mathcal R_1}{\mathcal Q_k}\left(f-1\right)^2 < 0 \quad \text{for} \ \ f\in(0,1). \notag
\end{align}
 Similarly $\overline z_k-\overline z_{k-1}<0$, and thus at equilibrium $\overline{\mathcal E}_k$, we find $$\overline z_k<\overline z_{k-1}<\dots < \overline z_{1}.$$  If $p=k$, then
\begin{align}
\frac{\dot y_{\ell}}{\gamma_{\ell} y_{\ell}}&\leq \frac{f^p \mathcal R_1}{\mathcal Q_k} -1 -  \overline z_k  \notag\\
& \leq \frac{f^{k} \mathcal R_1}{\mathcal Q_k} -1 - \left(\frac{f^{k-1}\mathcal R_1}{\mathcal Q_k}-1\right)   \notag \\
& <0 \quad  \text{for} \ \ f\in(0,1). \notag
\end{align}
 If $p\leq k-1$, then {\tt $[1,k-1]\cap\Lambda_{\ell}\neq \emptyset$ so}
\begin{align}
\frac{\dot y_{\ell}}{\gamma_{\ell} y_{\ell}}&< \frac{f^p \mathcal R_1}{\mathcal Q_k} -1 -  \overline z_k  \notag\\
& \leq \frac{f^{k-1} \mathcal R_1}{\mathcal Q_k} -1 - \left(\frac{f^{k-1}\mathcal R_1}{\mathcal Q_k}-1\right)   \notag \\
& =0 \quad  \text{for} \ \ f\in(0,1). \notag
\end{align}
Therefore, the equilibrium $\overline{\mathcal E}_k$ is saturated with inequalities (\ref{inequ}) strictly holding and $\overline{\mathcal E}_k$ is unique in its positivity class.  Thus Theorem \ref{genThm} can be applied to obtained the conclusions of Theorem \ref{NestedThm}.  If $f^k\mathcal R_1=\mathcal R_{k+1}> \mathcal Q_k$, then a similar argument works with $x^*=\frac{1}{\mathcal R_{k+1}}$ instead of $x^*=\frac{1}{\mathcal Q_k}$, which shows that $\widetilde{\mathcal E}_{k+1}$ is stable in that case.
 \end{proof}

In Figure \ref{fig6}, we illustrate Theorem \ref{ThmEqFit} by numerical solution of the model (\ref{odeU}) in the case of $n=3$ epitopes.  The simulations show that after some transient dynamics, only the viral strains associated with the nested network, $y_1,y_2,y_3,y_4$, persist.  In other words, the full network of $n=3$ epitopes (displayed in Figure \ref{fig2a}) converges to the perfectly nested subnetwork (displayed in Figure \ref{fig2c}).

   \begin{figure}[t!]
\subfigure[][]{\label{fig6a} \includegraphics[width=7.5cm,height=4cm]{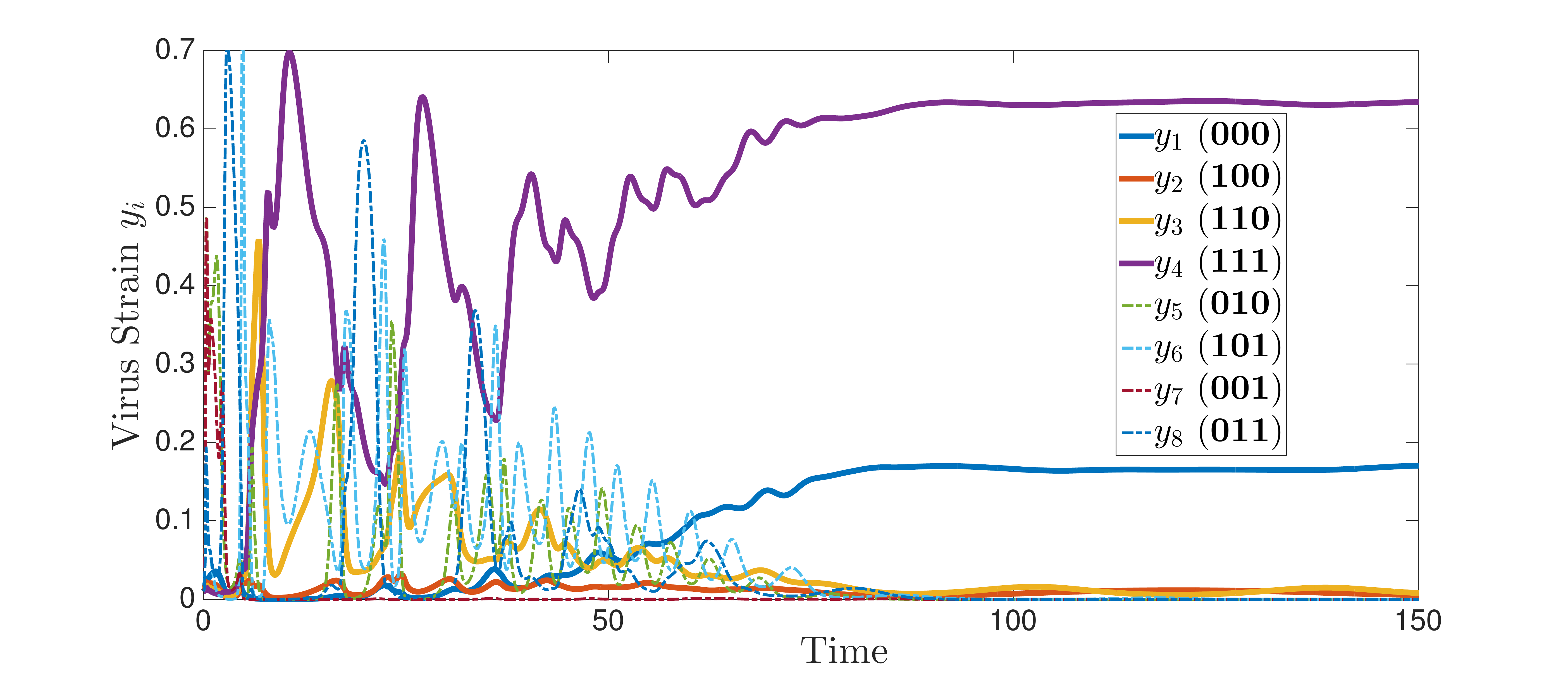}}
\subfigure[][]{\label{fig6b} \includegraphics[width=7.5cm,height=4cm]{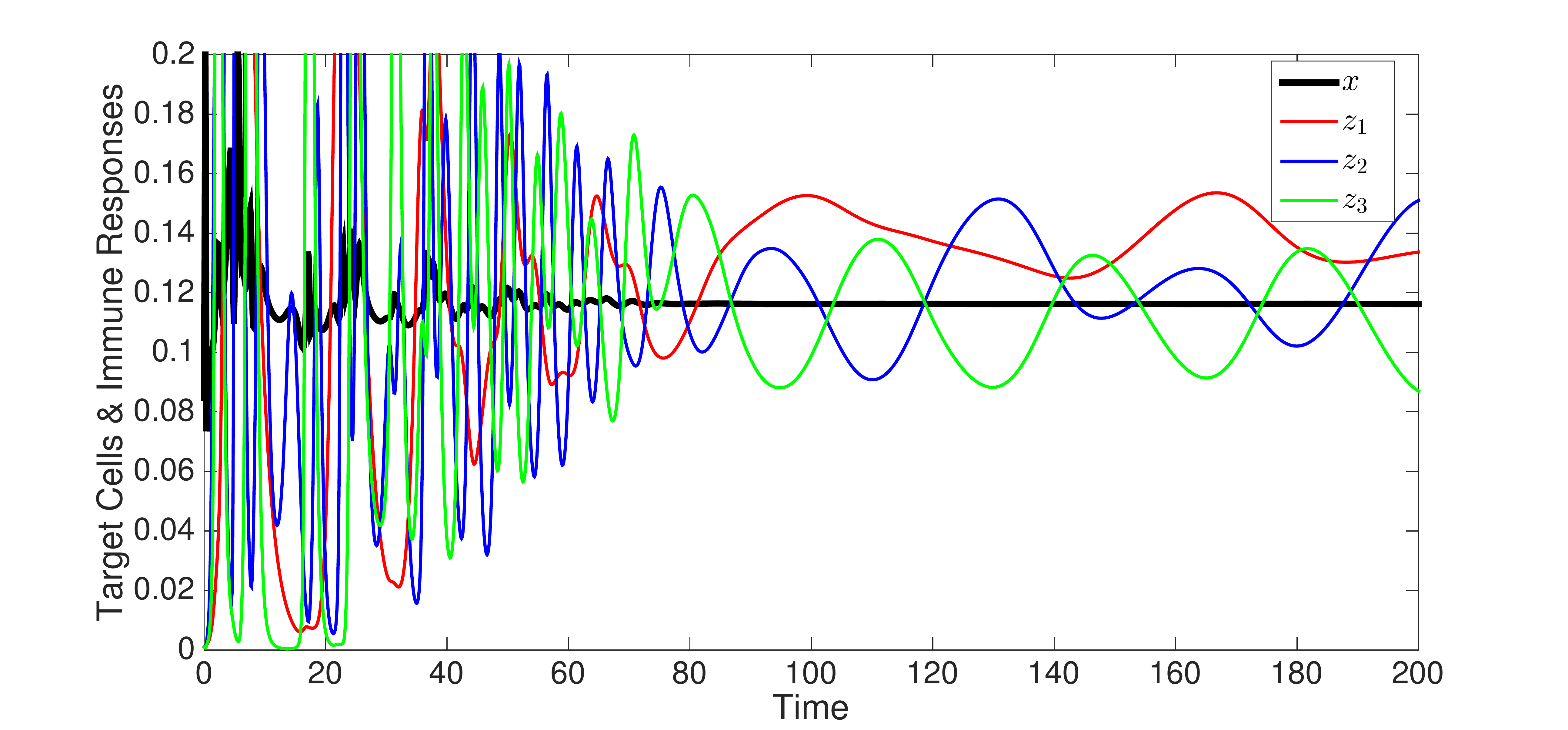}}
\caption{\footnotesize{Example dynamics in case of uniform fitness cost and $n=3$ epitopes.  The solutions converge to the ``nested equilibrium'' $\widetilde{\mathcal E}_4$ where $y_1,y_2,y_3,y_4$ persist and other $y_i$ go extinct.  Note that there are transient oscillations in the persistent viral and immune variant populations for a large period of time before convergence to the equilibrium.  The parameters are as follows: wild-type viral fitness $\mathcal R_1=11.8$, mutant fitness proportion $f=0.9$, $s_1=0.167$, $s_2=0.175$, $s_3=0.185$, $\gamma_i=3.5$ for $1\leq i\leq 4$, $\gamma_i=17.5$ for $5\leq i\leq 8$, $\sigma_1=0.5$, $\sigma_2=\sigma_3=2.5$.  Initial conditions are $x(0)=1$, $y_i(0)=0.01$, $z_j(0)=0.001$, $1\leq i\leq 8$, $1\leq j\leq 3$. }}
  \label{fig6}
  \end{figure}

  \section{Discussion}
  In this paper, we analyzed a virus model consisting of target cells, multiple virus strains and several immune response populations.  The interaction of virus and immune response is described by a network reflecting the avidity of each distinct immune response in recognizing each particular virus strain.  We find some general conditions on stability and feasibility of equilibria, along with uniformly persistent virus and immune response variants by utilizing Lyapunov function techniques.

 We specialize the model to consider the scenario where the immune response populations are $n$ different CTL lines, each specific to a particular epitope, and there are $2^n$ virus strains containing all possible combinations of (resistance conferring) epitope mutations.  In this case, the viral strains in the virus-immune network can be translated to an $n$-dimensional hypercube graph representative of the potential pathways of immune escape by the virus.  The number of uniformly persistent viral strains and CTL populations can be built up in an ordered fashion dependent on derived invasion thresholds for ``strain-specific'' and ``perfectly nested'' subgraphs.   For the full network including $2^n$ viral strains, in certain cases we sharply characterize dynamics and stability of distinct equilibria.  In particular, we find that for 
 \begin{itemize}
 \item[(i)] \emph{$n=2$ epitopes:}  the escape pathway (persistent viral strain set) always includes the mutant resistant to the immunodominant epitope, even when it suffers a relatively high fitness cost,
 \item[(ii)] \emph{equal fitness costs for each of $n$ epitopes:}  the network of $2^n$ viral strains always converges to a perfectly nested subgraph (ordered according to immunodominance) with less than or equal to $n+1$ strains.
 \end{itemize}

 Our theoretical results indicate that a diverse viral ``quasispecies'' can be built through resistance mutations at multiple epitopes and the immunodominance hierarchy is the most important factor determining the escape pathway.  Prior research on HIV data has found that CTL immune responses drive within-host HIV diversity \cite{pandit2014reliable} and immunodominance hierarchy is the major factor shaping viral escape from multiple epitopes \cite{Barton,liu2013vertical}.  Thus our model framework and analysis can shed light on the dynamic network of interacting virus and immune response variants, which may be important to understand for development of effective immunotherapies.

 Future research can build upon the results presented here in several ways.  First, since rapid evolution of HIV due to CTL pressure is motivation for our model, mutation between the different viral strains can be explicitly included in the model.  Preliminary simulations from stochastic versions of the model show that qualitative dynamics are preserved under mutation.  Rigorous global perturbation arguments in the deterministic setting may be attempted to show the effect of small mutation rates, however for the general case, this would rely upon the conjecture of global stability for equilibria.  Thus, another important theoretical question is proving global stability when uniform persistence occurs.  Finally, further extensions of our work can be applicable to coevolution of virus and cross-reactive antibodies during HIV infection \cite{luo2015competitive}, along with other complex ecological networks.

\section*{Acknowledgement}
We would like to thank the editor and anonymous reviewers for comments which have led to an improved manuscript.  We also acknowledge support by the  Simons Foundation Grant 355819.

\appendix
\section{Appendix}
\subsection{Existence of saturated equilibrium} \label{A1}
\begin{proof}[Proof of Proposition \ref{SatExist}]
We find it somewhat easier to work with the unscaled system (\ref{ode2}) so we introduce
some new parameters into it. Let $\epsilon>0$ be so small that $b-\epsilon(m+n)>0$
and $0\le \lambda\le 1$ be a homotopy parameter. Our perturbed system is given by:
\begin{eqnarray*}
  X' &=& b-cX -\lambda X \sum_i \beta_iY_i-\epsilon(m+n) \\
  Y_i' &=& \lambda Y_i\left(\beta_i X-\sum_j r_{ij}Z_j \right)-\delta_i Y_i+\epsilon , 1\le i\le m,\\
  Z_j' &=& \lambda q_jZ_j \sum_i r_{ij}Y_i-\mu_j Z_j+\epsilon q_j , 1\le j\le n.
\end{eqnarray*}
We refer to the vector field on the right side as $G(W,\lambda,\epsilon)$, where $W=(X,Y,Z)\in \mathbb{R}^{m+n+1}_+$. Then
$G(W,1,0)$ is the vector field given in equations (\ref{ode2}).

Straightforward calculation establishes that
\begin{eqnarray*}
  (X+\sum_i Y_i+\sum_j Z_j/q_j)' &=& b-cX-\sum_i \delta_iY_i-\sum_j\frac{\mu_j}{q_j}Z_j \\
   &\le& b-d(X+\sum_i Y_i+\sum_j Z_j/q_j)
\end{eqnarray*}
for $d=\min\{c,\delta_i,\mu_j\}$. Fix $p>b/d$ and let
$$
\mathcal U=\{W\in \mathbb{R}^{m+n+1}_+: X+\sum_i Y_i+\sum_j Z_j/q_j\le p\}
$$

We claim that there are no equilibria of $G(W,\lambda,\epsilon)$ on the boundary of $\mathcal U$.
Notice that any equilibrium $E=(X,Y,Z)\in \mathcal U$ satisfies $Y_i>0$ and $Z_j>0$ because of the perturbation terms and because $b > \epsilon(m + n)$. If $E$ belongs to the boundary
of $\mathcal U$, then either $X=0$ or $X+\sum_i Y_i+\sum_j Z_j/q_j=p$. Each of these is easy to rule out.
Suppose, for example the latter holds. Then the left side of the differential inequality above vanishes
at $E$ so we have $0\le b-pb$, a contradiction to $p>b/d$.

Now we can employ degree theory, as in Hofbauer and Sigmund \cite{hofbauer1998evolutionary}. By Homotopy invariance of degree
we have that $\hbox{deg}(G(\bullet,1,\epsilon),\mathcal U)= \hbox{deg}(G(\bullet,0,\epsilon),\mathcal U)$
and the latter is easy to compute since it is linear and has a unique equilibrium $\tilde E\in \mathcal U$
$$
\hbox{deg}(G(\bullet,0,\epsilon),\mathcal U)=\hbox{sgn} \det G_W(\tilde E,0,\epsilon)=\hbox{sgn} [(-1)^{m+n+1}c\prod_i \delta_i\prod_j \mu_j]=(-1)^{m+n+1}
$$

As the degree is nonzero, $G(\bullet,1,\epsilon)$ has at least one equilibrium in the interior of $\mathcal U$ for each
small $\epsilon$. Now, the argument for the existence of a saturated equilibrium of $G(\bullet,1,0)$
follows as in Theorem 13.4.1 in Hofbauer \& Sigmund's text \cite{hofbauer1998evolutionary} by taking limits as $\epsilon\to 0$.
\end{proof}

\subsection{Strain-specific subnetwork}\label{ssA}
\begin{proof}[Proof of Proposition \ref{ssF}]
Consider system \eqref{odeS} under the prescribed assumptions of Proposition \ref{ssF}.  We claim that any equilibrium $\mathcal E^*$ with $\Omega_z\subseteq [1,n]$ and $\Omega_y\subseteq [1,n+1]$ is saturated only if $\Omega_z= [1,n]$ and $\Omega_y\supseteq [1,n]$.  In other words $\mathcal E^{\dagger}_n$ and  $\mathcal E^{\ddagger}_{n+1}$ are the only strain-specific equilibria which can be saturated in the full hypercube network.  Suppose by way of contradiction that we can choose $i,j\in [1,n]$ such that $i\in \Omega_z$ and $j\notin\Omega_z$.  Then consider strain $y_{\ell}$ such that $\Lambda_{\ell}=\left\{i,j\right\}$, i.e. $y_{\ell}\leftrightarrow y_i$ and $y_{\ell}\leftrightarrow y_j$.  At equilibrium $\mathcal E^*$, its invasion rate (on of the ``saturated'' conditions \eqref{inequ}) is given by $\frac{\dot y_{\ell}}{\gamma_{\ell} y_{\ell}}= (\mathcal R_{\ell}-\mathcal R_i)/\mathcal P$, where $\mathcal P= \mathcal P_{\mathcal J}$ or $\mathcal R_{n+1}$ with $\mathcal J\subset [1,n]$.  Since $\mathcal R_{\ell}>\mathcal R_i$, equilibrium $\mathcal E^*$ can not be saturated.  The conditions for equilibria $\mathcal E^{\dagger}_n$  to be saturated is as follows in the general model:
\begin{align}
 \left( |\Lambda_{\ell}| - 1 \right) \mathcal P_n + \mathcal R_{\ell} \leq  \sum_{i\in\Lambda_{\ell}} \mathcal R_i  \quad \forall \ell \in [n+1, 2^n]. \notag
 \end{align}
 An analogous statement holds for $\mathcal E^{\ddagger}_{n+1}$ in case (ii).  
 \end{proof}

\subsection{Two-epitope model: dynamics when $s_1<s_2$} \label{A2}
\begin{proof}[Proof of Theorems \ref{oneThm} and \ref{mainThm}]
We apply Theorem \ref{genThm} for each equilibrium and case. Note that the Lyapunov function $W$ is the same as in the proof of Theorem \ref{genThm} except that
$s_j$ replaces $\rho_j$.  First in Theorem \ref{oneThm}:

\textbf{Case i.}:
\begin{align*}
\dot W= -\frac{1}{x}(x-1)^2 - \sum_{i=1}^4 \left(1- \mathcal R_i  \right) y_i - \sum_{i=1}^2  s_iz_i
\end{align*}

\textbf{Case ii.}:
\begin{align*}
\dot W= -\frac{1}{\mathcal R_1 x}(\mathcal R_1 x-1)^2 - \sum_{i=2}^4 \frac{y_i}{\mathcal R_0} \left(\mathcal R_1- \mathcal R_i  \right)  - \sum_{i=1}^2  s_iz_i
\end{align*}

\textbf{Case iii.}:
\begin{align*}
\dot W= -\frac{1}{\mathcal Q_1 x}(\mathcal Q_1 x-1)^2 - \frac{y_3}{\mathcal Q_1} \left(\mathcal R_1- \mathcal R_3  \right) - \sum_{i=2,4} \frac{y_i}{\mathcal Q_1} \left(\mathcal Q_1- \mathcal R_i  \right)  - z_2(s_2-s_1)
\end{align*}

\textbf{Case iv.}:
\begin{align*}
\dot W= -\frac{1}{\mathcal R_2 x}(\mathcal R_2 x-1)^2 - \frac{y_3}{\mathcal R_2} \left(\mathcal R_1- \mathcal R_3  \right) - \frac{y_4}{\mathcal R_2} \left(\mathcal R_2- \mathcal R_4  \right)  - \frac{z_2}{\mathcal R_2}\left(\mathcal Q_2-\mathcal R_2\right)
\end{align*}

In case iii, on the attracting invariant set $\mathcal L$ (same $\mathcal L$ as in proof of Theorem \ref{genThm}), we have $x=1/\mathcal Q_1$, $y_3=y_4=0$ and $z_2=0$.  If $\mathcal R_1<\mathcal Q_1$, then $\dot W=0 \Rightarrow y_2=0$, otherwise $\mathcal R_1=\mathcal Q_1\Rightarrow \dot y_2=0$.  Either way the equation $\dot x=0$ implies that $y_1=y_1^*$ and $y_2=0$.   Thus $\dot y_1=0$, which implies $z_1=z_1^*$.  So the invariant omega limit set, which is nonempty by compactness of orbits, is
$\mathcal L = \{\bar{\mathcal E}_1\}$ and we get convergence.

In case iv, i.e. $\mathcal Q_1<\mathcal R_2 \leq \mathcal Q_2$, then we prove $\widetilde{\mathcal E}_1$ is GAS.  Here, $\dot W=0$ iff $x=x^*=\frac{1}{\mathcal R_{1} }$, $y_3=y_4=0$, and $z_2=0$.  (Note if $\mathcal Q_2>\mathcal R_2$, $z_2=0$ is immediate.  If not, we can still reason that the asymptotic average of $z_2$ must converge to $z_2^*=0$, hence $z_2=0$.)  In addition by Theorem \ref{genThm}, we can obtain  $y_2=y_2^*=1-\frac{\mathcal Q_1}{\mathcal R_1}$.  The last relation combined with $x=x^*$ implies that $y_1=y_1^*=s_1$.  Then $\dot{y}_1=0$ implies $z_1=z_1^*$.   Thus $\mathcal L$ consists solely of the equilibrium $\widetilde{\mathcal E}_1$.

Next in Theorem \ref{mainThm}:

\textbf{Case 1}:
$(x^*,y^*,z^*)=\bar{\mathcal E}_2$:
\begin{align*}
\dot W &= \frac{-1}{\mathcal Q_2 x}\left(\mathcal Q_2 x -1\right)^2 - \frac{y_3}{\mathcal Q_2} \left( \mathcal Q_2 - \mathcal R \right)- \frac{y_4}{\mathcal Q_2} \left( \mathcal Q_2 - \mathcal R_4 \right).
\end{align*}
Notice that $\dot W\leq 0$ when $\mathcal R\leq \mathcal Q_2$ and $\mathcal R_4\leq \mathcal Q_2$.  Also, the equilibrium $\bar{\mathcal E}_2$ is non-negative when $\mathcal R_2> \mathcal Q_2$.   If $\mathcal R < \mathcal Q_2$ and $\mathcal R_4 < \mathcal Q_2$, applying Theorem \ref{genThm}, all inequalities (\ref{inequ}) are strict and only two strains, $y_1$ and $y_2$ have non-empty epitope sets,  and therefore  $\bar{\mathcal E}_2$ is globally asymptotically stable.
If $\mathcal R_4 =\mathcal Q_2$, then the differential equations in (\ref{invSys}) hold along with $\dot y_4=0$ and $\sum_{i=1,2,4} \mathcal R_i  y_i=\mathcal Q_2-1$.  Taking asymptotic averages as in the proof of Theorem 3.1 in \cite{browneNest}, we obtain that indeed $y_4=0$, and we similarly obtain that  $\bar{\mathcal E}_2$ is globally asymptotically stable.

\textbf{Case 2}:
$(x^*,y^*,z^*)=\widetilde{\mathcal E}_2$:
\begin{align*}
\dot W &= \frac{-1}{\mathcal \mathcal R_4 x}\left(\mathcal R_4 x -1\right)^2 - \frac{y_3}{\mathcal R_4} \left( \mathcal R_4 - \mathcal R \right),
\end{align*}
Notice that $\dot W\leq 0$ when $\mathcal R< \mathcal R_4$.  Also, the equilibrium $\widetilde{\mathcal E}_2$ is non-negative when $\mathcal R_4> \mathcal Q_2$.  Applying Theorem \ref{genThm}, we obtain that $\widetilde{\mathcal E}_2$ is globally asymptotically stable.

\textbf{Case 3}:
$(x^*,y^*,z^*)=\widehat{\mathcal E}_2$:
\begin{align*}
\dot W &= \frac{-1}{\mathcal R x}\left(\mathcal R x -1\right)^2 - \frac{y_4}{\mathcal R} \left( \mathcal R - \mathcal R_4 \right),
\end{align*}
Notice that $\dot W\leq 0$ when $\mathcal R_4 \leq \mathcal R$.  Also, the equilibrium $\widehat{\mathcal E}_2$ is non-negative when $ \mathcal Q_2 < \mathcal R < \mathcal P_2$.  The result is a direct consequence of Theorem \ref{genThm}.

\textbf{Case 4}:
$(x^*,y^*,z^*)=\mathcal E^{\dagger}_2$:
\begin{align*}
\dot W &= \frac{-1}{\mathcal P_2 x}\left(\mathcal P_2 x -1\right)^2 - \frac{y_1}{\mathcal P_2} \left( \mathcal R - \mathcal P_2 \right)- \frac{y_4}{\mathcal P_2} \left( \mathcal P_2 - \mathcal R_4 \right),
\end{align*}
Notice that $\dot W\leq 0$ when $\mathcal R \geq \mathcal P_2$ and $\mathcal R_4 \leq \mathcal P_2$.  The equilibrium $\mathcal E^{\dagger}_2$ is non-negative when $\min(\mathcal R_2, \mathcal R_3)> \mathcal P_2$.  Global stability follows from Theorem \ref{genThm}.

\textbf{Case 5}:
$(x^*,y^*,z^*)=\mathcal E^{\ddagger}_2$:
\begin{align*}
\dot W &= \frac{-1}{\mathcal \mathcal R_4 x}\left(\mathcal R_4 x -1\right)^2 - \frac{y_1}{\mathcal R_4} \left( \mathcal R - \mathcal R_4 \right),
\end{align*}
Notice that $\dot W\leq 0$ when $\mathcal R_4 \leq \mathcal R$.  Also, the equilibrium $\mathcal E^{\dagger}_2$ is non-negative when $\mathcal R_4> \mathcal P_2$.  Global stability follows from Theorem \ref{genThm}.

\textbf{Case 6}:
$(x^*,y^*,z^*)=\mathcal E(y_4^*)$ ($\mathcal R_4=\mathcal R$):
\begin{align*}
\dot W &= \frac{-1}{\mathcal R x}\left(\mathcal R x -1\right)^2 ,
\end{align*}
Notice that $\dot W\leq 0$ and the equilibrium $\mathcal E(y_4^*)$ is non-negative when $ \mathcal Q_2 \leq \mathcal R < \mathcal P_2$.  Thus $\mathcal E(y_4^*)$ is saturated, though not strictly.  Local stability follows from Theorem \ref{genThm}.
\end{proof}

In Figure \ref{fig3}, we illustrate the bifurcation at $\mathcal R_4=\mathcal R$.  

\renewcommand{\thefigure}{A\arabic{figure}}

\setcounter{figure}{0}

 \begin{figure}[t!]
\subfigure[][]{\label{fig3a} \includegraphics[height=3cm,width=4.5cm]{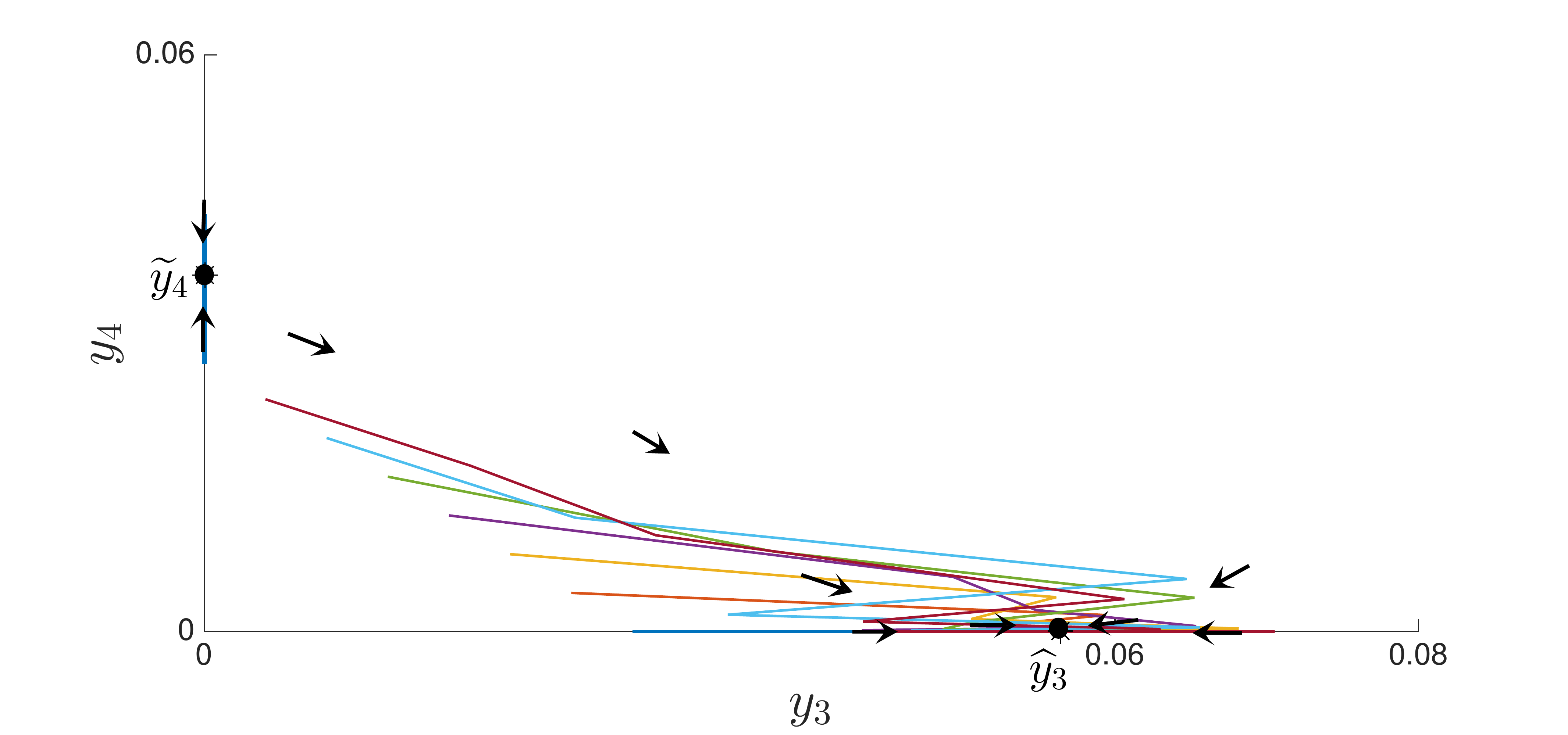}}
\subfigure[][]{\label{fig3b} \includegraphics[height=3cm,width=4.7cm]{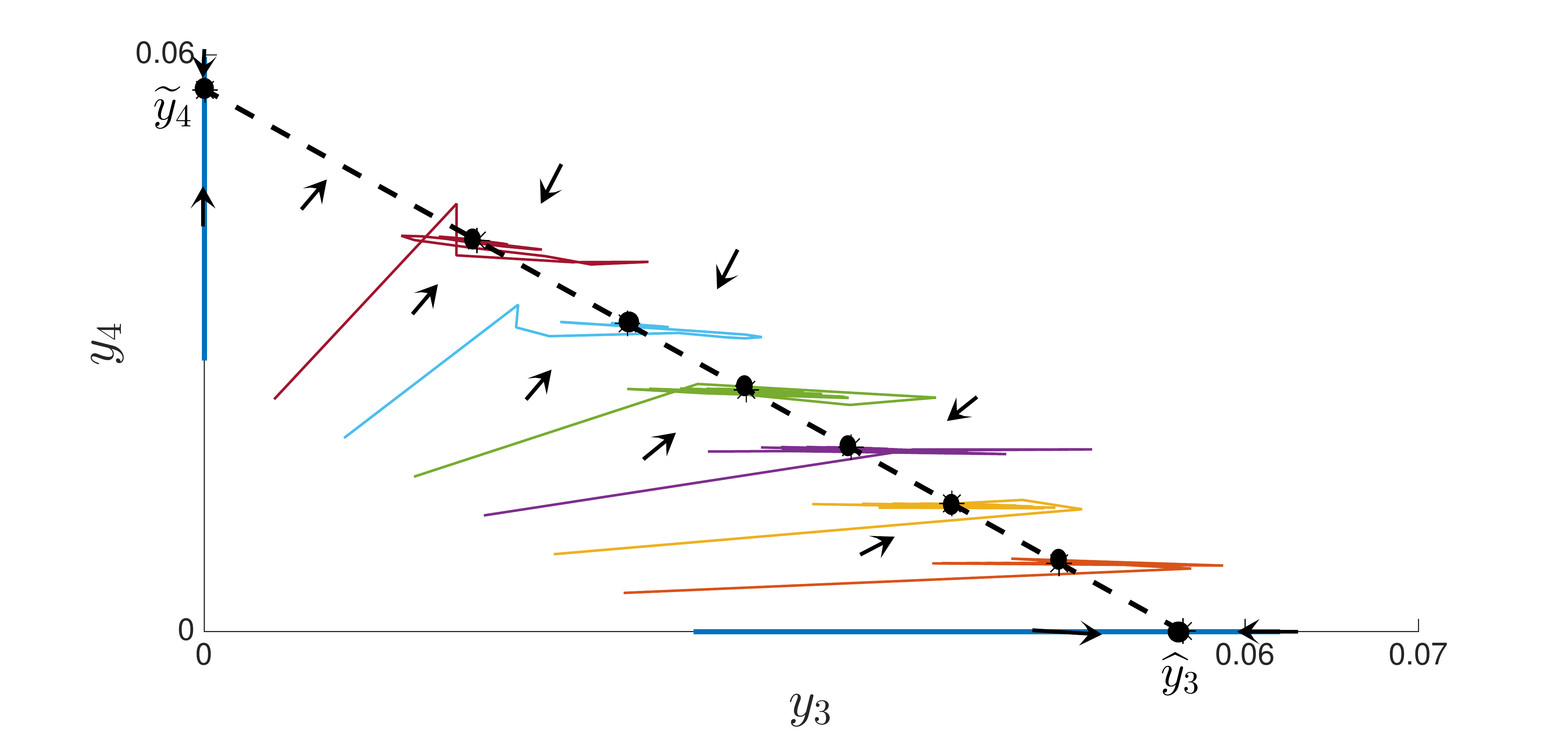}}
\subfigure[][]{\label{fig3c} \includegraphics[height=3cm,width=4.5cm]{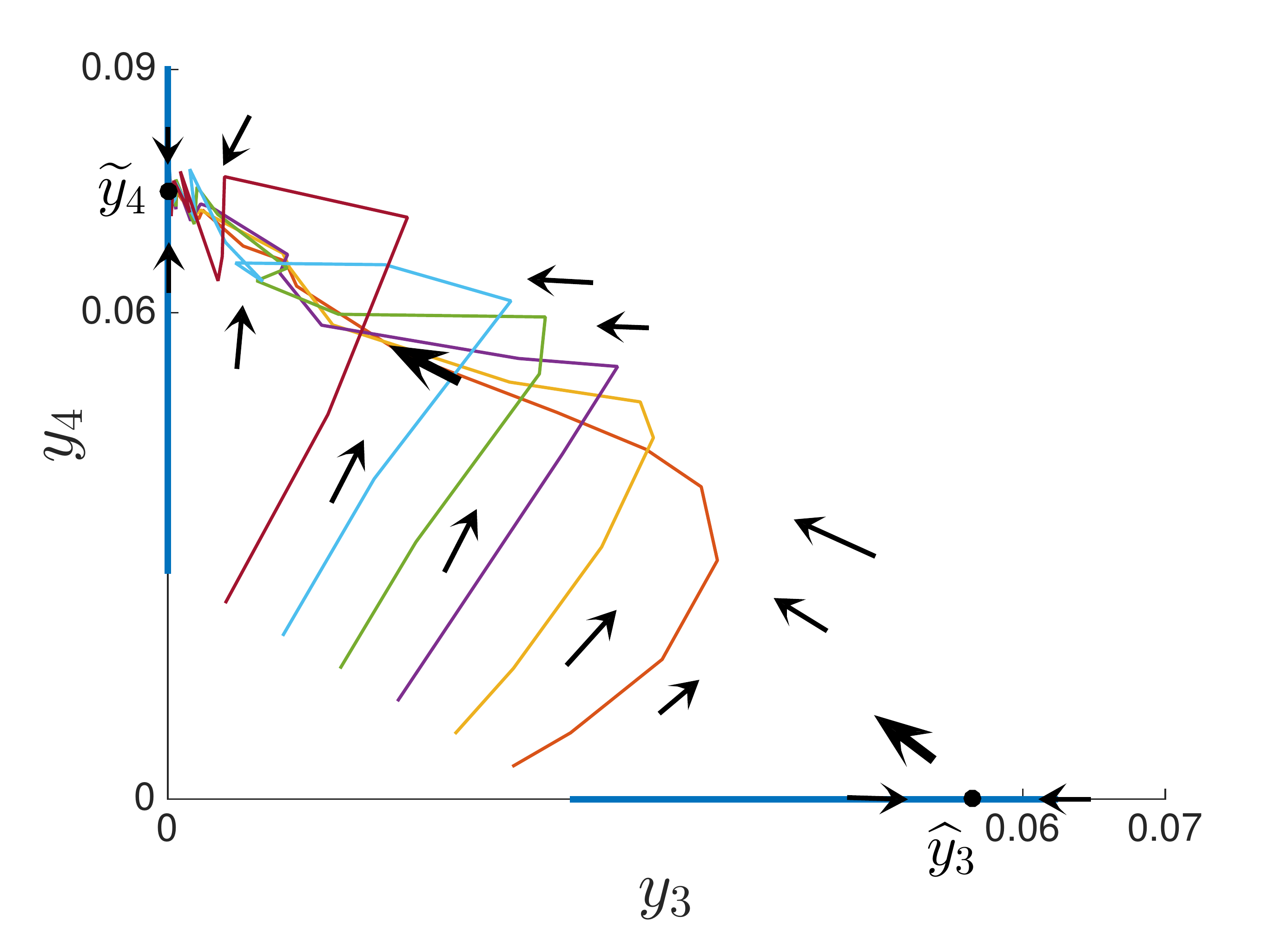}}
\caption{\footnotesize{Bifurcation at $\mathcal R_4=\mathcal R$: line of equilibria (a) $\mathcal R_4<\mathcal R \Rightarrow$ $y_3\rightarrow \widehat y_3, y_4\rightarrow 0$ (b) Line of equilibria at $\mathcal R_4=\mathcal R$ where all variants persist. (c) $\mathcal R_4>\mathcal R \Rightarrow$ $y_3\rightarrow 0, y_4\rightarrow \widetilde y_4$. }}
  \label{fig3}
  \end{figure}

\subsection{Two-epitope model: dynamics when $s_1=s_2$} \label{A3}
We analyze the feasible stable equilibria with immune response for the case of equal immunodominance of $z_1$ and $z_2$ ($s_1=s_2=s$) in model (\ref{ode4}).  First, consider equilibria with one immune response present.  Since $s_1=s_2$, without loss of generality, we can take $z_1^*>0$.  From the equilibrium equations, $y_1^*,y_3^*>0 \Leftrightarrow \mathcal R_1=\mathcal R_3$, which is not possible since $\mathcal R_1>\mathcal R_3$.  Also, an equilibrium with $y_3^*>0$ (and $z_2^*=0$) will not be saturated since $\mathcal R_1>\mathcal R_3$, therefore take $y_3^*=0$.  Similarly we can take $y_4^*=0$. Thus, consider equilibria $\widetilde{\mathcal E}_1$ and $\overline{\mathcal E}_1$.  The equilibrium $\widetilde{\mathcal E}_1$ can not be stable since $s_1=s_2\Rightarrow \mathcal Q_1=\mathcal Q_2$, which does not permit conditions for case iv of Theorem \ref{oneThm} to be satisfied.  The equilibrium $\overline{\mathcal E}_1$ will not be ``strictly saturated'' under conditions for case iii of Theorem \ref{oneThm} because there are a continuum of equilibria of the form $\overline{\mathcal E}_1(z_2^*)$ where $\bar z_1=\frac{\mathcal R_1}{\mathcal Q_1}-1-z_2^*, \bar z_2=z_2^*$.

Now consider the possibility of equilibria with $z_1^*,z_2^*>0$.  Observe that at least two of the viral strain components $y_1^*,y_2^*,y_3^*$ are positive and $y_2^*=y_3^*$ from the $\dot z_1, \dot z_2$ equations.  Therefore $y_2^*,y_3^*>0$ and for the case $s_1=s_2$, non-trivial strictly saturated equilibria are of the form $\widehat{\mathcal E}_2$,   $\mathcal E_2^{\dagger}$ or $\mathcal E_2^{\ddagger}$ with the corresponding parameter regimes as defined in Table \ref{table}.

\subsection{Instability of ``one-mutation'' and ``strain-specific'' equilibria for (\ref{odeU})} \label{A4}
We assume equal viral fitness costs for mutation from each of $n$ epitopes, yielding system (\ref{odeU}), as described in Section \ref{EqFitSec}.  Consider the set $\mathcal S_1$ of $n$ viral strains which have exactly one mutation, i.e. $y_i\in \mathcal S_1 \Rightarrow d(y_i,y_1)=1$.  For clarity here, we label these strains as $\mathcal S_1=\left\{y_i^1 \ | \  i=1,\dots,n\right\}$ where $y_i^1$ has escaped $z_i$ but is susceptible all other immune responses.  Note that $y_1^1$ is strain $y_2$ with our ``nested'' indexing introduced in Section \ref{NestedSec}.  The subsystem only containing these viral strains looks as follows:
\begin{align}
\dot x &= 1-x- x\sum_{i=1}^n f\mathcal R_1 y_{i_1}, \quad \dot y_i^1 = \gamma_i^1 y_i^1\left(f\mathcal R_1 x -1 -\sum_{j\neq i}  z_j \right), \quad  \dot z_i = \frac{\sigma_i}{s_i} z_i\left(\sum_{j\neq i} y_j^1 -s_i \right),  \quad i=1,\dots,n. \label{odeS1}
\end{align}
By Proposition \ref{prop33} and \eqref{genEqcon}, a positive equilibrium $\mathcal E^*_1=(x^*,y^*,z^*)$ of system (\ref{odeS1}) satisfies
\begin{align*}
 x^*=\frac{1}{1+\vec s^T A^{-1} f\mathcal R_1\vec 1}, \quad A y^* &= \vec s,  \quad Az^*=(f\mathcal R_1x^*-1)\vec 1,  \\
\text{where} \quad & A=\vec 1 \left(\vec 1\right)^T - I_n, \quad A^{-1}=\frac{1}{n-1}\vec 1 \left(\vec 1\right)^T - I_n, \quad y=\left(y_1^1,y_2^1,\dots,y_n^1\right)^T,
\end{align*}
where $I_n$ is the $n\times n$ identity matrix.  Here we find that:
\begin{align*}
y_i^{1*} &= \frac{1}{n-1}\left(-(n-2)s_i+\sum_{j\neq i}s_j\right), \quad x^*=\frac{n-1}{n-1+f\mathcal R_1 \sum_i s_i}, \quad z^*_i=\frac{1}{n-1}(f\mathcal R_1x^*-1)
\end{align*}
Assuming our hierarchy $s_i\leq s_{i+1}$, then $y_i^{1*}>0$ if $s_1>\sum_{i>1}(s_n-s_i)$ and $z_i^*>0$ if $f\mathcal R_1\left(n-1-\sum_i s_i\right)>n-1$.  If these conditions are satisfied, then the equilibrium $\mathcal E^*_1$ is saturated in the subsystem (\ref{odeS1}) where $y_i\in\mathcal S_1$.  However, if we consider a larger network of viral strains, then equilibrium $\mathcal E^*_1$ is always unstable in this case with equal fitness costs for mutation.  Indeed, consider the wild strain $y_1\sim 0\cdots 0$ and strain $y_{3}\sim 110\cdots 0$, which correspond to backward and forward mutations from the strain $y_1^1\in\mathcal S_1$.  We calculate their invasion rates at equilibrium $\mathcal E^*_1$.  Note that $y_1$ has reproduction number $\mathcal R_1$ and $y_{3}$ has reproduction number $\mathcal R_{3}=f^2\mathcal R_{1}$.  In order for $\mathcal E^*_1$ to be saturated (stable), we find that:
\begin{align*}
&\frac{\dot y_1}{\gamma_1 y_1}= \mathcal R_1x^* -1 - \sum_{i=1}^n z_i^* \leq 0, \qquad
\frac{\dot y_{3}}{\gamma_3 y_{3}}= f^2 \mathcal R_1x^* -1\sum_{i=3}^n z_i^* \leq 0 \\
 &\Leftrightarrow\mathcal R_1x^* -1 - \frac{n}{n-1}\left(f\mathcal R_1x^*-1\right) \leq 0, \qquad
f^2 \mathcal R_1x^* - 1 - \frac{n-2}{n-1}\left(f\mathcal R_1x^*-1\right) \leq 0 \\
&\Leftrightarrow \mathcal R_1x^*\left(fn-(n-1)\right) \geq 1, \qquad
f\mathcal R_1x^*\left(f(n-1)-(n-2)\right)  \leq 1 \\
&\Leftrightarrow f>\frac{n-1}{n}, \quad  f((n-1)f-(n-2))\leq nf-(n-1) \\
&\Leftrightarrow  0  \geq (n-1)(f-1)^2
\end{align*}
 However, $(n-1)(f-1)^2>0$  for all $f\neq1, n>1$, giving a contradiction. Thus $\mathcal E_1$ can only be stable when restricted to the subsystem (\ref{odeS1}) consisting of strains in $\mathcal S_1$, and becomes unstable in the larger network in this case.

Similarly, we can show that the ``strain-specific'' equilibria, $\mathcal E^{\dagger}_n$ and  $\mathcal E^{\ddagger}_{n+1}$ (introduced in Section (\ref{ssSec})), are always unstable in this case of uniform fitness costs.  Indeed, for clarity here, denote $y^o_i$ as the viral strain with epitope set $\Lambda^o_i=\left\{i\right\}$, and $y_{n+1}$ as the strain which has completely escaped all $z_i$, i.e. $\Lambda_{n+1}=\emptyset$.  Then $\mathcal E^{\dagger}_n$ (and  $\mathcal E^{\ddagger}_{n+1}$) consist of equilibria where $y^{o*}_i>0$ (and $y_{n+1}^*>0$).  First, consider the case that $\mathcal R_{n+1}=f^n\mathcal R_1\leq \mathcal P_n^o=1+\sum_i s_i\mathcal R^o_i=1+f^{n-1}\mathcal R_1\sum_i s_i$ and $f^{n-1}\mathcal R_1>\mathcal P_n^o$, so that $\mathcal E^{\dagger}_n$ is positive, but  $\mathcal E^{\ddagger}_n$ is not positive.  Then consider the invasion rate of viral strain $y_{n-1}\sim 1\cdots 100$ (with $\Lambda_{n-1}= \left\{n-1,n\right\}$ and reproduction number $\mathcal R_{n-1}=f^{n-2}\mathcal R_1$).  Utilizing Proposition \ref{ssF}, if $\mathcal E^{\dagger}_n$ is stable, then
\begin{align*}
\mathcal P_n^o + f^{n-2}\mathcal R_1&\leq 2f^{n-1}\mathcal R_1 \\
\Leftrightarrow  \mathcal R_1 f^{n-2}\left( f^2 -2f +1\right) &\leq 0, \quad \text{since} \ \ f^n\mathcal R_1\leq \mathcal P_n^o,
\end{align*}
which is clearly a contradiction since $\left( f^2 -2f +1\right)=(f-1)^2>0$ for $f<1$.  A similar argument applies to $\mathcal E^{\ddagger}_{n+1}$ in the case $f^n\mathcal R_1>\mathcal P_n^o$.  Thus the ``strain-specific'' equilibria are always unstable for system (\ref{odeU}).

\bibliography{hivctlNetwork_References.bib}
\bibliographystyle{spmpsci}
\end{document}